\newtheorem{theorem}{Theorem}[section]
\newtheorem{proposition}[theorem]{Proposition}
\newtheorem{lemma}[theorem]{Lemma}
\newtheorem{corollary}[theorem]{Corollary}
\newtheorem{example}[theorem]{Example}
\newtheorem{remark}[theorem]{Remark}
\newtheorem{assumption}[theorem]{Assumption}
\newtheorem{definition}[theorem]{Definition}
\newtheorem{observation}[theorem]{Observation}
\journal{the Journal of Algebra}
\begin{document}

\begin{frontmatter}


\title{A Vergleichsstellensatz of Strassen's Type for a Noncommutative Preordered Semialgebra through the Semialgebra of its Fractions}
\author[a1]{Tao Zheng}
\author[a1,a2]{Lihong Zhi}
\address[a1]{{Key Laboratory of Mathematics Mechanization, Academy of Mathematics and Systems Science, Chinese Academy of Sciences},
            {Beijing},
            {100190}, 
            {China}}
\address[a2]{{University of Chinese Academy of Sciences},
            {Beijing},
            {100049}, 
            {China}}
\begin{abstract}
Preordered semialgebras and semirings are two algebraic structures frequently occurring in real algebraic geometry. They have many interesting and promising applications in the fields of probability theory, theoretical computer science, quantum information theory, \emph{etc.}. Strassen's Vergleichsstellensatz and its generalized versions, analogs of those well-known Positivstellensätze, play important roles in these applications. While these Vergleichsstellensätze accept only a commutative setting (for the semirings in question), we prove in this paper a noncommutative version of one of the generalized Vergleichsstellensätze proposed by Fritz [\emph{Comm. Algebra}, 49 (2) (2021), 482-499]. The most crucial step in our proof is to define the semialgebra of the fractions of a noncommutative semialgebra, which generalizes (at least some of) the definitions in the literature. Our new theorem characterizes the relaxed preorder on a noncommutative semialgebra induced by all monotone homomorphisms to $\mathbb{R}_+$ via three other equivalent conditions on the semialgebra of its fractions.
\end{abstract}

\begin{keyword}
semiring \sep Strassen's theorem \sep Positivstellensatz
\end{keyword}
\end{frontmatter}

\section{Introduction}

A (noncommutative) \emph{semiring} $S$ is a set together with binary operations 
\[+,*:S\times S\rightarrow S,\]respectively called \emph{addition} and \emph{multiplication}. Elements $0_S,1_S\in S$ such that $(S,+,0_S)$ is a commutative monoid, $(S,*,1_S)$ is a noncommutative monoid and such that the multiplication distributes over the addition \cite{golan2013semirings}. A (noncommutative) \emph{semialgebra} $S$ is a (noncommutative) semiring together with a (nonnegative) scalar multiplication 
\[\mathbb{R}_+\times S\rightarrow S,(r,x)\mapsto r\cdot x\] that is a commutative monoid homomorphism from $(\mathbb{R}_+,+)$ or $(S,+)$ to $(S,+)$ in the first or the second argument respectively, and satisfies the following addition laws: \emph{i})\; $1\cdot x=x$,\, \emph{ii})\; $(r\cdot x)*(s\cdot y)=(rs)\cdot(x*y)$ \cite[Defintion 3.5]{algeP2}.

A semialgebra $S$ is \emph{zero-sum-free} if for any $a,b\in S$, $a+b=0_S$ implies $a=b=0_S$. It is \emph{zero-divisor-free} if for any $a,b\in S$, $a*b=0_S$ implies $a=0_S$ or $b=0_S$. In a zero-divisor-free semialgebra $S$, $r\cdot a=0_S$ implies $r=0$ or $a=0_S$ for any $r\in\mathbb{R}_+$ and $a\in S$: $r\cdot a=0_S$ $\Rightarrow$ $(r\cdot1_S)* a=0_S$ $\Rightarrow$ $r\cdot1_S=0_S$ or $a=0_S$. If $a\neq0_S$ and $r\neq0$, then $1_S=r^{-1}\cdot(r\cdot1_S)=r^{-1}\cdot0_S=0_S$. Hence $a=a*1_S=a*0_S=0_S$, which is a contradiction.

A \emph{preorder relation} (or, \emph{preorder}) $\leq$ on a set $X$ is a binary relation that is both reflexive and transitive (\cite{golan2013semirings}, page 119). Set $x,y\in X$, we sometimes write ``$x\geq y$" instead of ``$y\leq x$", and write ``$x<y$" or ``$y>x$" instead of the condition ``$x\leq y\text{ and }x\neq y$" throughout the paper. A preorder relation $\leq$ on $X$ is considered \emph{trivial} if $x\leq y$ holds for any $x,y\in X$.

A \emph{preordered semiring} \cite[Definition 2.2]{fritz2021generalization} is a semiring with a preorder relation $\leq$ such that for all $a,x,y$ in the semiring,
\begin{equation}\label{preoc}
x\leq y\text{ implies }\left\{
\begin{array}{rcl}
a+x&\leq&a+y,\\
a*x&\leq&a*y,\\
x*a&\leq&y*a.
\end{array}
\right.
\end{equation}
Similarly, a \emph{preordered semialgebra} is a semialgebra with a preorder relation $\leq$ such that $x\leq y$ implies the inequalities in (\ref{preoc}). Note that it also implies the inequality 
\begin{equation}\label{scalar}
r\cdot x\leq r\cdot y
\end{equation}
for any $r\in\mathbb{R}_+$, since $r\cdot x=(r\cdot1_S)*x\leq(r\cdot1_S)*y=r\cdot y$. 

Let $E$ and $K$ be two semialgebras. A map $f$ from $E$ to $K$ is a \emph{semialgebra homomorphism} (or, simply \emph{homomorphism}) if $f(0_E)=0_K,f(1_E)=1_K$ and for any $x,y\in E$ and any $r\in\mathbb{R}_+$, the following equations hold
\[f(x+y)=f(x)+f(y),\;f(x*y)=f(x)*f(y)\text{ and }f(r\cdot x)=r\cdot f(x).\]If $\leq_{_E}$ and $\leq_{_K}$ are two preorder relations on  $E$ and $K$ respectively, then a map $f$ from $E$ to $K$ is \emph{monotone} (\emph{w.r.t.} $\leq_{_E}$ and $\leq_{_K}$) if for any $x,y\in E$, $x\leq_{_E} y$ implies $f(x)\leq_{_K} f(y)$.

The following definition of a \emph{power universal} element in a noncommutative preordered semialgebra is a natural generalization of   \cite[Definition 2.8]{fritz2021generalization}.
\begin{definition}\label{poweruni}
Let $S$ be a noncommutative preordered semialgebra with $1_S\geq0_S$. An element $u\in S$ with $u\geq1_S$ is \emph{power universal} $($\emph{w.r.t.} the preorder $\leq)$ if for every nonzero $x\in S$ there is a number  $k\in\mathbb{N}$ such that
\begin{equation}\label{PU}
u^k*x\geq1_S,\;\;x*u^k\geq1_S\;\text{ and }\;u^k\geq x.
\end{equation}
\end{definition}

 Recently, Strassen's  separation theorem \cite{strassen1988asymptotic,zuiddam2019asymptotic} for preordered semirings (called ``Strassen's Vergleichsstellensatz" \cite{algeP1}) has been generalized tremendously by Fritz and Vrana \cite{algeP2,fritz2021generalization,algeP1,vrana2021generalization} due to its varies applications to real algebraic geometry, probability theory, theoretical computer science, and quantum information theory \cite{fritz2021generalization,zuiddam2019asymptotic,li2020quantum,perry2021semiring,bunth2021asymptotic,robere2022amortized,1stapp}. Strassen's Vergleichsstellensatz and its generalizations are analogs of those Positivstellensätze in real algebraic geometry focusing mainly on ordered rings and fields. Among them, Fritz's results \cite{fritz2021generalization,algeP1} recover the classical Positivstellensatz of Krivine-Kadison-Dubois, which suggests that the theory of semirings and semialgebras can provide us with new insights into real algebraic geometry.

While Strassen's Vergleichsstellensatz and its generalizations accept only the commutative setting for the semirings under consideration, we prove in this paper a noncommutative version of one of Fritz's generalized Vergleichsstellensätze in \cite{fritz2021generalization}. As in the commutative case, the main assumption on the preordered semialgebra $S$ considered in our theorem is:
\begin{assumption}\label{power}
The inequality $``\,1_S\geq0_S"$ holds in the preordered semialgebra $S$, and there is a {power universal} element $u\in S$.
\end{assumption}
\noindent{}The assumption of the existence of a power universal element is similar to the Archimedean condition in the traditional real algebraic geometry.  
Moreover,  the semialgebra $S$ considered in this paper also   satisfies the following assumption: 
\begin{assumption}\label{assfree}
The semialgebra $S$ is both {zero-sum-free} and {zero-divisor-free},  and the inequality $``\,1_S\neq0_S"$ holds. 
\end{assumption}
Then our main result  can be stated as follows:
\begin{theorem}\label{main}
Let $(S,\leq)$ be a preordered semialgebra satisfying Assumptions \ref{power} and \ref{assfree}, with a power universal element $u\in S$. Then, for every nonzero $x,y\in S$, the following are equivalent:

\noindent$($a$)$ $f(x)\geq f(y)$ for every monotone semialgebra homomorphism $f:S\rightarrow\mathbb{R}_+$.

\noindent$($b$)$ For every real number $\epsilon>0$, there is a finite number $m\in\mathbb{N}$ such that 
\begin{equation}
\overline{x}+\sum_{j=0}^m\epsilon^{j+1}\cdot\overline{u}^j\succcurlyeq\overline{y}.
\end{equation}

\noindent$($c$)$ For every  real number $r\in\mathbb{R}_+$ and every real number $\epsilon>0$, there is a polynomial $p\in\mathbb{Q}_+[X]$ such that $p(r)\leq\epsilon$ and 
\[
\overline{x}+p(\overline{u})\succcurlyeq\overline{y}.
\]

\noindent$($d$)$ For every real number  $r\in\mathbb{R}_+$ and every real number $\epsilon>0$, there is a polynomial $p\in\mathbb{Q}_+[X]$ such that $p(r)\leq1+\epsilon$ and 
\begin{equation}
p(\overline{u})*\overline{x}\succcurlyeq\overline{y}.
\end{equation}
\end{theorem}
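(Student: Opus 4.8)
The plan is to route the entire argument through the \emph{semialgebra of fractions} $\overline{S}$ of $S$, whose construction is the technical novelty here. The preliminary step is to set up $\overline{S}$ together with its canonical homomorphism $S\to\overline{S}$, $a\mapsto\overline{a}$, and the induced preorder $\succcurlyeq$, and to record three facts. First, the canonical map is a monotone semialgebra homomorphism, so in particular $\overline{u}\geq1_{\overline S}\geq0_{\overline S}$ and each $\overline{u}^{j}\geq0_{\overline S}$. Second, the images of the nonzero elements of $S$ --- in particular $\overline{u}$, $\overline{x}$ and $\overline{y}$ --- become invertible in $\overline{S}$, with $\overline{u}$ still power universal for $(\overline{S},\succcurlyeq)$; this is precisely where Assumption~\ref{assfree} (zero-sum-freeness and zero-divisor-freeness) is used, namely to keep the construction nondegenerate and to retain $1_{\overline S}\neq0_{\overline S}$. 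Third, precomposition with $S\to\overline{S}$ induces a bijection between the monotone semialgebra homomorphisms $\overline{S}\to\mathbb{R}_+$ and the monotone semialgebra homomorphisms $S\to\mathbb{R}_+$, because the nonzero elements of $\mathbb{R}_+$ form a group and hence every such homomorphism extends uniquely to $\overline{S}$.

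Granting this, the equivalences (b)~$\Leftrightarrow$~(c)~$\Leftrightarrow$~(d) are elementary and take place entirely inside $\overline{S}$. For (b)~$\Rightarrow$~(c), given $r$ and $\epsilon$ one applies (b) with a \emph{rational} parameter $\delta>0$ so small that $\delta r<\tfrac{1}{2}$ and $2\delta\leq\epsilon$, and takes $p(X)=\sum_{j=0}^{m}\delta^{j+1}X^{j}\in\mathbb{Q}_+[X]$, for which $p(r)\leq\delta/(1-\delta r)<2\delta\leq\epsilon$. For (c)~$\Rightarrow$~(b), one applies (c) with $r\geq1/\epsilon$ and parameter $\epsilon$ itself; as the coefficients $c_{j}$ of the resulting $p$ are nonnegative, $p(r)\leq\epsilon$ forces $c_{j}\leq\epsilon/r^{j}\leq\epsilon^{j+1}$, and since $\overline{u}^{j}\geq0_{\overline S}$ this gives $p(\overline{u})\preccurlyeq\sum_{j}\epsilon^{j+1}\cdot\overline{u}^{j}$, whence (b). Finally, (c)~$\Leftrightarrow$~(d) is obtained by passing between $\overline{x}+p(\overline{u})\succcurlyeq\overline{y}$ and $q(\overline{u})*\overline{x}\succcurlyeq\overline{y}$: one multiplies by the invertible element $\overline{x}^{-1}$, uses that $\overline{x}^{-1}$ is dominated by a fixed power of $\overline{u}$ (by power universality) to absorb the extra factor into a polynomial in $\overline{u}$, and tracks the resulting change in the value at $r$, which is what produces the $1+\epsilon$ in (d); noncommutativity causes no trouble here because polynomials in $\overline{u}$ commute with one another.

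The substance of the theorem is the equivalence (a)~$\Leftrightarrow$~(b). The direction (b)~$\Rightarrow$~(a) is straightforward: given a monotone homomorphism $f:S\to\mathbb{R}_+$, extend it to $\hat{f}:\overline{S}\to\mathbb{R}_+$, apply $\hat{f}$ to the inequality of (b) to get $f(x)+\sum_{j=0}^{m}\epsilon^{j+1}f(u)^{j}\geq f(y)$, note that $f(u)\geq f(1_S)=1$, and for $0<\epsilon<1/f(u)$ bound the finite sum by the convergent geometric series $\epsilon/(1-\epsilon f(u))$; letting $\epsilon\to0^{+}$ forces $f(x)\geq f(y)$. The reverse implication (a)~$\Rightarrow$~(b) is the crux, and I would argue it contrapositively: assuming (b) fails for some $\epsilon>0$, one must produce a monotone homomorphism $S\to\mathbb{R}_+$ witnessing the failure of (a), and the natural place to build it is inside $\overline{S}$, where $\overline{u}$ is invertible. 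Since the target $\mathbb{R}_+$ is commutative, every such homomorphism factors through a commutative quotient; the cleanest route is therefore to reduce to the commutative Vergleichsstellensatz of Fritz \cite{fritz2021generalization} by applying it to the abelianization $\overline{S}^{\mathrm{ab}}$ (equipped with the pushed-forward preorder), for which one must check that the abelianized $\overline{u}$ is still power universal, that $1\neq0$ survives by Assumption~\ref{assfree}, and that the failure of the approximate inequality is preserved; that theorem then yields the $\mathbb{R}_+$-valued point of $\overline{S}^{\mathrm{ab}}$, which pulls back through the bijection above to the desired homomorphism on $S$.

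I expect the real work to lie in three places. The first, and in my view the principal obstacle, is the construction of the noncommutative semialgebra of fractions itself: Ore-type localizations are unavailable in general, so $\overline{S}$ must be built from formal fractions modulo an explicit congruence, and one must verify that it is well defined, nondegenerate, and carries the universal property behind the homomorphism bijection --- and this is exactly where Assumptions~\ref{power} and~\ref{assfree} should turn out to be what the construction requires. The second is showing that the failure of the approximate inequality in (b) is not destroyed on passing to $\overline{S}^{\mathrm{ab}}$, i.e.\ that $\succcurlyeq$ and its pushforward agree closely enough on the elements involved, so that collapsing commutators does not spuriously force the inequality; presumably the careful design of $\overline{S}$ is what makes this go through. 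The third is confirming that power universality and the condition $1\neq0$ are genuinely inherited by $\overline{S}^{\mathrm{ab}}$, since both are fragile under quotients in general.
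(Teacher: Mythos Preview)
Your treatment of the ``easy'' implications (b)$\Leftrightarrow$(c)$\Leftrightarrow$(d) and (b)$\Rightarrow$(a) is essentially correct and close to the paper's; in particular the extension of monotone homomorphisms $S\to\mathbb{R}_+$ to $F\to\mathbb{R}_+$ is exactly what the paper establishes in Propositions~\ref{f^F} and~\ref{f^F2} and Remark~\ref{realvaluedextension}, and the paper's (c)$\Rightarrow$(d) uses the same power-universality trick you sketch.

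The genuine gap is in your proposed route for (a)$\Rightarrow$(b). You want to argue contrapositively by passing to the abelianization $\overline{S}^{\mathrm{ab}}$ and invoking Fritz's commutative theorem there. The obstruction, which you flag as your ``second'' difficulty but underestimate, is fatal: the pushed-forward preorder on $\overline{S}^{\mathrm{ab}}$ is by construction \emph{coarser} than the one on $\overline{S}$ (the quotient map is monotone), so if the approximate inequality of~(b) fails in $\overline{S}$ it may perfectly well \emph{hold} in $\overline{S}^{\mathrm{ab}}$. There is no mechanism by which ``the careful design of $\overline{S}$'' prevents this; to know that the inequality still fails after abelianizing you would essentially need to already possess the separating homomorphism, which is circular. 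Your third concern is also real: abelianizing a zero-divisor-free semialgebra can create zero divisors and can collapse $1$ to $0$, so the hypotheses of Fritz's theorem need not survive either.

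The paper avoids abelianization altogether and instead reruns Fritz's functional-analytic argument directly in the noncommutative $F$. It forms the real vector space $V=F-F$ (the Grothendieck group of the \emph{additive} monoid, which is commutative regardless), equips $V$ with a locally convex topology whose basic neighborhoods of $0$ are the sets $\mathcal{N}_\epsilon-\mathcal{N}_\epsilon$, and shows that $\overline{x}-\overline{y}$ lies in the closure of the cone $C=\{v\unrhd 0\}$ exactly when~(b) holds (Observation~\ref{closure}). If~(b) fails, Hahn--Banach produces $\ell\in C^*$ with $\ell(\overline{x}-\overline{y})<0$; since $C^*$ is well-capped one may replace $\ell$ by an extreme ray $f$ (Observations~\ref{KMobs} and~\ref{extray}). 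The step that replaces your abelianization is Observation~\ref{prooffhat}: an extreme ray $f$ of $C^*$ is automatically multiplicative up to normalization, because for each nonzero $a\in F$ the identity
\[
f \;=\; f\circ\big((\overline{1}_S+a)^{-1}\big)_* \;+\; f\circ\big(a*(\overline{1}_S+a)^{-1}\big)_*
\]
writes $f$ as a sum of two elements of $C^*$, forcing both summands proportional to $f$; unwinding this gives $f_F(a*w)=\dfrac{f_F(a)}{f_F(\overline{1}_S)}f_F(w)$. This argument uses only left-multiplication operators on $V$ (whose continuity is Lemma~\ref{t*cont}) and nowhere requires $F$ to be commutative. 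The normalized $\hat f:S\to\mathbb{R}_+$ then witnesses $\hat f(x)<\hat f(y)$, contradicting~(a).
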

\noindent{}The elements $\overline{x},\overline{y}$ and $\overline{u}$ above are the images of $x,y$ and $u$ for the canonical map (Definition \ref{canonical}) from $S$ to the semialgebra of its fractions, and the preorder ``$\preccurlyeq$'' is a preorder on the semialgebra of the fractions which is directly derived from the original preorder ``$\leq$" on $S$ (Definition  \ref{def:derivedPreorder}).

\begin{example}\label{exp:classic}
Let $\mathbb{R}_+\hspace{-1mm}\left\langle x_1,\ldots,x_n\right\rangle'$ be the set of all noncommutative polynomials in $n$ variables with coefficients in $\mathbb{R}_+$ whose constant terms are nonzero. Then,  the semialgebra $S=\{0\}\cup\mathbb{R}_+\hspace{-1mm}\left\langle x_1,\ldots,x_n\right\rangle'$ equipped with the coefficient-wise preorder is one of the simplest examples of  preordered semialgebras. 
It satisfies Assumptions \ref{power}--\ref{assfree} with the element $u=2+\sum_{i=1}^n2\cdot x_i$ being power universal. Theorem \ref{main} can therefore be applied.
\end{example}

The main technical difficulty in proving Theorem \ref{main} is to define properly the semialgebra of the fractions of a noncommutative semialgebra: In the commutative case, every fraction of  elements in a commutative semialgebra/semiring can be written in the simple form ``$\frac{a}{b}$" for two elements $a,b$ in the semialgebra/semiring. However, ``fractions" in the noncommutative case can be of more complicated forms, \emph{e.g.}, ``$d^{-1}*g*h^{-1}$" and ``$(d+g^{-1})^{-1}$", which can no longer be written in the form ``$\frac{a}{b}$." Our definition for the semialgebra of the fractions of a noncommutative semialgebra in Subsection \ref{subsec:frac} generalizes the definition in the literature, and we also show  that it coincides with the usual definition in the commutative case (Proposition \ref{isom}). Now that theories on commutative semirings and semialgebras help us better understand commutative real algebraic geometry, it is reasonable to expect the same thing in the noncommutative case. Moreover, we hope the theory of noncommutative semirings and semialgebras (including our new theorem) could be applied to  quantum information theory and other related  areas.  

The main difference between Theorem \ref{main} and  Fritz's Vergleichsstellensatz (\cite[Theorem 2.12]{fritz2021generalization}) is that, in the commutative case, the corresponding inequalities in conditions (b) - (d) can be rewritten as inequalities in the original semialgebra $S$ (Proposition \ref{preorderisom}), while in Theorem \ref{main}, one may find it non-trivial to do the same thing. 

In the next section, we define the semialgebra of the fractions of a noncommutative semialgebra that satisfies Assumption \ref{assfree}. This definition is the cornerstone of the main theorem of our paper. Section \ref{sec:derivedPreorder} indicates how a preorder relation on a semialgebra derives another preorder relation on the semialgebra of its fractions.  Section \ref{sec:extend} is devoted to interpreting how an $\mathbb{R}_+$-valued semialgebra homomorphism can be extended from a semialgebra to the semialgebra of its fractions. In Section \ref{sec:V}, we explain in detail how an $\mathbb{R}$-linear space can be constructed from a (noncommutative) semialgebra. This is mentioned in \cite{fritz2021generalization}, but we include it for rigorousness and completeness. Finally, in  Section \ref{sec:mainTHM}, we present  the proof of Theorem \ref{main}. 

\section{The Semialgebra of the Fractions of a Noncommutative Semialgebra}\label{sec:frac}
Throughout the paper, the letter ``$S$" stands for a noncommutative semialgebra that satisfies Assumption \ref{assfree}.

The following example introduces a simple method to  define the semialgebra of nonnegative rational numbers $\mathbb{Q}_+$ from the semialgebra of nature numbers $\mathbb{N}$. Then we define the semialgebra of the fractions of $S$ similarly.
\begin{example}\label{exp:rational}
To define $\mathbb{Q}_+$ from $\mathbb{N}$, there are 3 steps:

i) Define the set of formal expressions $U=\{n\oslash m\,|\,n,m\in\mathbb{N}\}$ with $\oslash$ the formal division.

ii) There are obviously some ``illegal" expressions like $3\oslash0$ and $0\oslash0$, but we only care about the set of ``legal" expressions $W=\{n\oslash m\,|\,n,m\in\mathbb{N}, m>0\}$.

iii) Since some expressions, \emph{e.g.}, $2\oslash6$ and $3\oslash9$, stand for the same rational number, we need an equivalence relation $R$ on $W$ telling whether two expressions ``equal". Here is a simple way to define $R$: $n\oslash m\stackrel{\,\hspace{-0.3mm}_R}{\sim}i\oslash j$ if and only if  $nj=mi$. We then define $\mathbb{Q}_+=W/R$, which can be regarded as the semialgebra of the fractions of $\mathbb{N}$.
\end{example}

We  can define the semialgebra of the fractions of $S$ similarly.
 

\subsection{Defining the set of formal rational expressions of the elements in $S$}\label{subsec:formalexpression}

The first step is to construct the set of all formal expressions of fractions (as the set $U$ in Example \ref{exp:rational}). Since $S$ is noncommutative, the ``fractions" of $S$ may have more complicated forms than just $a/b$. For instance, we should allow expressions like $(2\cdot a*b+c^{-1})^{-1}$, which contains multiplication, scale multiplication, addition, and inversion. Thus, we define $U$ to be the set of all finite formal expressions in the elements in $S$ with the formal addition ``$\oplus$", the formal multiplication ``$\circledast$," the formal scalar multiplication ``$\odot$" and the formal inversion ``$\;^{-1}$" ($\oplus$, $\circledast$, and $\odot$ are binary operations while $\;^{-1}$ is unary). To be precise, we have 
\begin{definition}\label{def:U}
The set $U(S)$ $($or simply $U$, if it won't cause any ambiguity$)$ of \emph{formal rational expressions} consisting of finitely many formal operations and elements in a semialgebra $S$, refers to the set determined exactly by the following rules:

i) $S\subset U$,

ii) $a\circledast b\in U$ if and only if both $a,b\in U$,

iii) $a\oplus b\in U$  if and only if both $a,b\in U$,

iv) $r\odot a\in U$  if and only if $r\in\mathbb{R}_+$ and $a\in U$,

v) $a^{-1}\in U$  if and only if $a\in U$.
\end{definition}

There is no such element as $a\oplus b\oplus c$ in $U$ since $\oplus$ is a binary operation. But there are elements of $U$ of the form $(a\oplus b)\oplus c$ or $a\oplus (b\oplus c)$. By Definition \ref{def:U}, every element of $U$ either contains no formal operations (\emph{i.e.}, it is in $S$) or is of exactly one of the following forms: $r\odot a, a\circledast b, a\oplus b, a^{-1}$. To illustrate, set $s_i\in S$, then these are respectively some elements of $U$ of those forms:

\[2\odot(s_1\oplus s_2)^{-1},\;\;0_S\circledast s_1,\;\;(0_S\circledast s_1)\oplus (0_{\mathbb{R}}\odot s_2^{-1}),\;\;(0_S\circledast s_1)^{-1}.\]
It is clear that the last expression is ``illegal" since it is the inverse of  an expression that ``equals" to zero, which makes no sense. Therefore, the set of ``legal" expressions in $U$ are those ones such that whenever they contain a sub-expression of the form $a^{-1}$, $a$ is not an expression ``equaling" zero. On the other hand, there are many expressions in $U$ ``equaling" zero, as the second and the third ones shown above. The set of all ``legal" expressions (denoted by $W$) is not as clear as in the commutative case in Example \ref{exp:rational}, since the set (denoted by $\mathcal{O}$) of the expressions ``equaling" zero is more complicated. These two sets have to be defined together because they ``tangle" with each other: for any $a\in W$, $a^{-1}$ should be in $W$  if and only if  $a\notin\mathcal{O}$, and for any $a\in W$, $0_S\circledast a$ should be in $\mathcal{O}$. The precise definition of them is as follows:
\begin{definition}\label{def:WO}
The sets of \emph{legal}  and \emph{null} formal rational expressions of the elements in a semialgebra $S$, denoted by $W(S)$ $($or $W)$ and $\mathcal{O}(S)$ $($or $\mathcal{O})$ respectively, refer to the subsets of $U$ which are determined exactly by the following rules:

i) $S\subset W$ and $\mathcal{O}\cap S=\{0_S\}$;\vspace{1mm}

\noindent{}for all $a, b\in U$,

ii) $a\oplus b\in W$  if and only if $a,b\in W$, and 

$\quad\; a\oplus b\in \mathcal{O}$  if and only if $a,b\in \mathcal{O}$;

iii) $a\circledast b\in W$  if and only if $a, b\in W$, and 

$\quad\;\; a\circledast b\in\mathcal{O}$  if and only if one of $a, b$ is in $\mathcal{O}$ but the other is in $W$;\vspace{1mm}

\noindent{}for all $a\in U$ and $r\in{\mathbb{R}_+}$,

iv) $r\odot a\in W$  if and only if $a\in W$, and 

$\quad\;\, r\odot a\in \mathcal{O}$  if and only if $a\in\mathcal{O}$ or $(r=0_\mathbb{R}\wedge a\in W)$;\vspace{1mm}

\noindent{}for all $a\in U$,

v) $a^{-1}\notin\mathcal{O}$, and

$\quad\; a^{-1}\in W$ if and only if $a\in W\backslash\mathcal{O}$.
\end{definition}
All these rules are natural and easy to understand: Rule i) means elements of $S$ are ``legal" expressions, and $0_S$ is the only element in $S$ that ``equals" zero. The first conditions of Rules ii) -- iv) mean that the ``legal" expressions are closed concerning  the formal operations $\oplus, \circledast$ and $\odot$ and that only ``legal" expressions can generate other ``legal" expressions. The second condition in Rule v) means that only the inverses of ``legal" expressions not ``equaling" zero are ``legal" inverses. The second condition in Rule ii) says the set of ``zeros" is closed for the formal addition, and the zero-sum-free property holds. The second condition in Rules iii)  -- iv) means zero times anything ``legal" is again zero, and that the zero-divisor-free property holds. The condition $a^{-1}\notin\mathcal{O}$ in Rule v) is quite natural since anything of the form $a^{-1}$ cannot be zero.

Definition \ref{def:WO} does uniquely characterize two subsets of $U$ since there is an algorithm deciding whether a given expression $a\in U$ is in $W$ (or $\mathcal{O}$) or not. Some examples are sufficient for the readers to understand that:
\begin{example}\label{exp:algoWO}
Let $s_i$ be some nonzero elements of $S$. Then we have 
\[
\begin{array}{cl}
     &(s_1\oplus s_2)^{-1}\in W  \\
     \Leftrightarrow&s_1\oplus s_2\in W\backslash\mathcal{O} \\
     \Leftrightarrow&s_1\oplus s_2\in W \wedge s_1\oplus s_2\notin\mathcal{O}\\
     \Leftrightarrow& s_1, s_2\in W \wedge (s_1\notin\mathcal{O}\vee s_2\notin\mathcal{O})\\
     \Leftrightarrow& \emph{\textbf{True}}.
\end{array}
\]
The reasoning above uses only rules in Definition \ref{def:WO} time after time. Whenever a rule is used, an operation in the expression $(s_1\oplus s_2)^{-1}$ is reduced. When there are no operations, Rule i) in Definition \ref{def:WO} helps decide true or false. Here is another example:
\[
\begin{array}{cl}
     & (0_S\circledast s_1)^{-1}\in W \\
    \Leftrightarrow& 0_S\circledast s_1\in W\backslash\mathcal{O}\\
    \Leftrightarrow& 0_S\circledast s_1\in W \wedge 0_S\circledast s_1\notin \mathcal{O}\\
    \Leftrightarrow&(0_S\circledast s_1\in W)\wedge 0_S\notin\mathcal{O} \wedge s_1\notin\mathcal{O}\\
    \Leftrightarrow&\emph{\textbf{False}},
\end{array}
\]
which means $(0_S\circledast s_1)^{-1}$ is ``illegal". We can also decide whether an element of $U$ is in $\mathcal{O}$. For instance, we have $(s_1\oplus s_2)^{-1}\notin\mathcal{O}$ according to Rule v) in Definition \ref{def:WO}, and
\[
\begin{array}{cl}
     &s_1\oplus s_2\in\mathcal{O}\\
    \Leftrightarrow& s_1, s_2\in\mathcal{O}\\
    \Leftrightarrow&\emph{\textbf{False}}
\end{array},
\text{\;\;\;\;\;\;and\;\;\;\;\;\;}
\begin{array}{cl}
     &0_S\circledast s_1\in\mathcal{O}\\
    \Leftrightarrow& 0_S\text{ or }s_2\in\mathcal{O}\\
    \Leftrightarrow&\emph{\textbf{True}}
\end{array}.
\]
\end{example}
The following proposition shows that all expressions ``equaling" zero are ``legal", which is important.
\begin{proposition}
 The set $\mathcal{O}$ is a subset of $ W$.
\end{proposition}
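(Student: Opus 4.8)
The plan is to argue by structural induction on the formal rational expressions in $U$, using the number of formal operations $\oplus,\circledast,\odot,{}^{-1}$ occurring in an expression as the induction parameter. This is legitimate because, by Definition \ref{def:U}, every element of $U$ is obtained from elements of $S$ by finitely many applications of the formal operations, so the ``proper subexpression of'' relation on $U$ is well-founded; this same finiteness is what makes the decision procedure in Example \ref{exp:algoWO} terminate. Concretely, I would prove: for every $a\in U$, if $a\in\mathcal{O}$ then $a\in W$.

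For the base case, suppose $a\in S$ and $a\in\mathcal{O}$. By Rule i) of Definition \ref{def:WO} one has $\mathcal{O}\cap S=\{0_S\}$, so $a=0_S$, and again by Rule i) $0_S\in S\subseteq W$, as desired.

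For the inductive step, let $a\in\mathcal{O}$ contain at least one formal operation, and assume that every proper subexpression of $a$ lying in $\mathcal{O}$ also lies in $W$. By the remark following Definition \ref{def:U}, $a$ has exactly one of the forms $b\oplus c$, $b\circledast c$, $r\odot b$, or $b^{-1}$. The case $a=b^{-1}$ cannot occur, since Rule v) states $b^{-1}\notin\mathcal{O}$. If $a=b\oplus c\in\mathcal{O}$, then Rule ii) gives $b,c\in\mathcal{O}$, hence $b,c\in W$ by the induction hypothesis, and then $b\oplus c\in W$ by the first clause of Rule ii). If $a=b\circledast c\in\mathcal{O}$, then by Rule iii) one of $b,c$ is in $\mathcal{O}$ and the other in $W$; applying the induction hypothesis to the one in $\mathcal{O}$ yields $b,c\in W$, so $b\circledast c\in W$ by the first clause of Rule iii). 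Finally, if $a=r\odot b\in\mathcal{O}$, then Rule iv) gives either $b\in\mathcal{O}$ or ($r=0_{\mathbb{R}}$ and $b\in W$); in the first subcase the induction hypothesis gives $b\in W$, in the second $b\in W$ already, and in both cases $r\odot b\in W$ by the first clause of Rule iv). This exhausts all cases and completes the induction.

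I do not expect a genuine obstacle here: the mutual recursion defining $W$ and $\mathcal{O}$ contains no circularity, since every clause refers only to strictly simpler expressions, so the statement falls out of a routine case analysis. The only point worth stating carefully is the legitimacy of the induction itself — that is, the well-foundedness of the subexpression order on $U$ — which is immediate from the ``finitely many formal operations'' clause of Definition \ref{def:U}.
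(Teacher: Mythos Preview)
Your proof is correct and takes essentially the same approach as the paper: induction on the number of formal operations in an expression, using the clauses of Definition~\ref{def:WO} to reduce to subexpressions. The paper's proof is a one-line sketch of exactly this argument, while you have spelled out each case explicitly.
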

\begin{proof}
    It suffices to show that $\forall i\in\mathbb{N},  \forall a\in\mathcal{O}, ``a$ has $i$ operations in it" implies ``$a\in W$".  But this can be proved inductively (for $i$)  using Definition \ref{def:WO}.
\end{proof}

The proposition below gives another definition for the sets $W$ and $\mathcal{O}$.
\begin{proposition}\label{prop:anotherDEFofWO}
Set $W_0=S$ and $\mathcal{O}_0=\{0_S\}$. Define the recurrence sequences $\{W_i\}$ and $\{\mathcal{O}_i\}$ as follows:
\begin{equation}\label{W}
    W_{i+1}=W_i\cup(W_i\oplus W_i)\cup(W_i\circledast W_i)\cup(\mathbb{R}_{+}\odot W_i)\cup(W_i\backslash \mathcal{O}_i)^{-1},
\end{equation}
\begin{equation}\label{O}
    \mathcal{O}_{i+1}=\mathcal{O}_{i}\cup(\mathcal{O}_{i}\oplus\mathcal{O}_{i})\cup(\mathcal{O}_{i}\circledast W_i)\cup(W_i\circledast\mathcal{O}_{i})\cup(\{0_{\mathbb{R}}\}\odot W_i)\cup(\mathbb{R}_+\odot\mathcal{O}_{i}),
\end{equation} 
where $W_i\oplus W_i$ means $\{a\oplus b\,|\,a, b\in W_i\}$, $(W_i\backslash \mathcal{O}_i)^{-1}$ means $\{a^{-1}\,|\,a\in W_i\backslash \mathcal{O}_i\}$ and the meanings of the other sets in the expressions are similar. Then we have $W=\cup_{i=0}^\infty W_i$ and $\mathcal{O}=\cup_{i=0}^\infty \mathcal{O}_i$ for the sets $W$ and $\mathcal{O}$ in Definition \ref{def:WO}.
\end{proposition}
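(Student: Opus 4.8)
The plan is to show the two set-equalities $W=\bigcup_{i=0}^\infty W_i$ and $\mathcal{O}=\bigcup_{i=0}^\infty\mathcal{O}_i$ by proving two inclusions each. Write $W_\infty=\bigcup_i W_i$ and $\mathcal{O}_\infty=\bigcup_i\mathcal{O}_i$. First I would establish the inclusions $W_\infty\subset W$ and $\mathcal{O}_\infty\subset\mathcal{O}$: one shows by induction on $i$ that $W_i\subset W$ and $\mathcal{O}_i\subset\mathcal{O}$ simultaneously. The base case $i=0$ is Rule i) of Definition \ref{def:WO}. For the inductive step, every element added in passing from $(W_i,\mathcal{O}_i)$ to $(W_{i+1},\mathcal{O}_{i+1})$ is of one of the forms $a\oplus b$, $a\circledast b$, $r\odot a$, or $a^{-1}$ with the constituents already in $W$ or $\mathcal{O}$ by the induction hypothesis, and each of the defining rules ii)--v) then places the new element in $W$ (respectively $\mathcal{O}$) — for instance $a\in W_i\backslash\mathcal{O}_i$ gives $a\in W\backslash\mathcal{O}$, so $a^{-1}\in W$ by Rule v). Taking unions over $i$ yields $W_\infty\subset W$ and $\mathcal{O}_\infty\subset\mathcal{O}$.

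For the reverse inclusions I would invoke the characterization of $W$ and $\mathcal{O}$ as the \emph{smallest} pair of subsets of $U$ closed under Rules i)--v) (this minimality is exactly what "determined exactly by the following rules" in Definition \ref{def:WO} means). Concretely, it suffices to check that the pair $(W_\infty,\mathcal{O}_\infty)$ itself satisfies Rules i)--v); then by minimality $W\subset W_\infty$ and $\mathcal{O}\subset\mathcal{O}_\infty$. Rule i) holds since $S=W_0\subset W_\infty$ and $\mathcal{O}_\infty\cap S=\{0_S\}$ (the reverse inclusion here needs a small argument: no element with at least one operation lies in $S$, and an easy induction shows every element of $\mathcal{O}_i$ for $i\ge 1$ that happens to have no operations must already be $0_S$). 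For Rules ii)--v): if $a,b\in W_\infty$, pick $i$ with $a,b\in W_i$; then $a\oplus b,\,a\circledast b,\,r\odot a\in W_{i+1}\subset W_\infty$, and similarly for the $\mathcal{O}$-clauses and the inversion clause, using that $a\in W_\infty\backslash\mathcal{O}_\infty$ implies $a\in W_i\backslash\mathcal{O}_i$ for $i$ large enough (this is where one uses that the sequences $W_i$ and $\mathcal{O}_i$ are each increasing, so membership and non-membership both stabilize). The "only if" directions of the rules for $(W_\infty,\mathcal{O}_\infty)$ follow by unwinding the recurrences (\ref{W}) and (\ref{O}): every element of $W_{i+1}$ not in $W_i$ is visibly built by one of the listed operations from elements of $W_i$ (or $W_i\backslash\mathcal{O}_i$), and likewise for $\mathcal{O}_{i+1}$.

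The only genuine subtlety — and the step I expect to need the most care — is the clean formalization of the word "exactly" in Definition \ref{def:WO}: one must argue that $(W,\mathcal{O})$ is well-defined as the least fixed point of the monotone operator on pairs of subsets of $U$ induced by Rules i)--v), and that "least fixed point" is equivalent to "the union of the iterates $(W_i,\mathcal{O}_i)$ starting from $(S,\{0_S\})$." This is a standard Knaster–Tarski / Kleene-iteration argument once one observes the operator is monotone and $\omega$-continuous on the complete lattice $\mathcal{P}(U)\times\mathcal{P}(U)$, so the least fixed point is reached by iterating $\omega$ times from the bottom of the relevant sub-lattice; everything else is the routine induction sketched above. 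I would state this lattice-theoretic remark once and then present the two inductions as the actual proof, keeping it short.
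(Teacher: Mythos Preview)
Your overall strategy --- verify that $(W_\infty,\mathcal{O}_\infty)$ satisfies the rules of Definition~\ref{def:WO} and then appeal to the fact that those rules determine a unique pair --- is exactly the paper's approach. However, there is a genuine gap in how you handle the inversion clause, and it surfaces in \emph{both} of your inductions. In the first induction you assert that ``$a\in W_i\backslash\mathcal{O}_i$ gives $a\in W\backslash\mathcal{O}$,'' but the inductive hypothesis $\mathcal{O}_i\subset\mathcal{O}$ points the wrong way: from $a\notin\mathcal{O}_i$ you cannot conclude $a\notin\mathcal{O}$. In the second step, the ``only if'' half of Rule~v) for $(W_\infty,\mathcal{O}_\infty)$ has the same defect: unwinding $a^{-1}\in W_\infty$ yields $a\in W_{j-1}\backslash\mathcal{O}_{j-1}$ for some $j$, hence only $a\notin\mathcal{O}_{j-1}$, not the required $a\notin\mathcal{O}_\infty$. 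This is precisely the point the paper isolates as Lemma~\ref{lemma:WgangO}, namely $W_i\cap\hat{\mathcal{O}}=\mathcal{O}_i$ for every $i$; without it (or an equivalent statement such as $W_i\cap\mathcal{O}=\mathcal{O}_i$, proved simultaneously in a strengthened first induction) neither step closes.

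Your Knaster--Tarski remark does not rescue this: the operator on pairs $(W',\mathcal{O}')$ induced by the recurrences \eqref{W}--\eqref{O} is \emph{not} monotone in the product order on $\mathcal{P}(U)\times\mathcal{P}(U)$, because the term $(W'\backslash\mathcal{O}')^{-1}$ is antitone in $\mathcal{O}'$. So there is no off-the-shelf least-fixed-point argument here; one really must supply the extra lemma, after which the verification proceeds as the paper indicates.
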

\begin{proof}
Set $\hat{W}=\cup_{i=0}^\infty W_i$ and $\hat{\mathcal{O}}=\cup_{i=0}^\infty \mathcal{O}_i$. It suffices to show that they satisfy all the rules in Definition \ref{def:WO}, since the pair $(W, S)$ of subsets of $U$ satisfying those rules is unique. It is clear that $\hat{W}$ and $\hat{\mathcal{O}}$ satisfy all those rules, excluding the second one in rule v).  To show that they also satisfy that rule, it suffices to have $W_i\backslash\hat{\mathcal{O}}=W_i\backslash\mathcal{O}_i$ for any $i\in\mathbb{N}$. This is a direct corollary of Lemma \ref{lemma:WgangO}.
\end{proof}
\begin{lemma}\label{lemma:WgangO}
For any $i\in\mathbb{N}$ and the sets $W_i$, $\mathcal{O}_i$ and $\hat{\mathcal{O}}=\cup_{i=0}^\infty \mathcal{O}_i$ defined in Proposition \ref{prop:anotherDEFofWO}, we have $W_i\cap\hat{\mathcal{O}}=\mathcal{O}_i$.
\end{lemma}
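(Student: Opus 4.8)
The plan is to prove the statement by induction on $i$, with the freeness (unique readability) of the formal-expression set $U$ as the engine. First I would record two preliminary facts straight from the recursions (\ref{W}) and (\ref{O}): the chains $W_i\subseteq W_{i+1}$ and $\mathcal{O}_i\subseteq\mathcal{O}_{i+1}$ hold, and $\mathcal{O}_i\subseteq W_i$ for every $i$ — a one-line induction, since each of the six sets building $\mathcal{O}_{i+1}$ sits inside the corresponding set building $W_{i+1}$ ($\mathcal{O}_i\oplus\mathcal{O}_i\subseteq W_i\oplus W_i$, $\mathcal{O}_i\circledast W_i\subseteq W_i\circledast W_i$, $\{0_{\mathbb{R}}\}\odot W_i\subseteq\mathbb{R}_+\odot W_i$, and so on). The inclusion $\mathcal{O}_i\subseteq W_i$ together with $\mathcal{O}_i\subseteq\hat{\mathcal{O}}$ already gives $\mathcal{O}_i\subseteq W_i\cap\hat{\mathcal{O}}$, so only the reverse inclusion $W_i\cap\hat{\mathcal{O}}\subseteq\mathcal{O}_i$ requires real work.

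Next I would isolate a \emph{parsing lemma} for $\hat{\mathcal{O}}$, proved by induction on the least index $j$ with $a\in\mathcal{O}_j$ and using that every element of $U$ is \emph{uniquely} either an element of $S$, or of the form $b\oplus c$, $b\circledast c$, $r\odot b$, or $b^{-1}$ (Definition \ref{def:U} and the remark following it): if $a\in\hat{\mathcal{O}}\cap S$ then $a=0_S$; if $a\in\hat{\mathcal{O}}$ is $b\oplus c$ then $b,c\in\hat{\mathcal{O}}$; if $a\in\hat{\mathcal{O}}$ is $b\circledast c$ then one of $b,c$ lies in $\hat{\mathcal{O}}$; if $a\in\hat{\mathcal{O}}$ is $r\odot b$ then $r=0_{\mathbb{R}}$ or $b\in\hat{\mathcal{O}}$; and $\hat{\mathcal{O}}$ contains no expression of the form $b^{-1}$. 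Each case follows because, reading off (\ref{O}), the only pieces of $\mathcal{O}_{j}$ whose top operation matches that of $a$ are precisely the ones whose constituents are controlled in exactly this way (e.g.\ an $\oplus$-expression in $\mathcal{O}_j$ must come from $\mathcal{O}_{j-1}\oplus\mathcal{O}_{j-1}$ or, recursively, from $\mathcal{O}_{j-1}$), and unique readability transfers the membership of $a$ to membership of $a$'s immediate subexpressions.

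The main induction then runs on $i$. For $i=0$ we have $W_0=S$, so $a\in W_0\cap\hat{\mathcal{O}}$ forces $a=0_S\in\mathcal{O}_0$ by the parsing lemma. For the step, write $a\in W_{i+1}\cap\hat{\mathcal{O}}$ according to (\ref{W}): if $a\in W_i$, the induction hypothesis puts it in $\mathcal{O}_i\subseteq\mathcal{O}_{i+1}$; if $a=b\oplus c$ with $b,c\in W_i$, the parsing lemma gives $b,c\in\hat{\mathcal{O}}$, hence $b,c\in\mathcal{O}_i$ by the induction hypothesis, so $a\in\mathcal{O}_i\oplus\mathcal{O}_i\subseteq\mathcal{O}_{i+1}$; the $\circledast$ and $\odot$ cases are handled the same way (in the $\circledast$ case one factor is placed in $\mathcal{O}_i$ by the induction hypothesis while the other stays in $W_i$, matching $\mathcal{O}_i\circledast W_i$ or $W_i\circledast\mathcal{O}_i$; in the $\odot$ case either $r=0_{\mathbb{R}}$, matching $\{0_{\mathbb{R}}\}\odot W_i$, or the operand lands in $\mathcal{O}_i$, matching $\mathbb{R}_+\odot\mathcal{O}_i$); and the inverse case $a=b^{-1}$ with $b\in W_i\backslash\mathcal{O}_i$ cannot occur, since $\hat{\mathcal{O}}$ contains no inverses. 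This closes the induction and yields $W_i\cap\hat{\mathcal{O}}=\mathcal{O}_i$.

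The only genuine subtlety — the main obstacle — is bookkeeping the interleaving of the two recursions: one must avoid invoking Definition \ref{def:WO} or Proposition \ref{prop:anotherDEFofWO} (which rely on this lemma) and instead derive everything, including the harmless-looking fact $a^{-1}\notin\mathcal{O}$, directly from the explicit recursions (\ref{W}) and (\ref{O}); and one must make the unique readability of $U$ explicit, since that is exactly what licenses passing from ``$a$ lies in a prescribed piece of $\mathcal{O}_j$'' to statements about the constituents in the unique parse of $a$. Everything else is routine.
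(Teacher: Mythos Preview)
Your proposal is correct and follows essentially the same route as the paper: the easy inclusion $\mathcal{O}_i\subseteq W_i\cap\hat{\mathcal{O}}$, then induction on $i$ with a case split on the outermost operation of an element of $W_{i+1}$. The paper is much terser---it omits the details and simply invokes ``the rules in Definition~\ref{def:WO} for the set $\hat{\mathcal{O}}$'' where you derive the parsing lemma directly from the recursion~(\ref{O}); your explicit attention to unique readability and to avoiding circularity with Proposition~\ref{prop:anotherDEFofWO} (which depends on this lemma) is a genuine improvement in rigor over the paper's sketch.
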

\begin{proof}
It is clear that $W_i\cap\hat{\mathcal{O}}\supset\mathcal{O}_i$ since $\mathcal{O}_i\subset W_i$ and $\mathcal{O}_i\subset\hat{\mathcal{O}}$. By simply reasoning inductively on the index $i$,  one shows the opposite containment, which we omit the details. The trick is, for any $w\in W_k\cap\hat{\mathcal{O}}$, it is of one of the forms: $a\oplus b, a\circledast b$ and $ r\odot a$. In every case, we have $a, b\in W_{k-1}$. Using the rules in Definition \ref{def:WO} for the set $\hat{\mathcal{O}}$, one finds that  the induction step works.
\end{proof}
In the rest of the paper, some proofs will be carried out based on the definition of the sets $W$ and $\mathcal{O}$ given in Proposition \ref{prop:anotherDEFofWO}.

\subsection{Constructing an equivalence relation on the set of ``legal" expressions}
As in Example \ref{exp:rational},  we define in this subsection an equivalence relation $R(S)$ (or written as $R$) on the set $W$ so that the quotient $W/R$ becomes a semialgebra of the fractions of $S$ with every nonzero element invertible. This is why it is called the \emph{semialgebra-oriented} equivalence relation (Definition \ref{def:R}). The principal is to ensure that  $R$ includes only necessary relations, such that $R$ is as small as possible, and the semialgebra of fractions $W/R$ can then be as general as possible.

On the one hand, there are no semialgebra laws in the set of ``legal" expressions $W$, \emph{e.g.},  $w_1=(a\oplus b)\oplus c$ and   $w_2=a\oplus (b\oplus c)$ are not the same. On the other hand, from Example \ref{exp:rational}, we know that $(a, b)\in R$ means $a=b$ in the semialgebra of fractions $W/R$. Therefore, we will have $w_1=w_2$ (and thus the associative law of addition) in $W/R$  if and only if we include the pair $(w_1,w_2)$ into the equivalence relation $R$. Since we expect $W/R$ to be a semialgebra, all pairs (like $(w_1, w_2)$ mentioned above) which stand for any semialgebra axioms should necessarily be included in $R$. These pairs are in the sets listed below:
\begin{equation}\label{semialgebraLaws}
\left.
\begin{array}{rcl}
    A_1&=&\{((a\oplus b)\oplus c,a\oplus (b\oplus c)),((a\circledast b)\circledast c,a\circledast (b\circledast c)),\\
        &&\textcolor{white}{h}(a\circledast(b\oplus c),(a\circledast b)\oplus(a\circledast c)),((b\oplus c)\circledast a,(b\circledast a)\oplus(c\circledast a)),\\\vspace{2mm}
    &&\textcolor{white}{h}(a\oplus b,b\oplus a),(0_S\oplus a,a),(a\oplus0_S,a)\,|\,a,b,c\in W\},\\\vspace{2mm}
    A_2&=&\{(1_S\circledast a,a),(a\circledast1_S,a),(0_S\circledast a,0_S),(a\circledast0_S,0_S)\,|\,a\in W\},\\
    A_3&=&\{(r\odot(a\oplus b),(r\odot a)\oplus( r\odot b)),((r+_{_{\mathbb{R}}}t)\odot a,(r\odot a)\oplus (t\odot a)),\\
    &&\textcolor{white}{h}(r\odot(a\circledast b),(r\odot a)\circledast b),(r\odot(a\circledast b),a\circledast (r\odot b)),\\\vspace{2mm}
    &&\textcolor{white}{h}((rt)\odot a,r\odot(t\odot a)),(1_
    \mathbb{R}\odot a,a),(0_
    \mathbb{R}\odot a,0_S)\,|\,r,t\in\mathbb{R}_+,a,b\in W\}.
\end{array}
\right.
\end{equation}
Similarly, since $(a\circledast b)^{-1}$ and $b^{-1}\circledast a^{-1}$ are different expressions in $W$, we need to include into $R$ those axioms involving the  inversion:
\begin{equation}\label{inversionLaws}
\left.
\begin{array}{rcl}
   A_4&=&\{(a^{-1}\circledast a,1_S),(a\circledast a^{-1},1_S),((a\circledast b)^{-1},b^{-1}\circledast a^{-1}),(1_S^{-1},1_S),\\\vspace{2mm}
    &&\textcolor{white}{h}(s\odot(a^{-1}),(\frac{1}{s}\odot a)^{-1})\,|\,a,b\in W\backslash\mathcal{O},s\in\mathbb{R}_{>0}\}.
\end{array}
\right.
\end{equation}
Besides, we also expect that the identities in $S$ remain valid in $W/R$. For instance,  suppose $s_1=x_1*x_2+2, s_2=x_1*x_2*x_1*x_2+2$ and $s_3=2x_1*x_2+1$ are elements in the semialgebra $S$ defined in Example \ref{exp:classic}, then $s_1^2=s_2+2s_3$ is an identity in $S$. This identity should also be true in the semialgebra of the fractions of $S$. To explain that explicitly,  we define \[Q=\{w\in W\,|\,w\text{ does not contain the formal inversion}\},\] then any $w\in Q$ can be evaluated in the semialgebra $S$: one takes off all the ``circles" of the symbols $\oplus,\circledast$ and $\odot$, 
turning them into $+,*$ and $\cdot$ (the operations in $S$) respectively, whenever they occur in the expression $w$. In this way, the expression $w$ results in an element of $S$, denoted by $w_S$.  To illustrate, let $s_1$ and $s_2$ be as above, then $w=s_2\oplus(2\cdot s_3)$ is an expression without the formal inversion and $w_S=s_2+2s_3$, which is an element in $S$. Now the condition ``identities in $S$ remains valid in $W/R$" require us to include into $R$ the following set of pairs:
\begin{equation}\label{identities}
\left.
\begin{array}{rcl}
   A_5 &=&\{(a,b)\in Q\times Q\,|\,a_S=b_S\}.
\end{array}
\right.
\end{equation}
Finally, the expressions in the set $\mathcal{O}$ defined in Definition \ref{def:WO} should be regarded as zero. Thus we need to include into $R$ the pairs in the set below:
\begin{equation}\label{O=0}
    A_6=\{(a, 0_S)\,|\,a\in\mathcal{O}\}=\mathcal{O}\times\{0_S\}.
\end{equation}

From above, the pairs in the set $\cup_{i=1}^6A_i$ should be included in $R$. However, to define operations in the semialgebra of fractions $W/R$, $R$ needs to be closed under formal operations $\oplus, \circledast, \odot$  and $\,^{-1}$.  This means 

i) $(a, b), (c, d)\in R\Rightarrow(a\oplus c, b\oplus d),  (a\circledast c, b\circledast d)\in R$,

ii) $(a, b)\in R \Rightarrow (r\odot a, r\odot b)\in R$, and

iii) $\big((a, b)\in R \wedge a\notin\mathcal{O} \wedge b\notin\mathcal{O}\big) \Rightarrow  (a^{-1}, b^{-1})\in R$

\noindent{}for any $a, b, c, d\in W$ and any $r\in\mathbb{R}_+$. To see why the set $R$ has to meet these conditions, suppose $\overline{a}, \overline{b}\in W/R$ are the equivalence classes  of $a, b\in W$, respectively. Then it is natural to define the addition in $W/R$ by
\[\overline{a}+\overline{b}=\overline{a\oplus b}.\]Now ``the relation $R$ is closed under the formal addition $\oplus$" ensures that this addition is well-defined. Otherwise, it seems impossible to define an addition in $W/R$.

Note that  if $R_1, R_2\subset W\times W$ are two equivalence relations on $W$ which are both closed under all the formal operations, then so is their intersection $R_1\cap R_2$.  Therefore we have the following definition
\begin{definition}\label{def:R}
The \emph{semialgebra-oriented} equivalence relation $R(S)\subset W\times W$  on the set $W$ of legal formal rational expressions in elements of a semialgebra $S$ is defined to be the minimal subset of $W\times W$ such that (1) it contains $\cup_{i=1}^6A_i$, that (2) it is an equivalence relation and that (3) it is closed under the formal operations $\oplus, \circledast, \odot$  and $\,^{-1}$. In other words, if $\mathcal{R}$ is the set of all those subsets of $W\times W$ which meet these three conditions, then 
\[R=\bigcap_{T\in\mathcal{R}}T.\]
\end{definition}

The equivalence relation $R$ is well-defined since $W\times W\in\mathcal{R}$ and $\mathcal{R}$ is not empty.

\begin{definition}\label{canonical}
For any $w\in W$, we denote by $\overline{w}$ the $R$-equivalence class containing $w$. The map $\overline{\textcolor{white}{a}}:W\rightarrow W/R,w\mapsto \overline{w}$ is called \emph{the generalized canonical map} while its restriction to $S$, \emph{i.e.}, the map $\overline{\textcolor{white}{a}}:S\rightarrow W/R,s\mapsto \overline{s}$, is called \emph{the canonical map} . A subset $B\subset W$ is called $R$-\emph{saturated} if for any $w\in W$, $w\in B$ implies $\overline{w}\subset B$.
\end{definition}

The following proposition provides another definition of the semialgebraic-oriented equivalence relation.
\begin{proposition}\label{prop:anotherDEFofR}
Set $R_0=(\cup_{i=0}^6A_i)\bigcup\{(a, a)\,|\,a\in W\}$ and define the recurrence sequence $\{R_i\}_{i\in\mathbb{N}}$ as
\begin{equation}\label{Ri}
\left.
\begin{array}{rcl}
R_{i+1}&\hspace{-1mm}=&\hspace{-1mm}R_i\cup(R_i\oplus R_i)\cup(R_i\circledast R_i)\\
&&\hspace{-4.5mm}\cup\;(\mathbb{R}_+\odot R_i)\cup(R_i\backslash((\mathcal{O}\times W)\cup(W\times\mathcal{O})))^{-1}\\
&&\hspace{-4.5mm}\cup\;\text{rev}(R_i)\cup\{(a,c)\in W\times W\,|\,\exists b\in W\text{ \emph{s.t.} }(a,b),(b,c)\in R_i\},
\end{array}
\right.
\end{equation}
with 
\begin{equation}\label{equ:conponents}
\left\{
\begin{array}{lcl}
     R_i\oplus R_i&=&\{(a\oplus c, b\oplus d)\,|\,(a, b), (c, d)\in R_i\}, \\
          R_i\circledast R_i&=&\{(a\circledast c, b\circledast d)\,|\,(a, b), (c, d)\in R_i\},  \\
          \mathbb{R}_+\odot R_i&=&\{(r\odot a, r\odot b)\,|\,r\in\mathbb{R}_+, (a, b)\in R_i\},\\
          (R_i\backslash((\mathcal{O}\times W)\cup(W\times\mathcal{O})))^{-1}&=&\{(a^{-1}, b^{-1})\,|\,a, b\in W\backslash\mathcal{O}, (a, b)\in R_i\},\\
     \text{rev}(R_i)&=&\{(b,a)\,|\,(a,b)\in R_i\}.
\end{array}
\right.
\end{equation}
Then, $R=\cup^\infty_{i=0}R_i$ for $R$ defined in Definition \ref{def:R}.
\end{proposition}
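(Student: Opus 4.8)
The plan is to show that $\hat R := \bigcup_{i=0}^\infty R_i$ coincides with $R$ by the same double-inclusion strategy used for $W$ and $\mathcal O$ in Proposition \ref{prop:anotherDEFofWO}. First I would establish $\hat R \in \mathcal R$, i.e. that $\hat R$ satisfies the three defining properties of Definition \ref{def:R}; since $R$ is the intersection of all members of $\mathcal R$, this gives $R \subseteq \hat R$. Then I would prove the reverse inclusion $\hat R \subseteq R$ by showing $R_i \subseteq R$ for every $i$ by induction on $i$, using that $R$ itself already has all the closure properties that the recurrence \eqref{Ri} applies at each step.

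For the inclusion $R \subseteq \hat R$, I would check the three conditions in turn. Condition (1): $\bigcup_{i=1}^6 A_i \subseteq R_0 \subseteq \hat R$ by the definition of $R_0$. Condition (2), that $\hat R$ is an equivalence relation: reflexivity holds because $\{(a,a)\mid a\in W\}\subseteq R_0$; symmetry holds because the $\mathrm{rev}(R_i)$ term in \eqref{Ri} guarantees that whenever $(a,b)\in R_i$ we get $(b,a)\in R_{i+1}\subseteq\hat R$; transitivity holds because if $(a,b),(b,c)\in\hat R$ then both lie in some common $R_i$ (the sequence $\{R_i\}$ is increasing, so take the larger index), and then the last term of \eqref{Ri} puts $(a,c)\in R_{i+1}\subseteq\hat R$. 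Condition (3), closure under the formal operations: if $(a,b),(c,d)\in\hat R$ they lie in a common $R_i$, and the terms $R_i\oplus R_i$, $R_i\circledast R_i$, $\mathbb R_+\odot R_i$ of \eqref{Ri} yield $(a\oplus c,b\oplus d),(a\circledast c,b\circledast d),(r\odot a,r\odot b)\in R_{i+1}\subseteq\hat R$; and if moreover $a,b\notin\mathcal O$ then $(a^{-1},b^{-1})\in R_{i+1}\subseteq\hat R$ via the inversion term. Hence $\hat R\in\mathcal R$ and $R\subseteq\hat R$.

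For the reverse inclusion, I would argue that $R_i\subseteq R$ for all $i$ by induction. Base case: $R_0 = (\bigcup_{i=0}^6 A_i)\cup\{(a,a)\mid a\in W\}\subseteq R$, since $R$ contains $\bigcup_{i=1}^6 A_i$ by Definition \ref{def:R}(1), contains the diagonal by reflexivity (Definition \ref{def:R}(2)), and — for the $A_0$ term, which I should note is either empty or needs to be read off the same list — is covered as well; in any case every generator in $R_0$ lies in $R$. Inductive step: assuming $R_i\subseteq R$, each of the seven sets forming $R_{i+1}$ in \eqref{Ri} is contained in $R$: $R_i\subseteq R$ directly; $R_i\oplus R_i,\ R_i\circledast R_i,\ \mathbb R_+\odot R_i$ and the inversion term are contained in $R$ because $R$ is closed under the formal operations (Definition \ref{def:R}(3)); $\mathrm{rev}(R_i)\subseteq R$ because $R$ is symmetric; and the transitive-closure term is in $R$ because $R$ is transitive. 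Thus $R_{i+1}\subseteq R$, completing the induction, so $\hat R = \bigcup_i R_i \subseteq R$.

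The combination gives $R = \hat R = \bigcup_{i=0}^\infty R_i$. I expect the only genuinely delicate point to be the inversion term: one must verify that the set $\bigl(R_i\setminus((\mathcal O\times W)\cup(W\times\mathcal O))\bigr)^{-1}$ as spelled out in \eqref{equ:conponents} really equals $\{(a^{-1},b^{-1})\mid a,b\in W\setminus\mathcal O,\ (a,b)\in R_i\}$, and that this matches exactly the closure requirement (iii) imposed on members of $\mathcal R$ — in particular that removing pairs with a coordinate in $\mathcal O$ is the right restriction, using that $a^{-1}\in W$ precisely when $a\in W\setminus\mathcal O$ by Definition \ref{def:WO}(v). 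Everything else is a routine bookkeeping argument exploiting that $\{R_i\}$ is an increasing chain, so any finite collection of pairs from $\hat R$ already sits inside a single $R_i$.
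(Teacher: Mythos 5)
Your proposal is correct and follows essentially the same double-inclusion strategy as the paper's proof: first show $\hat R=\bigcup_i R_i$ lies in $\mathcal R$ (hence $R\subseteq\hat R$ by minimality), then show $R_i\subseteq R$ by induction on $i$ using the closure properties of $R$. Your observation that the stray ``$A_0$'' index in the definition of $R_0$ is a harmless typo, and your careful remark on why the inversion term in (\ref{equ:conponents}) matches closure rule (iii) of Definition \ref{def:R}, are both sound and in fact spell out details the paper leaves implicit.
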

\begin{proof}
Set $\hat{R}=\cup^\infty_{i=0}R_i$. By the definition of $R_i$, we see that $\hat{R}$ is an equivalence class on $W$ which contains the set $\cup_{i=0}^6A_i$ and is closed under the formal operations $\oplus$, $\circledast$, $\odot$ and $\;^{-1}$. Hence $R\subset\hat{R}$ by Definition \ref{def:R}. To prove the converse containment, it suffices to show $R_i\subset R$ for any $i\in\mathbb{N}$. This can be done inductively on the index $i$, which only contains routine checks; therefore, we omit it. The trick is to choose any $p\in R_k$ and then to proof $p\in R$. When $p$ is of one of the following forms \[(a\oplus c, b\oplus d), (a\circledast c, b\circledast d), (r\odot a, r\odot b), (a^{-1}, b^{-1}), (b,a)\] shown in equations (\ref{equ:conponents}) with $(a, b), (c, d)\in R_{k-1}$, then the induction step works for this case. Otherwise, $p=(a, c)$ with $(a, b), (b, c)\in R_{k-1}$ for some $b\in W$. In this case, the induction step also works.
\end{proof}
In the rest of the paper, we may use the definition of $R$ given in Proposition \ref{prop:anotherDEFofR} rather than the one in Definition \ref{def:R}, while conducting some proofs. For instance, we show in  Lemma \ref{Osat}  that the set $\mathcal{O}$ is $R$-saturated. And with that lemma, the following proposition is then straightforward. It is, however, important while define the semialgebra of fractions in the next subsection.

\begin{proposition}\label{0class}
It holds that $\overline{0}_S=\mathcal{O}$.
\end{proposition}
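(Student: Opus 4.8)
The statement asserts an equality of two subsets of $W$, so the plan is to prove the inclusions $\mathcal{O}\subseteq\overline{0}_S$ and $\overline{0}_S\subseteq\mathcal{O}$ separately. Note first that by the proposition just before Proposition~\ref{prop:anotherDEFofWO} we already have $\mathcal{O}\subseteq W$, so every element of $\mathcal{O}$ genuinely has an $R$-equivalence class, and the comparison makes sense. For the inclusion $\mathcal{O}\subseteq\overline{0}_S$, I would simply read it off from the construction of $R$: by Definition~\ref{def:R}, $R$ contains $\cup_{i=1}^{6}A_i$, and in particular it contains $A_6=\mathcal{O}\times\{0_S\}$ from (\ref{O=0}). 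Hence for every $a\in\mathcal{O}$ we have $(a,0_S)\in R$, which means exactly that $a\in\overline{0_S}=\overline{0}_S$.

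For the reverse inclusion $\overline{0}_S\subseteq\mathcal{O}$, I would invoke Lemma~\ref{Osat}, which states that the set $\mathcal{O}$ is $R$-saturated, i.e.\ $w\in\mathcal{O}$ implies $\overline{w}\subseteq\mathcal{O}$ for every $w\in W$. Since $0_S\in\mathcal{O}$ by Rule~i) of Definition~\ref{def:WO}, applying $R$-saturation with $w=0_S$ yields $\overline{0}_S=\overline{0_S}\subseteq\mathcal{O}$. Combining the two inclusions gives $\overline{0}_S=\mathcal{O}$, which completes the argument.

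The only nontrivial input here is Lemma~\ref{Osat}, so that is where the genuine obstacle sits; the proposition itself is essentially bookkeeping once the lemma is in hand. If the lemma were not already available, I would establish that $\mathcal{O}$ is $R$-saturated by induction along the recursive description $R=\cup_{i=0}^{\infty}R_i$ of Proposition~\ref{prop:anotherDEFofR}, showing that $w\in\mathcal{O}$ and $(w,w')\in R_i$ force $w'\in\mathcal{O}$: the base case handles the generators in $\cup_{i=0}^{6}A_i$ (together with the diagonal), and the induction step runs through the closure operations $\oplus,\circledast,\odot,{}^{-1}$ and the symmetry and transitivity steps, each time checking the outcome against the rules of Definition~\ref{def:WO} that govern membership in $\mathcal{O}$. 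The delicate cases are the pairs arising from the inversion laws $A_4$ and the $S$-identities $A_5$, since those require knowing that a legal expression equivalent to something in $\mathcal{O}$ cannot, for instance, sit under a formal inversion. Here, however, we simply cite Lemma~\ref{Osat} and are done.
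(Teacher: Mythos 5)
Your proof is correct and follows the paper's argument exactly: the inclusion $\mathcal{O}\subseteq\overline{0}_S$ comes from $A_6=\mathcal{O}\times\{0_S\}\subset R$, and the reverse inclusion $\overline{0}_S\subseteq\mathcal{O}$ is precisely the content of Lemma~\ref{Osat} ($R$-saturation of $\mathcal{O}$) applied at $0_S\in\mathcal{O}$. The closing remarks about how one would prove Lemma~\ref{Osat} are also consistent with the paper's inductive strategy along the filtration $R=\cup_i R_i$.
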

\begin{proof}
Since $\mathcal{O}\times\{0_S\}\subset R$, $\overline{0}_S\supset\mathcal{O}$. Conversely, set $x\in\overline{0}_S$, then $(x,0_S)\in R$. Clearly, $(x,0_S)\in R_i$ for some $i\in\mathbb{N}$ and $0_S\in\mathcal{O}$. By Lemma \ref{Osat}, we have $x\in\mathcal{O}$.
\end{proof}
\subsection{Defining the semialgebra of the fractions of $S$}\label{subsec:frac}
Set $F=W/R$ to be the set of equivalence classes in $W$, to turn it into a semialgebra, and we can define an addition, a multiplication, an inversion, and a scalar multiplication in $F$ as follows, 
\[
\left.
\begin{array}{rcl}
\overline{a}+\overline{b}&=&\overline{a\oplus b}\\
\overline{a}*\overline{b}&=&\overline{a\circledast b}\\
r\cdot \overline{a}&=&\overline{r\odot a}\\
\overline{c}^{-1}&=&\overline{c^{-1}}
\end{array}
\right\},
\;\forall a,b\in W,\;\forall r\in\mathbb{R}_+,\;\forall c\in W\backslash\mathcal{O}.
\]
\begin{proposition}\label{prop:fraction}
The operations given above are well-defined, and they turn $F$ into a semialgebra with any nonzero element invertible.
\end{proposition}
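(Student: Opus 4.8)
The plan is to verify, in turn, that each of the four operations is well-defined on equivalence classes, that the resulting structure satisfies the semialgebra axioms, and that every nonzero class has a multiplicative inverse. For well-definedness of $+$, $*$ and $\cdot$, the point is exactly that $R$ is closed under the formal operations $\oplus$, $\circledast$ and $\odot$ (condition (3) of Definition \ref{def:R}): if $\overline{a}=\overline{a'}$ and $\overline{b}=\overline{b'}$, then $(a,a'),(b,b')\in R$, so $(a\oplus b,a'\oplus b')\in R$ and hence $\overline{a\oplus b}=\overline{a'\oplus b'}$, and similarly for $\circledast$ and $\odot$. For the inversion, I first note that the definition only proposes $\overline{c}^{-1}$ for $c\in W\setminus\mathcal{O}$; by Proposition \ref{0class} we have $\overline{c}\neq\overline{0}_S$ precisely when $c\notin\mathcal{O}$, so the inversion is defined exactly on the nonzero classes. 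To see it is well-defined, suppose $c,c'\in W\setminus\mathcal{O}$ with $(c,c')\in R$; since $\mathcal{O}$ is $R$-saturated (Lemma \ref{Osat}) and $c\notin\mathcal{O}$, neither $c$ nor $c'$ lies in $\mathcal{O}$, so closure of $R$ under $\;^{-1}$ (condition iii) gives $(c^{-1},c'^{-1})\in R$, i.e. $\overline{c^{-1}}=\overline{c'^{-1}}$. Also $c^{-1}\in W$ by Rule v) of Definition \ref{def:WO}, so $\overline{c^{-1}}\in F$.

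Next I would check the semialgebra axioms for $F$. Each axiom — associativity and commutativity of $+$ with identity $\overline{0}_S$, associativity of $*$ with identity $\overline{1}_S$, the two distributive laws, and the scalar-multiplication laws i) and ii) from the definition of a semialgebra — corresponds to one of the pairs collected in the sets $A_1,A_2,A_3$ of \eqref{semialgebraLaws}. Since $\cup_{i=1}^{6}A_i\subset R$, for instance $((a\oplus b)\oplus c,\,a\oplus(b\oplus c))\in A_1\subset R$ for all $a,b,c\in W$, which unwinds to $(\overline{a}+\overline{b})+\overline{c}=\overline{a}+(\overline{b}+\overline{c})$; the remaining axioms are handled the same way, reading off the relevant member of $A_1$, $A_2$ or $A_3$. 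One must also confirm that $\overline{0}_S$ and $\overline{1}_S$ are distinct, which follows from Proposition \ref{0class} together with Assumption \ref{assfree} ($1_S\notin\mathcal{O}$ since $1_S\neq0_S$ and $\mathcal{O}\cap S=\{0_S\}$).

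Finally, for invertibility of nonzero elements: given $\overline{c}\neq\overline{0}_S$ with $c\in W$, Proposition \ref{0class} gives $c\notin\mathcal{O}$, hence $c\in W\setminus\mathcal{O}$ and $\overline{c}^{-1}=\overline{c^{-1}}$ is a legitimate element of $F$. The identities $(c^{-1}\circledast c,1_S),(c\circledast c^{-1},1_S)\in A_4\subset R$ from \eqref{inversionLaws} give $\overline{c}^{-1}*\overline{c}=\overline{1}_S=\overline{c}*\overline{c}^{-1}$, so $\overline{c}^{-1}$ is a genuine two-sided inverse. I expect the main obstacle to be bookkeeping rather than conceptual depth: one must be careful that inversion is applied only to representatives outside $\mathcal{O}$, which is where Proposition \ref{0class} and the $R$-saturation of $\mathcal{O}$ (Lemma \ref{Osat}) do the real work, and that the ``nonzero'' hypothesis on a class is correctly translated into ``some (equivalently every) representative lies in $W\setminus\mathcal{O}$'' before invoking closure of $R$ under $\;^{-1}$. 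Beyond that, the verification is a routine, if lengthy, matching of each required identity in $F$ with its witness in $\cup_{i=1}^{6}A_i$, so I would state the correspondence explicitly for a couple of representative axioms and leave the rest to the reader.
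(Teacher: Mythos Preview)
Your proposal is correct and follows essentially the same route as the paper: well-definedness of $+,*,\cdot$ from closure of $R$ under $\oplus,\circledast,\odot$; well-definedness of inversion via Proposition~\ref{0class} (the paper phrases the $R$-saturation step through Proposition~\ref{0class} directly rather than citing Lemma~\ref{Osat}, but the content is identical); the semialgebra axioms from $A_1\cup A_2\cup A_3$; and two-sided invertibility of nonzero classes from $A_4$. Your extra remark that $\overline{0}_S\neq\overline{1}_S$ is not in the paper's proof and is not formally needed for the proposition as stated, but it is harmless.
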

\begin{proof}
The definitions of addition, multiplication, and scalar multiplication are all well-defined since $R$ is closed under the formal operations $\oplus, \circledast$, and $\odot$. The relations in the set $A_1\cup A_2\cup A_3$ ensure that all semialgebra axioms are valid in $F$. Moreover, $\overline{0}_S$ and $\overline{1}_S$ are the identities of addition and multiplication, respectively.

We need to show that the inversion is well-defined and it is indeed ``the inversion" for the multiplication defined here.

Set $e\in W$ and suppose that $\overline{e}=\overline{c}$. By Proposition \ref{0class}, $c\notin\mathcal{O}$ implies $\overline{c}\neq\overline{0}_S$. Hence $\overline{e}\neq\overline{0}_S$, meaning that $e\notin\mathcal{O}$. Thus $e^{-1}\in W$. Moreover, $\overline{e^{-1}}=\overline{c^{-1}}$ since $R$ is closed under inversion. Therefore, the inversion is well-defined for any $c\notin\mathcal{O}$, \emph{i.e.}, for any $\overline{c}\neq \overline{0}_S$. Finally, the relations in the set $A_4$ guarantee  that the following equations hold for any $\overline{a},\overline{b}\neq\overline{0}_S$: $\overline{a}^{-1}*\overline{a}=\overline{1}_S=\overline{a}*\overline{a}^{-1}$, $(\overline{a}*\overline{b})^{-1}=\overline{b}^{-1}*\overline{a}^{-1}$, meaning that this is exactly the inversion for the multiplication ``$*$".
\end{proof}
Therefore, we have the following important definition:
\begin{definition}\label{def:fraction}
The \emph{semialgebra of the fractions} of a semialgebra $S$ satisfying Assumption \ref{assfree}, denoted by $F(S)$ $($or simply by $F)$, refers to the semialgebra $W(S)/R(S)$ proposed in Proposition \ref{prop:fraction}.
\end{definition}
The proposition below is clear since $A_5\subset R$:
\begin{proposition}
The canonical map from $S$ to $F$ is a semialgebra homomorphism.
\end{proposition}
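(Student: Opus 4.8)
The plan is to verify directly the defining conditions of a semialgebra homomorphism for the canonical map $\overline{\phantom{a}}\colon S\to F$, $s\mapsto\overline{s}$ (write $f$ for this map). By Proposition \ref{prop:fraction} the class $\overline{0}_S$ is the additive identity of $F$ and $\overline{1}_S$ is its multiplicative identity, so $f(0_S)=0_F$ and $f(1_S)=1_F$ hold immediately. What remains is to check compatibility with $+$, $*$ and the scalar multiplication.

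The key observation I would use is that, for any $x,y\in S$ and any $r\in\mathbb{R}_+$, each of the formal expressions $x\oplus y$, $x\circledast y$, $r\odot x$ lies in the set $Q$ of legal expressions containing no formal inversion --- since $S\subset W$ and none of $\oplus,\circledast,\odot$ introduces an inversion --- and likewise each of $x+y$, $x*y$, $r\cdot x$ lies in $S\subset Q$. Evaluating in $S$ through $w\mapsto w_S$ and using $s_S=s$ for $s\in S$ gives $(x\oplus y)_S=x+y=(x+y)_S$, $(x\circledast y)_S=x*y=(x*y)_S$ and $(r\odot x)_S=r\cdot x=(r\cdot x)_S$.

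Hence the pairs $(x+y,\,x\oplus y)$, $(x*y,\,x\circledast y)$ and $(r\cdot x,\,r\odot x)$ all belong to $A_5$, and therefore to $R$ because $A_5\subset R$. Passing to $R$-classes and unwinding the definitions of the operations on $F=W/R$, this gives $\overline{x+y}=\overline{x\oplus y}=\overline{x}+\overline{y}$, $\overline{x*y}=\overline{x\circledast y}=\overline{x}*\overline{y}$ and $\overline{r\cdot x}=\overline{r\odot x}=r\cdot\overline{x}$, which are precisely the homomorphism identities.

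I do not expect any real obstacle: the argument is a routine unwinding of the definitions. The only point deserving explicit mention is the verification that $x+y$, $x*y$, $r\cdot x$ and their formal counterparts all lie in $Q$, so that the set $A_5$ actually contains the relevant pairs; this is immediate from $S\subset W$ together with the fact that $\oplus,\circledast,\odot$ do not create inversions.
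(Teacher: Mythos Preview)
Your proof is correct and follows exactly the paper's approach: the paper's proof consists of the single remark that the proposition ``is clear since $A_5\subset R$,'' and your argument is simply a careful unwinding of that remark. The details you supply---that the relevant expressions lie in $Q$, that their $S$-evaluations agree, and hence that the needed pairs lie in $A_5$---are precisely what makes the one-line proof work.
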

\begin{remark}
The semialgebra $F$ is also zero-sum-free and zero-divisor-free: 
\begin{itemize}
    \item 
$\overline{a}+\overline{b}=\overline{0}_S$ implies that $a\oplus b\in\mathcal{O}$. Hence $a,b\in\mathcal{O}$, that is, $\overline{a}=\overline{b}=\overline{0}_S$; 
\item  $\overline{a}*\overline{b}=\overline{0}_S$ implies that $a\circledast b\in\mathcal{O}$. Hence $a\in\mathcal{O}$ or $b\in\mathcal{O}$, that is, $\overline{a}=\overline{0}_S$ or $\overline{b}=\overline{0}_S$;
\item  $r\cdot\overline{b}=\overline{0}_S$ implies that $r\odot b\in\mathcal{O}$. Hence $r=0_\mathbb{R}$ or $b\in\mathcal{O}$, that is, $r=0_\mathbb{R}$ or $\overline{b}=\overline{0}_S$.
\end{itemize}
\end{remark}
The following proposition is natural.
\begin{proposition}\label{prop:subsemialgebra}
Suppose that $S$ is a semialgebra satisfying Assumption \ref{assfree} and that $E\subset S$ is a sub-semialgebra with $0_S$ and $1_S$ its additive and multiplicative identities. Let $U(S), W(S), \mathcal{O}(S)$ and $R(S)$ be the sets defined in Definitions \ref{def:U}, \ref{def:WO} and \ref{def:R} for the semialgebra $S$, and $U(E), W(E), \mathcal{O}(E)$ and $R(E)$ the corresponding sets defined for the sub-semialgebra $E$. Then, we have 
\begin{equation}\label{equ:contained}
\left\{
\begin{array}{ccc}
     U(E)&\subset& U(S),  \\
     W(E)&\subset& W(S),  \\
     \mathcal{O}(E)&\subset& \mathcal{O}(S),\\
     W(E)\backslash\mathcal{O}(E)&\subset& W(S)\backslash\mathcal{O}(S),\\
     R(E)&\subset& R(S).
\end{array}
\right.
\end{equation}
Moreover, there is a natural semialgebra homomorphism between the following two semialgebra of fractions: $F(E)\rightarrow F(S): \overline{a}^E\mapsto\overline{a}^S$, where $a\in W(E)$ is any ``legal" formal expressions in the elements of $E$, and $\overline{\textcolor{white}{a}}^E: W(E)\rightarrow F(E)$ and $\overline{\textcolor{white}{a}}^S: W(S)\rightarrow F(S)$ are the generalized canonical maps.
\end{proposition}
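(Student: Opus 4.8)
The plan is to prove the five containments in (\ref{equ:contained}) by induction using the recursive characterizations of $W$, $\mathcal{O}$ and $R$ given in Propositions \ref{prop:anotherDEFofWO} and \ref{prop:anotherDEFofR}, and then to build the homomorphism $F(E)\to F(S)$ from these containments. First I would observe that the hypotheses on $E$ are exactly what is needed for the base cases: $W_0(E)=E\subseteq S=W_0(S)$, $\mathcal{O}_0(E)=\{0_E\}=\{0_S\}=\mathcal{O}_0(S)$ (this is where ``$0_S,1_S$ are the identities of $E$'' is used), and likewise $R_0(E)\subseteq R_0(S)$ because each defining set $A_1(E),\dots,A_6(E)$ is literally a subset of the corresponding $A_i(S)$ — for $A_5$ one uses that evaluation $a\mapsto a_S$ agrees on $Q(E)\subseteq Q(S)$ since $E$ is closed under the operations of $S$, and for $A_6$ one needs $\mathcal{O}(E)\subseteq\mathcal{O}(S)$, which is handled simultaneously.

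The induction step then proceeds level by level: assuming $W_i(E)\subseteq W_i(S)$, $\mathcal{O}_i(E)\subseteq\mathcal{O}_i(S)$ and $R_i(E)\subseteq R_i(S)$, each of the six (resp. the defining) clauses in (\ref{W}), (\ref{O}), (\ref{Ri}) for the $E$-side maps into the corresponding clause on the $S$-side because the formal operations $\oplus,\circledast,\odot,{}^{-1}$ are syntactic and blind to whether the symbols come from $E$ or $S$. The only clause needing a moment's care is the inversion clause, which requires $W_i(E)\setminus\mathcal{O}_i(E)\subseteq W_i(S)\setminus\mathcal{O}_i(S)$; this follows from $W_i(E)\subseteq W_i(S)$ together with Lemma \ref{lemma:WgangO} applied to both semialgebras, since $W_i(E)\cap\hat{\mathcal{O}}(E)=\mathcal{O}_i(E)$ and $W_i(E)\cap\hat{\mathcal{O}}(S)\subseteq W_i(S)\cap\hat{\mathcal{O}}(S)=\mathcal{O}_i(S)$, and one checks $\hat{\mathcal{O}}(E)\subseteq\hat{\mathcal{O}}(S)$ from the $\mathcal{O}_i$ containments. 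Taking unions over $i$ yields the first three and the fifth lines of (\ref{equ:contained}), and the fourth line $W(E)\setminus\mathcal{O}(E)\subseteq W(S)\setminus\mathcal{O}(S)$ follows since $W(E)\cap\mathcal{O}(S)\subseteq\mathcal{O}(E)$ by the same $W_i\cap\hat{\mathcal{O}}$ argument in the limit — equivalently, $\mathcal{O}(E)=W(E)\cap\mathcal{O}(S)$.

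For the map $F(E)\to F(S)$, I would define it as the composite $\overline{a}^E\mapsto a\mapsto \overline{a}^S$ and check well-definedness: if $\overline{a}^E=\overline{b}^E$ then $(a,b)\in R(E)\subseteq R(S)$, so $\overline{a}^S=\overline{b}^S$. That it is a semialgebra homomorphism is immediate from the definitions of $+,*,\cdot,{}^{-1}$ on $F(E)$ and $F(S)$ as images of $\oplus,\circledast,\odot,{}^{-1}$ — e.g. $\overline{a}^E+\overline{b}^E=\overline{a\oplus b}^E\mapsto\overline{a\oplus b}^S=\overline{a}^S+\overline{b}^S$ — and it sends $\overline{0}_S^E,\overline{1}_S^E$ to $\overline{0}_S^S,\overline{1}_S^S$; for inversion one uses the fourth containment so that $a\in W(E)\setminus\mathcal{O}(E)$ forces $a\in W(S)\setminus\mathcal{O}(S)$, making $\overline{a^{-1}}^S$ legitimate. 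I do not expect a genuine obstacle here; the one place demanding attention is keeping the ``$\mathcal{O}$'' bookkeeping straight — namely proving $\mathcal{O}(E)=W(E)\cap\mathcal{O}(S)$ rather than merely $\mathcal{O}(E)\subseteq\mathcal{O}(S)$ — since without the intersection identity the inversion clauses (both in $R$ and in the map) would not obviously transfer. Everything else is a routine simultaneous induction that I would state compactly and leave the per-clause verifications to the reader.
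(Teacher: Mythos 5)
Your overall plan — level‑wise induction through the recursive filtrations $W_i,\mathcal{O}_i,R_i$, followed by transporting the equivalence classes — is a legitimate alternative to the paper's argument, which inducts instead on the number $|x|$ of formal operations in an expression. Both routes are routine once the inductive invariant is set up correctly, and you build the map $F(E)\to F(S)$ in exactly the same way the paper does.

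However, there is a genuine gap in how you dispose of the fourth containment $W(E)\setminus\mathcal{O}(E)\subseteq W(S)\setminus\mathcal{O}(S)$, which, as you correctly note, is the crux: it is needed for the inversion clause in the recursion for $W$ and for $R$. What you actually need is the \emph{reverse} inclusion $W(E)\cap\mathcal{O}(S)\subseteq\mathcal{O}(E)$: an $E$‑expression that is null from the viewpoint of $S$ must already be null from the viewpoint of $E$. Your cited tools do not give this. Lemma~\ref{lemma:WgangO} applied separately to $E$ and to $S$ only tells you $W_i(E)\cap\hat{\mathcal{O}}(E)=\mathcal{O}_i(E)$ and $W_i(S)\cap\hat{\mathcal{O}}(S)=\mathcal{O}_i(S)$, and the containment $\hat{\mathcal{O}}(E)\subseteq\hat{\mathcal{O}}(S)$ is in the \emph{wrong direction}: from $x\in W_i(E)\setminus\mathcal{O}_i(E)$ it gives $x\notin\hat{\mathcal{O}}(E)$, which says nothing about whether $x\in\hat{\mathcal{O}}(S)$. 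So the chain you wrote does not deliver $x\notin\mathcal{O}_i(S)$, and without that the clause $(W_i(E)\setminus\mathcal{O}_i(E))^{-1}\subseteq(W_i(S)\setminus\mathcal{O}_i(S))^{-1}$ is unsupported.

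The fix is to promote the fourth containment to a \emph{component of the simultaneous inductive hypothesis}: alongside $W_i(E)\subseteq W_i(S)$ and $\mathcal{O}_i(E)\subseteq\mathcal{O}_i(S)$ you must also carry $W_i(E)\setminus\mathcal{O}_i(E)\subseteq W_i(S)\setminus\mathcal{O}_i(S)$ (equivalently $W_i(E)\cap\hat{\mathcal{O}}(S)\subseteq\hat{\mathcal{O}}(E)$) and verify it per clause. This is precisely what the paper does: in (\ref{equ:implies}) the third implication, $x\in W(E)\setminus\mathcal{O}(E)\Rightarrow x\in W(S)\setminus\mathcal{O}(S)$, is part of the inductive invariant from the start, and the per‑case analysis shows it is self‑sustaining (for $\circledast$ both factors must lie in $W\setminus\mathcal{O}$; for $\oplus$ at least one summand does; for ${}^{-1}$ the base lies in $W\setminus\mathcal{O}$, and inverses are never null). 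One further small point: the sets $A_1(E),\dots,A_6(E)$ are built from all of $W(E)$ and $\mathcal{O}(E)$, not just $W_0(E)$ and $\mathcal{O}_0(E)$, so $R_0(E)\subseteq R_0(S)$ already requires the \emph{full} containments $W(E)\subseteq W(S)$, $\mathcal{O}(E)\subseteq\mathcal{O}(S)$ and $W(E)\setminus\mathcal{O}(E)\subseteq W(S)\setminus\mathcal{O}(S)$; therefore you must complete the $W/\mathcal{O}$ induction before starting the $R$ induction rather than running all three in lockstep as you wrote.
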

\begin{proof}
By Definition \ref{def:U}, it is clear that $U(E)\subset U(S)$ since $E\subset S$. For any $x\in U(E)\subset U(S)$, we denote by $|x|$ the number of formal operations in $x$. Then, to prove the second, the third, and the fourth containment in inequalities (\ref{equ:contained}), it suffices to show that for any $i\in\mathbb{N}$ and any $x\in U(E)$ with $|x|=i$, we have
\begin{equation}\label{equ:implies}
\left\{
\begin{array}{rcl}
     x\in W(E) &\text{ implies }& x\in W(S),\\
          x\in \mathcal{O}(E) &\text{ implies }& x\in \mathcal{O}(S),\\
x\in W(E)\backslash\mathcal{O}(E)&\text{ implies }& x\in W(S)\backslash\mathcal{O}(S).
\end{array}
\right.
\end{equation}
That can be done inductively on the index $i$. We omit the details as usual. The trick is, $x$ is always of one of forms: $a\oplus b, a\circledast b, r\odot a$ and $a^{-1}$. In every case we have $|a|<|x|$ and $|b|<|x|$. Hence the inductive step works. One should also notice that implications in  (\ref{equ:implies}) sometimes ``prove each other". For instance, if $x=a^{-1}\in W(E)$ and we want to show $x\in W(E)$. Then we have $|a|<|x|$ and $a\in W(E)\backslash\mathcal{O}(E)$. By the inductive assumption, we have $a\in W(S)\backslash\mathcal{O}(S)$. Thus $x=a^{-1}\in W(S)$.

Suppose $A_1(S),\ldots,A_6(S)$ are the sets defined \emph{w.r.t.} the semialgebra $S$ in equations (\ref{semialgebraLaws})--(\ref{O=0}), and \[R_0(S)=(\cup_{i=0}^6A_i(S))\bigcup\{(a, a)\,|\,a\in W(S)\}\]is as in Proposition \ref{prop:anotherDEFofR}. Similarly, we have $A_1(E),\ldots,A_6(E)$ and \[R_0(E)=(\cup_{i=0}^6A_i(E))\bigcup\{(a, a)\,|\,a\in W(E)\}\] defined for the semialgebra $E$. From the first four inequalities in (\ref{equ:contained}), we see that $A_i(E)\subset A_i(S)$ for any $1\leq i\leq 6$. Thus $R_0(E)\subset R_0(S)$. Using the definition of semialgebra-oriented equivalence relation given in Proposition \ref{prop:anotherDEFofR}, we have the sequences $\{R_i(S)\}$ and $\{R_i(E)\}$ such that $R(S)=\cup_iR_i(S)$ and $R(E)=\cup_iR_i(E)$. To prove $R(E)\subset R(S)$, it suffices to show $R_i(E)\subset R_i(S)$ for any $i\in\mathbb{N}$, which can be done inductively on the index $i$. That is similar to proving $``R_i\subset R, \;\forall i\in\mathbb{N}"$ in Proposition \ref{prop:anotherDEFofR}. 

Now that we have $R(E)\subset R(S)$, the map $\overline{a}^E\mapsto\overline{a}^S$ is well-defined and is clearly a semialgebra homomorphism.
\end{proof}
\subsection{Related work}
The new definition of the semialgebra of fractions in this section generalizes the one in \cite[Proposition  11.5]{golan2013semirings}, where elements in the semiring of the fractions of a noncommutative semiring are only allowed to be of the form $a^{-1}*b$. While our definition allows the inversion to appear in any position or even to be nested in an element, \emph{e.g.}, we have elements of the form $a^{-1}*b*c^{-1}+d*p^{-1}*q$ and $(a+b^{-1})^{-1}$, \emph{etc.}, which are much more complicated.

In \cite[Chap.\,18]{golan2013semirings}, the author  also generalized the definition of the semiring of fractions in \cite[Chap.\,11]{golan2013semirings} via the concept of a \emph{Gabriel filter} of a semiring. Although our definition in this section generalizes the one in \cite[Chap.\,11]{golan2013semirings}, we do not know how our definition and the one given in \cite[Chap.\,18]{golan2013semirings} could possibly be related since they look utterly different from each other. The advantage of our definition is that it is elementary. Moreover, it is more convenient to prove the main theorem using our definition than the Gabriel filter.

Another topic related to the content in this section is the theory of \emph{skew fields} \cite{cohn1995skew}, which mainly studies the rings (or fields) of fractions of commutative and noncommutative rings. While many methods exist to construct skew fields of noncommutative rings, this section illustrates a particular way to construct a ``semi-skew field" of a noncommutative semialgebra. We do not use the theory of skew fields directly for two reasons: i) There are subtractions in those skew fields that are not allowed in semirings and semialgebras. One has to rule out the subtraction while defining  the semialgebra of fractions via the definition of the skew field, which is more troublesome than defining it from nothing. ii) One of the most important noncommutative skew fields is the fractions of noncommutative polynomials. Two elements in such a skew field are equal  if and only if  their values coincide with each other at every matrix tuple in a certain set. This definition of equality relies on other algebraic structures but not the skew field itself. That is, this kind of  equality  is not an ``intrinsic" property. The advantage of our definition of equality (\emph{i.e.}, the set $R$) relies only on the semialgebra $S$ itself, which is ``intrinsic".

In Proposition \ref{isom}, we  show that when the semialgebra $S$ is commutative, the semialgebra $F$ of its fractions defined in this section coincides with the corresponding concept in the commutative case, which can be obtained naturally from 
\cite[Example 11.7]{golan2013semirings}.  Therefore, 
the new definition of the semialgebra of fractions generalizes the corresponding concept in the commutative case, indicating that our definition is reasonable.

\section{Preordered Semialgebra and the Derived Preorder Relation on the Semialgebra of its Fractions}\label{sec:derivedPreorder}
This section discusses how a preorder relation on a semialgebra $S$ derives a relation on the semialgebra $F=F(S)$ of the fractions of $S$.
 
In a semialgebra, we are only concerned with preorders that are compatible with all the semialgebra operations (that is, which satisfy the implication in \ref{preoc}). Denote by $\mathcal{P}$ the set of such preorders on the semialgebra $F(S)$, then the intersection of any elements of $\mathcal{P}$ is still an element of $\mathcal{P}$. On the other hand, an inequality in $S$ is naturally expected to be true in $F$. Therefore,  we have the following definition:
\begin{definition}\label{def:derivedPreorder}
Let $(S, \leq)$ be a preordered semialgebra, and $F$ be the semialgebra of the fractions of $S$. Set $\mathscr{I}=\{(\overline{x}, \overline{y})\,|\,x\leq y\text{ in }S\}$. Then the \emph{derived preorder} ``$\preccurlyeq$" on $F$ \emph{w.r.t.} the preorder ``$\leq$" is the minimal preorder which is compatible with all the semialgebra operations and contains the set $\mathscr{I}$.  That is,
\begin{equation}\label{equ:derivedPreorder}
\{(a,b)\in F\times F\,|\,a\preccurlyeq b\} =\bigcap_{\mathscr{I}\subset P\in\mathcal{P}}P.
\end{equation}
\end{definition}
\begin{definition}
For any $a, b\in F$, if there are $k\in\mathbb{N}_{\geq1}$, $g_1, \ldots, g_k, h_1,\ldots,h_k\in F$ and $A_1, \ldots, A_k, B_1,\ldots,B_k\in S$ such that $A_i\leq B_i$ for all $i$, and that \begin{equation}\label{equ:lessdot}\left\{
\begin{array}{rcl}
     a&=&\sum_{i=1}^kg_i\overline{A}_ih_i  \\
    b & =&\sum_{i=1}^kg_i\overline{B}_ih_i
\end{array}
\right.
\end{equation}
in $F$, then we write ``$a\lessdot b$". 
\end{definition}
The proposition below gives another perspective into the derived preorder:
\begin{proposition}\label{prop:reduce2leq}
If $a, b\in F$, then $a\preccurlyeq b$  if and only if there are $c_1, \ldots, c_k\in F$ for some $k\in\mathbb{N}$ such that 
\begin{equation}\label{equ:manyLessdot}
a\lessdot c_1\lessdot\cdots\lessdot c_k\lessdot b.
\end{equation}
\end{proposition}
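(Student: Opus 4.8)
The plan is to show that the relation $\preccurlyeq'$ defined by ``$a\preccurlyeq' b$ iff there exist $c_1,\dots,c_k\in F$ with $a\lessdot c_1\lessdot\cdots\lessdot c_k\lessdot b$'' (allowing $k=0$, i.e.\ $a\lessdot b$) coincides with $\preccurlyeq$. Since $\preccurlyeq$ is by Definition~\ref{def:derivedPreorder} the \emph{minimal} preorder on $F$ compatible with all the semialgebra operations and containing $\mathscr{I}$, it suffices to prove two things: (i) $\preccurlyeq'\subset\preccurlyeq$, and (ii) $\preccurlyeq'$ is itself a preorder compatible with all the semialgebra operations and contains $\mathscr{I}$; then minimality of $\preccurlyeq$ gives $\preccurlyeq\subset\preccurlyeq'$, and we are done.

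For (i), I would first check that $a\lessdot b$ implies $a\preccurlyeq b$: writing $a=\sum_{i=1}^k g_i\overline{A}_i h_i$ and $b=\sum_{i=1}^k g_i\overline{B}_i h_i$ with $A_i\leq B_i$ in $S$, each pair $(\overline{A}_i,\overline{B}_i)$ lies in $\mathscr{I}\subset\preccurlyeq$; applying compatibility of $\preccurlyeq$ with left and right multiplication gives $g_i\overline{A}_i h_i\preccurlyeq g_i\overline{B}_i h_i$, and then compatibility with addition (used $k-1$ times, together with transitivity) yields $a\preccurlyeq b$. Chaining this through $a\lessdot c_1\lessdot\cdots\lessdot c_k\lessdot b$ and using transitivity of $\preccurlyeq$ gives $a\preccurlyeq b$, proving $\preccurlyeq'\subset\preccurlyeq$.

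For (ii), reflexivity of $\preccurlyeq'$ follows from $a\lessdot a$ (take $k=1$, $g_1=h_1=\overline{1}_S$, $A_1=B_1=0_S$... or more simply any identity of $S$ reflexively, since $\leq$ is reflexive on $S$; one must note every element of $F$ need not be of the form $\overline{A}$, but one can take $a=\overline 1_S\,\overline{1}_S\,a$ and use $1_S\le 1_S$, giving $a=g_1\overline A_1 h_1$ with $g_1=\overline{1}_S$, $A_1=1_S$, $h_1=a$ — here I should double-check that $F$'s identity is $\overline{1}_S$, which Proposition~\ref{prop:fraction} guarantees). Transitivity of $\preccurlyeq'$ is immediate by concatenating chains. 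Compatibility with addition: if $a\lessdot b$ via $(g_i,A_i,B_i,h_i)$ and $c\in F$, then $c+a\lessdot c+b$ by appending the single term $g_0=\overline 1_S$, $A_0=B_0=1_S$, $h_0=c$... rather, appending the term with $\overline{A}_0=\overline{B}_0$ equal to... the cleanest route is: $c+a = \overline{1}_S\cdot c\cdot\overline{1}_S + \sum g_i\overline A_i h_i$ and $c+b=\overline{1}_S\cdot c\cdot\overline{1}_S+\sum g_i\overline B_i h_i$, so indeed $c+a\lessdot c+b$ (using $1_S\le 1_S$ for the new term); this extends to chains. Compatibility with left multiplication by $d\in F$: from $a=\sum g_i\overline A_i h_i$, $b=\sum g_i\overline B_i h_i$ we get $da=\sum(dg_i)\overline A_i h_i$, $db=\sum(dg_i)\overline B_i h_i$, so $da\lessdot db$; similarly on the right; chains again extend. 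Compatibility with scalar multiplication $r\cdot(-)$ is the same, using $r\cdot(g\overline A h)=(r\cdot g)\overline A h$ in $F$ (from the semialgebra addition law $(r\cdot x)*y=r\cdot(x*y)$, which survives in $F$). Finally $\mathscr{I}\subset\preccurlyeq'$ is clear: $x\le y$ in $S$ gives $\overline x\lessdot\overline y$ directly.

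The main obstacle — really the only non-bookkeeping point — is handling reflexivity and the ``append a term'' maneuvers correctly given that an arbitrary element of $F$ need not be a single $\overline A$ with $A\in S$: one must always represent the ``passive'' summand $c$ (or the reflexive element $a$) as $\overline{1}_S\cdot c\cdot\overline{1}_S$ and attach the trivial inequality $1_S\le 1_S$, so that the definition of $\lessdot$ — which demands $A_i,B_i\in S$ — is literally satisfied. Once this representational care is taken, every verification in (ii) is a one-line manipulation inside $F$ using the semialgebra axioms established in Proposition~\ref{prop:fraction}, and the proof closes by appeal to the minimality clause of Definition~\ref{def:derivedPreorder}.
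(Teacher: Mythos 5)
Your proof is correct and follows essentially the same route as the paper's: you observe that $\lessdot$ (hence $\preccurlyeq'$) is contained in $\preccurlyeq$, then invoke the minimality clause of Definition~\ref{def:derivedPreorder} after verifying that $\preccurlyeq'$ is a reflexive, transitive relation compatible with all semialgebra operations and containing $\mathscr{I}$ — the paper does exactly this, merely declaring the compatibility of $\lessdot$ ``clear'' where you spell out the representational bookkeeping. One small correction to your ``cleanest route'' for compatibility with addition: the decomposition that literally fits the definition of $\lessdot$ is your first one, $g_0=\overline{1}_S$, $A_0=B_0=1_S$, $h_0=c$ (so $g_0\overline{A}_0h_0=c$); the factorization ``$\overline{1}_S\cdot c\cdot\overline{1}_S$'' puts $c$ in the middle slot, which must be of the form $\overline{A}$ with $A\in S$ and need not be when $c\in F$ is arbitrary.
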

\begin{proof}
\noindent{}The ``if" part: This is clear from the observation that for any two elements $w_1, w_2$ in $F$, $w_1\lessdot w_2$ implies $w_1\preccurlyeq w_2$, which follows from the fact that the preorder $\preccurlyeq$ contains the set $\mathscr{I}$ and is compatible with all the semialgebra operations.\\\vspace{-3mm}

\noindent{}The ``only if" part: Define a binary relation ``$\preccurlyeq'$" on $F$ by the condition that for any $a, b\in F$, $a\preccurlyeq'b$ if and only if inequalities (\ref{equ:manyLessdot}) hold for some $k$ and some $c_i$. Since $k$ can be zero in this proposition, the binary relation $\lessdot$ is contained in the binary relation $\preccurlyeq'$. It suffices to show that the derived preorder is contained in the binary relation $\preccurlyeq'$. By the minimality of the derived preorder in Definition \ref{def:derivedPreorder}, it suffices to show that the binary relation $\preccurlyeq'$ is a preorder relation compatible with all the semialgebra operations and contains the set $\mathscr{I}$. The relation $\preccurlyeq'$ is reflexive since the relation $\lessdot$ is, while it is transitive by its definition. Thus it is a preorder relation. The set $\mathscr{I}$ is contained in the relation $\lessdot$, which is again contained in the preorder $\preccurlyeq'$. It is clear that the relation $\lessdot$ is compatible with all the semialgebra operations, from which one proves that this is also true for the preorder $\preccurlyeq'$.\end{proof}
The following corollary is interesting:
\begin{corollary}\label{coro:necessaryOfLeq}
If $a\preccurlyeq b\in F$, then there is a $w\in F$ such that $a+w\lessdot b+w$. That is, there are $k\in\mathbb{N}_{\geq1}$, $g_1, \ldots, g_k, h_1,\ldots,h_k\in F$ and $A_1, \ldots, A_k, B_1,\ldots,B_k\in S$ such that $A_i\leq B_i$ for all $i$, and  \[\left\{
\begin{array}{rcl}
     a+w&=&\sum_{i=1}^kg_i\overline{A}_ih_i,  \\
    b+w& =&\sum_{i=1}^kg_i\overline{B}_ih_i.
\end{array}
\right.
\]
\end{corollary}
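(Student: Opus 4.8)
The plan is to feed Proposition~\ref{prop:reduce2leq} into a telescoping trick. By that proposition, $a\preccurlyeq b$ gives elements $c_1,\ldots,c_k\in F$ (for some $k\in\mathbb{N}$, possibly $k=0$) with
$a\lessdot c_1\lessdot\cdots\lessdot c_k\lessdot b$. Relabel $c_0:=a$ and $c_{k+1}:=b$. The idea is that the single witness $w:=\sum_{t=1}^{k}c_t$ (the empty sum, hence $\overline{0}_S$, when $k=0$) works, because adding it to both sides turns the $(k+1)$ consecutive $\lessdot$-relations into one $\lessdot$-relation: the left-hand sides line up against the right-hand sides.

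Carrying this out, for each $t\in\{1,\ldots,k+1\}$ the relation $c_{t-1}\lessdot c_t$ unfolds, by the definition of $\lessdot$, to an integer $k_t\geq1$, elements $g^{(t)}_1,\ldots,g^{(t)}_{k_t},h^{(t)}_1,\ldots,h^{(t)}_{k_t}\in F$ and $A^{(t)}_1,\ldots,A^{(t)}_{k_t},B^{(t)}_1,\ldots,B^{(t)}_{k_t}\in S$ with $A^{(t)}_i\leq B^{(t)}_i$ in $S$ for all $i$, such that $c_{t-1}=\sum_{i=1}^{k_t}g^{(t)}_i\,\overline{A^{(t)}_i}\,h^{(t)}_i$ and $c_t=\sum_{i=1}^{k_t}g^{(t)}_i\,\overline{B^{(t)}_i}\,h^{(t)}_i$. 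Using commutativity and associativity of $+$ in $F$ and the convention on empty sums, with $w=\sum_{t=1}^{k}c_t$ one computes
\[
a+w=\sum_{t=0}^{k}c_t=\sum_{t=1}^{k+1}c_{t-1}=\sum_{t=1}^{k+1}\sum_{i=1}^{k_t}g^{(t)}_i\,\overline{A^{(t)}_i}\,h^{(t)}_i,
\qquad
b+w=\sum_{t=1}^{k+1}c_t=\sum_{t=1}^{k+1}\sum_{i=1}^{k_t}g^{(t)}_i\,\overline{B^{(t)}_i}\,h^{(t)}_i.
\]
Re-indexing both double sums by the pairs $(t,i)$, these two elements of $F$ are written with one and the same family of left multipliers $g^{(t)}_i$ and right multipliers $h^{(t)}_i$, and with middle terms $A^{(t)}_i$ in the first expression and $B^{(t)}_i$ in the second, where $A^{(t)}_i\leq B^{(t)}_i$ in $S$ for every index. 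Since $\sum_{t=1}^{k+1}k_t\geq k+1\geq1$, this is precisely the shape demanded by the definition of $\lessdot$, so $a+w\lessdot b+w$, which is the claim.

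I do not expect a genuine obstacle here; the proof is bookkeeping once Proposition~\ref{prop:reduce2leq} is in hand. The two points that need a little care are: (i) the degenerate case $k=0$, where $w=\overline{0}_S$ and one uses $a+\overline{0}_S=a$, $b+\overline{0}_S=b$ together with the fact that $a\lessdot b$ already holds; and (ii) keeping the index sets of the various $\lessdot$-relations disjoint when concatenating them into the single family indexed by $(t,i)$, which is why I carry the superscript $t$ on all the data.
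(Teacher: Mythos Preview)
Your argument is correct and follows essentially the same route as the paper: both use the intermediate elements of the chain from Proposition~\ref{prop:reduce2leq} as the additive witness. The paper phrases it as the observation that $a\lessdot c\lessdot b$ implies $a+c\lessdot c+b$ (taking $w=c$) and then invokes induction on the chain length, whereas you unwind that induction by writing down $w=\sum_{t=1}^{k}c_t$ directly and concatenating all the $\lessdot$-data at once; the resulting witness is the same.
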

\begin{proof}
One observes that $a\lessdot c\lessdot b$ implies $a+c\lessdot c+b$. Thus, there is some $w\in F$ (\emph{e.g.}, $w=c$) such that $a+w\lessdot b+w$. By reasoning inductively, one finds that condition (\ref{equ:manyLessdot}) also ensures the existence of such a $w$.
\end{proof}
 
The proposition below gives a recurrence characterization for the derived preorder.

\begin{proposition}
\label{prop:derivedPreorder}
For any $g,h\in F$, $g\preccurlyeq h$ if and only if the pair $(g,h)$ is in the set $L=\cup_{i=0}^\infty L_i$ with
\begin{equation}\label{Li}
\left\{
\begin{array}{lcl}
L_0&=&\{(\overline{x},\overline{y}),(\overline{0}_S,a),(a,a)\,|\,x,y\in S, x\leq y,a\in F\},\\
L_{i+1}&=&L_i\cup(\mathbb{R}_+\cdot L_i)\cup(L_i+L_i)\cup(F*L_i)\cup(L_i*F)\\
&&\cup\{(a,c)\,|\,\exists b\in F\text{ such that }(a,b),(b,c)\in L_i\},
\end{array}
\right.
\end{equation}
where \[\mathbb{R}_+\cdot L_i=\{(r\cdot b_1,r\cdot b_2)\,|\,(b_1,b_2)\in L_i,r\in\mathbb{R}_+\},\]\[L_i+L_i=\{(b_1+ c_1,b_2+c_2)\,|\,(b_1,b_2)\in L_i,(c_1,c_2)\in L_i\},\]\[F*L_i=\{(a* b_1,a* b_2)\,|\,(b_1,b_2)\in L_i,a\in F\},\] and\[L_i*F=\{(b_1* a,b_2* a)\,|\,(b_1,b_2)\in L_i,a\in F\}.\]
\end{proposition}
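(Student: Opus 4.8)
The plan is to show that the set $L = \bigcup_{i=0}^\infty L_i$ coincides with the derived preorder $\preccurlyeq$ of Definition \ref{def:derivedPreorder}, arguing the two containments separately. For the easier containment, note that each of $L_0$ and the defining clauses of $L_{i+1}$ only produces pairs that any preorder compatible with all the semialgebra operations and containing $\mathscr{I}$ must contain: $L_0$ contains $\mathscr{I}$ (via the pairs $(\overline{x},\overline{y})$ with $x\le y$), contains all diagonal pairs $(a,a)$ (reflexivity) and the pairs $(\overline{0}_S,a)$ — but wait, this last family needs to be \emph{derived}; I will in fact want to check that $\overline{0}_S\preccurlyeq a$ holds for every $a\in F$, which follows because $0_S\le 0_S$ gives $\overline{0}_S\preccurlyeq\overline{0}_S$ and then one needs $0_S\le$ something whose image is $a$ — this is not automatic, so I should double-check whether the hypothesis $1_S\ge 0_S$ from Assumption \ref{power} is being used implicitly here, or whether the $(\overline 0_S,a)$ clause is actually derivable from compatibility with multiplication by $\overline 0_S\cdot$(anything) together with zero-sum-freeness. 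Assuming that is in place, an induction on $i$ shows $L_i\subseteq\{(a,b):a\preccurlyeq b\}$: each clause building $L_{i+1}$ from $L_i$ is exactly one of the closure properties (scalar multiplication, addition, left and right multiplication, transitivity) that $\preccurlyeq$ satisfies by Definition \ref{def:derivedPreorder}. Hence $L\subseteq\;\preccurlyeq$.

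For the reverse containment $\preccurlyeq\;\subseteq L$, I would invoke the minimality of $\preccurlyeq$: it suffices to verify that $L$ itself is a preorder relation on $F$, compatible with all the semialgebra operations, and that $\mathscr{I}\subseteq L$. Membership $\mathscr{I}\subseteq L_0\subseteq L$ is immediate. Reflexivity is immediate since $(a,a)\in L_0$ for all $a\in F$. Transitivity: if $(a,b),(b,c)\in L$, then both lie in some common $L_i$ (the sequence $\{L_i\}$ is increasing), so $(a,c)\in L_{i+1}\subseteq L$ by the transitivity clause. Compatibility with addition: if $(b_1,b_2),(c_1,c_2)\in L$, both are in some $L_i$, so $(b_1+c_1,\,b_2+c_2)\in L_{i+1}\subseteq L$ via the $L_i+L_i$ clause; taking $c_1=c_2=a$ (legitimate since $(a,a)\in L_0\subseteq L_i$) gives $(a+b_1,\,a+b_2)\in L$, which is the form required in \eqref{preoc}. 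Compatibility with left and right multiplication and with scalar multiplication is handled the same way using the $F*L_i$, $L_i*F$, and $\mathbb{R}_+\cdot L_i$ clauses respectively. Therefore $L$ is a preorder compatible with the semialgebra operations and containing $\mathscr{I}$, so $\preccurlyeq\;\subseteq L$, and equality follows.

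The only genuinely delicate point — and the place I expect to spend the most care — is the presence of the pairs $(\overline{0}_S,a)$ in $L_0$: I need them there so that the $F*L_i$ and $L_i*F$ clauses actually generate all pairs of the form appearing in the definition of $\lessdot$ (cf. Proposition \ref{prop:reduce2leq}), but I must also confirm that each such pair is genuinely below $\preccurlyeq$, i.e. that $\overline 0_S\preccurlyeq a$ for every $a\in F$. If this were not derivable, $L$ would be strictly larger than $\preccurlyeq$ and the proposition would be false; so the crux is to show $\overline 0_S\preccurlyeq a$. I expect this to come from Assumption \ref{power}'s clause $1_S\ge 0_S$: it gives $\overline 1_S\succcurlyeq\overline 0_S$ in $F$, hence $a = a*\overline 1_S\succcurlyeq a*\overline 0_S = \overline 0_S$ by compatibility of $\preccurlyeq$ with multiplication and the fact that $\overline 0_S$ is absorbing (which holds because $0_S\circledast$ anything lies in $\mathcal O$, Proposition \ref{0class}). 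Once this observation is recorded, the rest of the argument is the routine bookkeeping sketched above: an induction for one containment and a "check the closure axioms" verification for the other, exactly paralleling the proofs of Propositions \ref{prop:anotherDEFofWO} and \ref{prop:anotherDEFofR}.
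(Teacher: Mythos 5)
Your proposal is correct and follows essentially the same two-step argument as the paper: one containment by the minimality of $\preccurlyeq$ (verify $L$ is a compatible preorder containing $\mathscr{I}$), the other by induction on $i$. You also flag and resolve the same key subtlety the paper highlights, namely that $\overline{0}_S \preccurlyeq a$ for all $a\in F$ follows from $\overline{0}_S\preccurlyeq\overline{1}_S$ and compatibility with multiplication (so the $(\overline{0}_S,a)$ pairs in $L_0$ do lie below $\preccurlyeq$).
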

\begin{proof}
To prove that the derived preorder is contained in the binary relation $L$ via its minimality, we need to show that $L$ is a preorder relation that is compatible with all the semialgebra operations and contains the set $\mathscr{I}=\{(\overline{x},\overline{y})\,|\,x\leq y\text{ in }S\}$. It is reflexive since $(a,a)\in L_0$ for every $a\in F$, while its transitivity follows from the last component of the union in the definition of $L_{i+1}$ in equations (\ref{Li}). Therefore, $L$ is a preorder. Clearly, we have $\mathscr{I}\subset L_0\subset L$. The fact that $L$ is compatible with all the semialgebra operations is straightforward if one notices the components 
\[(\mathbb{R}_+\cdot L_i)\cup(L_i+L_i)\cup(F*L_i)\cup(L_i*F)\]in the definition of $L_{i+1}$.

Now we prove the converse, that is, every $(g,h)\in L$ satisfies $g\preccurlyeq h$. It suffices to show that for every natural number $i$, every $(g,h)\in L_i$ satisfies $g\preccurlyeq h$. This is clearly true once we prove it inductively on the index $i$. One thing that should be noticed while proving the $i=0$ case is that $\overline{0}_s\preccurlyeq a$ for every $a\in F$ since $\overline{0}_S\preccurlyeq\overline{1}_S$ and $a*\overline{0}_S\preccurlyeq a*\overline{1}_S$.
\end{proof}

In the rest of the paper, we sometimes use equations (\ref{Li}), \emph{i.e.}, the definition of the relation $L$, as the definition of the derived preorder.

Recall that, for a commutative zero-divisor-free semiring $K$ with preorder relation ``$\leq$", the  derived preorder relation ``$\preccurlyeq$" on the semiring $K^\text{fr}$ of the fractions of $K$ (see Appendix \ref{coincide}) is defined as follows: for any $x,y\in K$ and any $a,b\in K\backslash\{0_K\}$,
\begin{equation}\label{compare}
\frac{x}{a}\preccurlyeq\frac{y}{b}\text{ \hspace{2.5mm}if and only if \hspace{3mm} }\exists t\in K\backslash\{0_K\},x*b*t\leq y*a*t \text{ (\cite{fritz2021generalization}, page 13)}.
\end{equation}
If $K$ is a commutative zero-sum-free semialgebra, then the derived preorder relation  ``$\preccurlyeq$" can naturally be defined as in (\ref{compare}), too. In Appendix \ref{fracorder} Proposition \ref{preorderisom}, we  indicate that, when $S$ is commutative, and if one takes $K=S$ in (\ref{compare}), then the derived preorder relation for $F$ given in Definition \ref{def:derivedPreorder} coincides with the definition in (\ref{compare}) for $S^{\text{fr}}$. 
This explains the motivation of  Definition \ref{def:derivedPreorder}.

The following proposition is important for the proof of the main theorem:
\begin{proposition}\label{poweruniversal}
If $u\in S$ is power universal in $S$ $($with respect to $\leq)$, then $\overline{u}$ is power universal in $F$ $($with respect to $\preccurlyeq)$. 
\end{proposition}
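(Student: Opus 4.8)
The plan is to verify directly that $\overline{u}$ satisfies the defining conditions of a power universal element (Definition \ref{poweruni}) in $F$ with respect to $\preccurlyeq$, namely that $\overline{u}\succcurlyeq\overline{1}_S$ and that for every nonzero $z\in F$ there is a $k\in\mathbb{N}$ with $\overline{u}^k*z\succcurlyeq\overline{1}_S$, $z*\overline{u}^k\succcurlyeq\overline{1}_S$, and $\overline{u}^k\succcurlyeq z$. The first requirement, $\overline{u}\succcurlyeq\overline{1}_S$, is immediate: $u\geq 1_S$ in $S$, so $(\overline{1}_S,\overline{u})\in\mathscr{I}$, hence $\overline{1}_S\preccurlyeq\overline{u}$. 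Also note $\overline{1}_S\neq\overline{0}_S$ since $1_S\notin\mathcal{O}$ (Proposition \ref{0class} together with Assumption \ref{assfree}), so the setup for Definition \ref{poweruni} in $F$ is legitimate. The bulk of the work is to handle an arbitrary nonzero fraction $z\in F$.

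The first key step is a structural reduction: write $z=\overline{w}$ for some $w\in W$, and argue by induction on the number of formal operations in $w$ (using the description of $W$ in Proposition \ref{prop:anotherDEFofWO}) that $z$ can be built from elements $\overline{s}$, $s\in S$ nonzero, using $+$, $*$, $\cdot$ and inversion, and that at each stage of the build-up one can produce an exponent $k$ that works. For the base case $w=s\in S$ nonzero, power universality of $u$ in $S$ gives $k$ with $u^k*s\geq 1_S$, $s*u^k\geq 1_S$, $u^k\geq s$; applying the canonical homomorphism and the fact that $\preccurlyeq$ contains the image of $\leq$, these become the three desired inequalities for $\overline{s}$. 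The inductive step then needs, for each formal operation, a way to combine the exponents already obtained for the sub-expressions. The additive, scalar, and multiplicative cases are routine from compatibility of $\preccurlyeq$ with the operations together with $\overline{1}_S\preccurlyeq\overline{u}$ (so $\overline{u}^{k}\preccurlyeq\overline{u}^{k'}$ for $k\le k'$, allowing one to pass to a common exponent).

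The step I expect to be the main obstacle is the inversion case: given $z\succcurlyeq\overline{0}_S$ handled for $a$ with exponent $k$ (so $\overline{u}^k*a\succcurlyeq\overline{1}_S$, $a*\overline{u}^k\succcurlyeq\overline{1}_S$, $\overline{u}^k\succcurlyeq a$, $a\notin\mathcal{O}$), one must produce an exponent for $a^{-1}$. Here is where one genuinely uses that $F$ is a semialgebra with every nonzero element invertible (Proposition \ref{prop:fraction}): from $\overline{u}^k*a\succcurlyeq\overline{1}_S$ one wants to multiply on suitable sides by $a^{-1}$ and by powers of $\overline{u}^{-1}$; but $\overline{u}^{-1}$ need not be monotone-compatible in the naive way, so instead one should use $a\preccurlyeq\overline{u}^k$ and invert it. The clean route is: since $\overline{u}^k*a\succcurlyeq\overline{1}_S$, multiply both sides on the left by $a^{-1}*\overline{u}^{-k}$... which again introduces $\overline{u}^{-1}$. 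To avoid this, I would instead argue that it suffices to prove the three inequalities for $a^{-1}$ after multiplying through by a large enough power of $\overline{u}$, i.e. show directly $\overline{u}^{m}*a^{-1}\succcurlyeq\overline{1}_S$ for suitable $m$ by writing $\overline{u}^{m}*a^{-1} = (\overline{u}^{m-k})*(\overline{u}^{k}*a^{-1})$ and bounding $\overline{u}^k * a^{-1}$ using $a\preccurlyeq \overline{u}^k$ composed with $a^{-1}$ on both sides: from $a\preccurlyeq\overline{u}^k$ we get $a^{-1}*a*a^{-1}\preccurlyeq a^{-1}*\overline{u}^k*a^{-1}$, i.e. $a^{-1}\preccurlyeq a^{-1}*\overline{u}^k*a^{-1}$, which is not yet a bound by a power of $\overline{u}$ alone; so one iterates, using at the innermost level that $a^{-1}$ itself (being a nonzero fraction with strictly fewer... ) — this is circular, so the honest fix is to first establish a lemma that every nonzero $z\in F$ satisfies $\overline{u}^{k}\succcurlyeq z \succcurlyeq \overline{u}^{-k}$ for some $k$ and that this class of ``$\overline{u}$-bounded'' elements is closed under inversion, the latter being exactly the content one proves by symmetry of the bound under $z\mapsto z^{-1}$. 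I would carry this out as a standalone induction: define $z$ to be $\overline{u}$-dominated if $\exists k$ with $\overline{u}^k\succcurlyeq z$ and $z*\overline{u}^k\succcurlyeq \overline{1}_S$ and $\overline{u}^k*z\succcurlyeq\overline{1}_S$; show the generating set $\{\overline{s}: s\in S\setminus\{0_S\}\}$ consists of $\overline{u}$-dominated elements (base case above), and show the set of $\overline{u}$-dominated elements is closed under $+,*,\cdot$ and under inversion — inversion closure following because the three conditions are visibly invariant under $z\mapsto z^{-1}$ up to swapping the roles of the left/right conditions and re-deriving $\overline{u}^k\succcurlyeq z^{-1}$ from $\overline{u}^k*z\succcurlyeq\overline{1}_S$ by left-multiplying by $z^{-1}$ and right-multiplying by $\overline{u}^{-k}$, then absorbing the stray $\overline{u}^{-k}$ using $z\preccurlyeq\overline{u}^k$ i.e. $\overline{u}^{-k}\text{-type bounds}$ — concretely $z^{-1}\preccurlyeq z^{-1}*\overline{u}^k*z^{-1}\preccurlyeq \ldots$, iterated finitely since only finitely many operations built $z$. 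Since $F=W/R$ and every element is $\overline{w}$ for $w$ built from $S$ by finitely many operations, every nonzero $z\in F$ is $\overline{u}$-dominated, which is precisely the assertion that $\overline{u}$ is power universal in $F$. $\qed$
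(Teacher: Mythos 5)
Your overall strategy — induct on the number of formal operations building $w\in W$, using Proposition \ref{prop:anotherDEFofWO}, and verify the three defining inequalities case by case — is essentially the same as the paper's. But two of your cases as written don't close.

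The inversion case, which you flag as the main obstacle, actually has a one-line solution that you repeatedly circle past and then abandon for a circular iteration. You propose to derive $\overline{u}^k\succcurlyeq z^{-1}$ from $\overline{u}^k*z\succcurlyeq\overline{1}_S$ by ``left-multiplying by $z^{-1}$ and right-multiplying by $\overline{u}^{-k}$''; that yields $z^{-1}*\overline{u}^k*z*\overline{u}^{-k}\succcurlyeq z^{-1}*\overline{u}^{-k}$, which goes nowhere (you cannot commute to cancel), and your ``absorb the stray $\overline{u}^{-k}$ by iterating finitely'' step is genuinely circular — nothing bounds the iteration. The correct move (and the one the paper makes) is to right-multiply $\overline{u}^k*z\succcurlyeq\overline{1}_S$ by $z^{-1}$, giving $\overline{u}^k\succcurlyeq z^{-1}$ directly, and to multiply $\overline{u}^k\succcurlyeq z$ by $z^{-1}$ on the right and on the left, giving $\overline{u}^k*z^{-1}\succcurlyeq\overline{1}_S$ and $z^{-1}*\overline{u}^k\succcurlyeq\overline{1}_S$. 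No powers of $\overline{u}^{-1}$ and no iteration are needed; the compatibility of $\preccurlyeq$ with multiplication by \emph{any} element of $F$ (including $z^{-1}$) is all that is used. Your worry that ``$\overline{u}^{-1}$ need not be monotone-compatible'' is a red herring for this: you never need to compare $\overline{u}^{-1}$ with anything, only to multiply inequalities by a fixed element.

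The addition and scalar cases are also not ``routine'' in the way you assert. Passing to a common exponent $\hat k$ gives $\overline{u}^{\hat k}+\overline{u}^{\hat k}=2\cdot\overline{u}^{\hat k}\succcurlyeq\overline{b}+\overline{c}$, which is not yet of the form $\overline{u}^m\succcurlyeq\overline{b}+\overline{c}$; you still have to dominate the scalar factor $2$ (and, in the scalar case, an arbitrary $r>0$ and $1/r$) by a power of $\overline{u}$. This requires one more invocation of power universality of $u$ in $S$ applied to the nonzero element $2\cdot 1_S$ (respectively $r\cdot 1_S$, $r^{-1}\cdot 1_S$), or equivalently the paper's device of first proving that $\lambda\cdot\overline{u}$ is power universal for a fixed $\lambda>1$, which absorbs all scalar bookkeeping uniformly, and then transferring back to $\overline{u}$ at the end. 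As written, your sketch skips exactly the step that motivates that device.
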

\begin{proof}
We may assume that $u\neq0_S$, since otherwise, both the preorder relations $\leq$ on $S$ and $\preccurlyeq$ on $F$ would be trivial. In that case, the property that we want to prove is trivially true. Since $(S\backslash\{0_S\})\cap\mathcal{O}=\emptyset$, $u\neq0_S$ clearly implies $\lambda\cdot\overline{u}\neq\overline{0}_S$ for any fixed real number and $\lambda>1$. In the following, we first prove that $\lambda\cdot\overline{u}$ is power universal in $F$, then indicate that $\overline{u}$ is power universal, too.

For any $x\in W\backslash\mathcal{O}$ (or equivalently, any nonzero $\overline{x}\in F$), we need to show that there is a number  $k\in\mathbb{N}$ such that
\begin{equation}\label{lpu}
(\lambda\cdot \overline{u})^k*\overline{x}\succcurlyeq1_S,\;\;\overline{x}*(\lambda\cdot \overline{u})^k\succcurlyeq1_S\;\text{ and }\;(\lambda\cdot \overline{u})^k\succcurlyeq \overline{x}.
\end{equation}
Suppose that $x\in W_i$ for $i\in\mathbb{N}$ with $W_i$ defined as in Proposition \ref{prop:anotherDEFofWO}, we then prove inductively on the index $i$:
\\

\noindent{}\emph{The $``i=0"$ case:}

When $i=0$, $x\in W_0=S$ and $x\neq0_S$. Since $u$ is power universal in $S$, there is an integer $k\in\mathbb{N}$ such that inequalities (\ref{PU}) hold. By the definition of $\preccurlyeq$ in equations (\ref{Li}), we also have
\[
\overline{u}^k*\overline{x}\succcurlyeq1_S,\;\;\overline{x}* \overline{u}^k\succcurlyeq1_S\;\text{ and }\;\overline{u}^k\succcurlyeq \overline{x}.
\] Since $\lambda>1$, $\lambda\cdot\overline{u}=\overline{u}+(\lambda-1)\cdot\overline{u}\succcurlyeq\overline{u}$. Thus $(\lambda\cdot\overline{u})^k\succcurlyeq\overline{u}^k$, and we obtain the inequalities in $(\ref{lpu})$.
\\

\noindent{}\emph{The induction step:}

Assume that for any $0\leq i\leq \ell$ ($\ell\in\mathbb{N}$) and any $x\in W_i\backslash\mathcal{O}$, there is an integer  $k\in\mathbb{N}$ such that the inequalities in (\ref{lpu}) hold. We prove that this is also valid for $i=\ell+1$:

Set $x\in W_{\ell+1}\backslash\mathcal{O}$. Then there are several cases corresponding to the components in the definition of $W_{\ell+1}$ in equation (\ref{W}):

1) $x\in W_\ell$: The conclusion follows directly from the inductive assumption.

2) $x\in W_\ell\oplus W_\ell$: Then $\overline{x}=\overline{b}+\overline{c}$ with $b,c\in W_\ell$. Since $\overline{x}\neq\overline{0}_S$, $\overline{b}\neq\overline{0}_S$ or $\overline{c}\neq\overline{0}_S$. If one of $\overline{b}$ and $\overline{c}$ equals $\overline{0}_S$, say, $\overline{b}=\overline{0}_S$, then $\overline{x}=\overline{c}$. According to the inductive assumption, for $c\in W_\ell$, there is an integer $k\in\mathbb{N}$ such that (\ref{lpu}) hold with ``$\overline{x}$" replaced by ``$\overline{c}$".  
Therefore, inequalities in (\ref{lpu}) hold for $\overline{x}$ itself as well. Now, set $\overline{b}\neq\overline{0}_S$ and $\overline{c}\neq\overline{0}_S$, then we have $k_1,k_2\in\mathbb{N}$ such that 
\begin{equation}\label{six}
\left\{
\begin{array}{rcl}
(\lambda\cdot \overline{u})^{k_1}*\overline{b}\succcurlyeq\overline{1}_S,&\overline{b}*(\lambda\cdot \overline{u})^{k_1}\succcurlyeq\overline{1}_S,&(\lambda\cdot \overline{u})^{k_1}\succcurlyeq\overline{b},\\
(\lambda\cdot \overline{u})^{k_2}*\overline{c}\succcurlyeq\overline{1}_S,&\overline{c}*(\lambda\cdot \overline{u})^{k_2}\succcurlyeq\overline{1}_S,&(\lambda\cdot \overline{u})^{k_2}\succcurlyeq\overline{c}.
\end{array}
\right.
\end{equation}
Set $\hat{k}=\max\{k_1,k_2\}$, and choose sufficiently large  $k\geq \hat{k}$ satisfying $\lambda^k\geq2\lambda^{\hat{k}}$, then
\[
(\lambda\cdot \overline{u})^k*\overline{x}\succcurlyeq(\lambda\cdot \overline{u})^{\hat{k}}*\overline{x}=(\lambda\cdot \overline{u})^{\hat{k}}*(\overline{b}+\overline{c})\succcurlyeq\overline{1}_S+\overline{1}_S\succcurlyeq\overline{1}_S.
\]
Similarly, we have $\overline{x}*(\lambda\cdot \overline{u})^k\succcurlyeq\overline{1}_S$. On the other hand,
\[
(\lambda\cdot \overline{u})^k=\lambda^k\cdot\overline{u}^k\succcurlyeq(2\lambda^{\hat{k}})\cdot\overline{u}^{\hat{k}}=2\cdot(\lambda\cdot\overline{u})^{\hat{k}}\succcurlyeq\overline{b}+\overline{c}=\overline{x}.
\]

3) $x\in W_\ell\circledast W_\ell$: Then $\overline{x}=\overline{b}*\overline{c}$ with $b,c\in W_\ell\backslash\mathcal{O}$ and the inequalities (\ref{six}) hold for some $k_1,k_2\in\mathbb{N}$. Set $k=k_1+k_2$. Multiplying the rightmost inequalities in each rows of (\ref{six}), we get $(\lambda\cdot \overline{u})^{k}\succcurlyeq\overline{b}*\overline{c}=\overline{x}$. Multiplying the leftmost inequality in the first row and the middle inequality in the second row of (\ref{six}), we obtain $(\lambda\cdot \overline{u})^{k_1}*\overline{b}*\overline{c}*(\lambda\cdot \overline{u})^{k_2}\succcurlyeq\overline{1}_S$. Then, by multiplying $(\lambda\cdot \overline{u})^{-k_1}$ and $(\lambda\cdot \overline{u})^{-k_2}$ from the left and the right sides respectively, one derives $\overline{x}\succcurlyeq(\lambda\cdot \overline{u})^{-k}$. From that, one obtains both $(\lambda\cdot \overline{u})^{k}*\overline{x}\succcurlyeq\overline{1}_S$ and $\overline{x}*(\lambda\cdot \overline{u})^{k}\succcurlyeq\overline{1}_S$ by simply multiplying $(\lambda\cdot \overline{u})^{k}$ from the left and the right sides respectively.

4) $x\in\mathbb{R}_+\odot W_\ell$: In this case, $x=r\cdot\overline{b}$ with $b\in W_\ell\backslash\mathcal{O}$ and $r$ is a positive real number. Hence there is an integer $k_1\in\mathbb{N}$ such that the inequalities in the first row or (\ref{six}) hold. Choose a  sufficiently large integer $m\in\mathbb{N}$ satisfying $\lambda^m\geq\max\{r,1/r\}$ and set $k=m+k_1$. Then we have
\[
\left.
\begin{array}{rcl}
(\lambda\cdot\overline{u})^k*\overline{x}&\succcurlyeq&(\lambda\cdot\overline{1}_S)^m*(\lambda\cdot\overline{u})^{k_1}*(r\cdot\overline{b})\\
&\succcurlyeq&(\lambda^mr)\cdot((\lambda\cdot\overline{u})^{k_1}*\overline{b})\\
&\succcurlyeq&1_\mathbb{R}\cdot\overline{1}_S\\
&=&\overline{1}_S,
\end{array}
\right.
\]
and similarly
\[
\left.
\begin{array}{rcl}
\overline{x}*(\lambda\cdot\overline{u})^k&\succcurlyeq&(r\cdot\overline{b})*(\lambda\cdot\overline{u})^{k_1}*(\lambda\cdot\overline{1}_S)^m\\
&\succcurlyeq&(\lambda^mr)\cdot(\overline{b}*(\lambda\cdot\overline{u})^{k_1})\\
&\succcurlyeq&1_\mathbb{R}\cdot\overline{1}_S\\
&=&\overline{1}_S.
\end{array}
\right.
\]
Moreover, we have $(\lambda\cdot\overline{u})^k=(\lambda\cdot\overline{u})^m*(\lambda\cdot\overline{u})^{k_1}\succcurlyeq(\lambda\cdot\overline{1}_S)^m*\overline{b}=\lambda^m\cdot\overline{b}\succcurlyeq r\cdot\overline{b}=\overline{x}$.

5) $x\in(W_\ell\backslash\mathcal{O}_\ell)^{-1}$: Then $\overline{x}=\overline{b}^{-1}$ with $b\in W_\ell\backslash\mathcal{O}_\ell$ satisfying the inequalities in the first row of (\ref{six}) for some integer $k_1\in\mathbb{N}$. From the rightmost inequality we obtain both $(\lambda\cdot\overline{u})^{k_1}*\overline{b}^{-1}\succcurlyeq\overline{1}_S$ and $\,\overline{b}^{-1}*(\lambda\cdot\overline{u})^{k_1}\succcurlyeq\overline{1}_S$ by multiplying $\overline{b}^{-1}$ from its right and left sides, respectively. From the leftmost one we obtain $(\lambda\cdot\overline{u})^{k_1}\succcurlyeq\overline{b}^{-1}$ by multiplying $\overline{b}^{-1}$ from its right side.
\\\vspace{-2mm}

By now we have shown that $\lambda\cdot\overline{u}$ is power universal in $F$ since the inequalities in (\ref{lpu}) hold. Now we show that $\overline{u}$ is also power universal. Apparently, $(\lambda\cdot1_S)^k\neq0_S$, hence there is $k_0\in\mathbb{N}$ such that $u^{k_0}\geq(\lambda\cdot1_S)^k$. Therefore $\overline{u}^{k_0}\succcurlyeq(\lambda\cdot\overline{1}_S)^k$ in $F$. Using the inequalities in (\ref{lpu}), we have $\overline{u}^{k_0+k}\succcurlyeq(\lambda\cdot\overline{1}_S)^k*(\lambda^{-k}\cdot\overline{x})=\overline{x}$, and $\overline{u}^{k_0+k}*\overline{x}\succcurlyeq(\lambda\cdot\overline{1}_S)^k*\overline{u}^k*\overline{x}=(\lambda\cdot\overline{u})^k*\overline{x}\succcurlyeq1_S$. Similarly, we also have $\overline{x}*\overline{u}^{k_0+k}\succcurlyeq1_S$. Hence $\overline{u}$ is indeed power universal.
\end{proof}
\section{Extending $\mathbb{R}_+$-Valued Homomorphisms from a Semialgebra to the Semialgebra of its Fractions}\label{sec:extend}
This section explains how a monotone homomorphism from the preordered semialgebra $S$ to some other preordered semialgebra $T$ (which shares some good properties) can be extended to the semialgebra $F$ of the fractions of $S$. In particular, the extension is always guaranteed when $T=\mathbb{R}_+$.

The following lemma due to \cite{fritz2021generalization} is straightforward:
\begin{lemma}\label{trivialnull}
Suppose that $(T,\leq_{_T})$ is a preordered semialgebra with nontrivial preorder relation $\leq_{_T}$ such that $1_T\geq_{_T}\hspace{-0.7mm}0_T$, and  $(E,\leq_{_E})$ is a preordered semialgebra with $1_E\geq_{_E}0_E$ and a power universal element $u$. Then for any monotone semialgebra homomorphism $f: E\rightarrow T$ and any nonzero $x\in E$, we have $f(x)\neq0_T$.
\end{lemma}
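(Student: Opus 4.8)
The plan is to argue by contradiction. Suppose that $f(x)=0_T$ for some nonzero $x\in E$. Since $E$ carries a power universal element $u$ and $1_E\geq_{_E}0_E$, Definition \ref{poweruni} supplies a natural number $k$ for which $u^k*x\geq_{_E}1_E$ (the other two inequalities in (\ref{PU}) will not be needed here).

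Next I would push this single inequality through $f$. Monotonicity of $f$ gives $f(1_E)\leq_{_T}f(u^k*x)$; since $f$ is a semialgebra homomorphism, the right-hand side equals $f(u^k)*f(x)=f(u^k)*0_T=0_T$ (using that $0_T$ is absorbing), while the left-hand side is $1_T$. Hence $1_T\leq_{_T}0_T$. Combined with the standing hypothesis $1_T\geq_{_T}0_T$, the elements $0_T$ and $1_T$ are equivalent with respect to $\leq_{_T}$, and then for arbitrary $a,b\in T$ the compatibility conditions (\ref{preoc}) yield $a=1_T*a\leq_{_T}0_T*a=0_T=0_T*b\leq_{_T}1_T*b=b$, so that $\leq_{_T}$ is trivial, contradicting the assumption that it is nontrivial. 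Therefore $f(x)\neq0_T$.

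I do not expect any genuine obstacle: once the power universal property is invoked the remainder is a one-line computation using that $f$ preserves products, that $0_T$ is absorbing, and that a multiplication-compatible preorder in which $0_T$ and $1_T$ are equivalent is automatically trivial. The only step deserving a moment of care is this last implication, and it is precisely there that both halves $1_T\leq_{_T}0_T$ (obtained from $f(x)=0_T$) and $1_T\geq_{_T}0_T$ (the hypothesis) are used.
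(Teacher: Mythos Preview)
Your proof is correct and follows essentially the same approach as the paper: invoke the power universal inequality $u^k*x\geq_{_E}1_E$, push it through the monotone homomorphism to obtain $1_T\leq_{_T}0_T$, and then combine this with the hypothesis $1_T\geq_{_T}0_T$ to force triviality of $\leq_{_T}$ via the compatibility rules in (\ref{preoc}). The only cosmetic difference is that the paper multiplies on the left ($y*1_T\leq_{_T}y*0_T$, etc.) while you multiply on the right, but both are covered by (\ref{preoc}).
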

\begin{proof}
There is an integer $k\in\mathbb{N}$ satisfying  
\begin{equation}\label{ukk}
u^k*x\geq_{_E}1_E.
\end{equation}
Applying the homomorphism $f$ to the inequality (\ref{ukk}), we get $f(u)^k*f(x)\geq_{_T}\hspace{-0.7mm}1_T$. Assume that $f(x)=0_T$, then we have $0_T\geq_{_T}\hspace{-0.7mm}1_T$. For any $y,z\in T$, we obtain
\begin{equation}\label{tri}
y=y*1_T\leq_{_T}\hspace{-0.7mm}y*0_T=z*0_T\leq_{_T}\hspace{-0.7mm}z*1_T=z,\end{equation}
which contradicts the assumption that $\leq_{_T}$ is nontrivial.
\end{proof}

Let $T$ be given  as in Lemma \ref{trivialnull} and suppose that $(S,\leq)$ is a preordered semialgebra satisfying Assumption \ref{power}. We assume further, throughout the rest of the paper, that $T$ is both zero-sum-free and zero-divisor-free, and that every nonzero element of $T$ is invertible (\emph{e.g.}, if $T=\mathbb{R}_+$, then all the assumptions on the semialgebra $T$ are satisfied).

For any monotone semialgebra homomorphism $f: S\rightarrow T$, the following proposition defines a map $f^W$ from the set of legal formal rational expressions $W(S)$ (Definition \ref{def:WO}) to the semialgebra $T$: 
\begin{proposition}\label{fW}
Suppose $f: S\rightarrow T$ is a monotone semialgebra homomorphism with preordered semialgebras $S$ and $T$ described as above. Let $\{W_i\}$ be the sequence defined in Proposition \ref{prop:anotherDEFofWO} such that $W=\cup_iW_i$. Then, the recursive assignment below for $f^W$ makes it a well-defined map from $W$ to $T$:

1) For any $x\in W_0=S$, set $f^W(x)=f(x)$;

2) Otherwise, suppose that $x\in W_\ell$ with $\ell=\min\{i\in\mathbb{N}\,|\,x\in W_i\}>0$, then set
\begin{equation}\label{assgn}
f^W(x)=\left\{
\begin{array}{rcl}
f^W(a)+f^W(b),&if&x=a\oplus b\in W_{\ell-1}\oplus W_{\ell-1},\\f^W(a)*f^W(b),&if&x=a\circledast b\in W_{\ell-1}\circledast W_{\ell-1},\\
r\cdot f^W(a),&if&x=r\odot a\in \mathbb{R}_+\odot W_{\ell-1},\\
(f^W(a))^{-1},&if&x=a^{-1}\in (W_{\ell-1}\backslash\mathcal{O}_{\ell-1})^{-1}.
\end{array}
\right.
\end{equation}
\end{proposition}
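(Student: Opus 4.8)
The goal is to show that the recursive rule in (\ref{assgn}) really does yield a well-defined function $f^W:W\to T$. The plan is to make precise what "well-defined" means here and then verify it by induction on the stratification $W=\bigcup_i W_i$. The subtlety is that an expression $x\in W$ may lie in several of the building blocks $W_{\ell-1}\oplus W_{\ell-1}$, $W_{\ell-1}\circledast W_{\ell-1}$, $\mathbb{R}_+\odot W_{\ell-1}$, $(W_{\ell-1}\setminus\mathcal{O}_{\ell-1})^{-1}$ used in (\ref{W}); however, as noted right after Definition \ref{def:U}, every $x\in U$ (hence every $x\in W$) either lies in $S$ or is of exactly one of the syntactic forms $a\oplus b$, $a\circledast b$, $r\odot a$, $a^{-1}$, and in that case the components $a,b$ (and the scalar $r$) are uniquely determined by $x$ as a formal expression. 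So the only thing to check is that the recursion terminates and that the value assigned does not depend on which index $\ell$ we use to "unfold" $x$; the cleanest way to organize this is to prove by strong induction on $i$ the statement "for every $x\in W_i$, the value $f^W(x)$ computed by applying (\ref{assgn}) with $\ell=\min\{j:x\in W_j\}$ is a well-defined element of $T$, and moreover for every $j\ge i$ with $x\in W_j$, unfolding $x$ as a member of $W_{j-1}\diamond W_{j-1}$ (for the appropriate operation $\diamond$) and recursing produces the same element of $T$."

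Here are the steps I would carry out. First, base case $i=0$: $W_0=S$, and $f^W(x)=f(x)\in T$ is obviously well-defined. Second, the induction step: fix $x\in W_{i+1}\setminus W_i$ (if $x\in W_i$ the inductive hypothesis applies directly), let $\ell=i+1=\min\{j:x\in W_j\}$, and split into the four cases according to the unique syntactic form of $x$. In the case $x=a\oplus b$: since $x\in W_{\ell}$ and $x\notin W_{\ell-1}$, the defining recursion (\ref{W}) forces $a,b\in W_{\ell-1}$ (they cannot both be in some earlier $W_k$, else $x\in W_{k+1}$ with $k+1<\ell$, contradicting minimality — but even if one of them is, that only lowers the index, which is fine); by the inductive hypothesis $f^W(a)$ and $f^W(b)$ are well-defined in $T$, so $f^W(a)+f^W(b)\in T$. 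The cases $x=a\circledast b$ and $x=r\odot a$ are identical. The case $x=a^{-1}$ needs one extra remark: from $x\in W_{\ell}$ and rule (\ref{W}) we get $a\in W_{\ell-1}\setminus\mathcal{O}_{\ell-1}$; by the inductive hypothesis $f^W(a)\in T$ is well-defined, and we must know $f^W(a)\ne 0_T$ so that $(f^W(a))^{-1}$ makes sense — here $T$ has every nonzero element invertible, so it suffices to show $f^W(a)\neq 0_T$. That is exactly where Lemma \ref{trivialnull} (or rather the reasoning behind it) enters; I would prove as an auxiliary claim, simultaneously in the induction, that $x\in W_i\setminus\mathcal{O}$ implies $f^W(x)\neq 0_T$, using the power universal $u$ and the fact (to be checked, or cited from the development of $\mathcal{O}$) that $f^W$ sends expressions to $f^W$-images compatibly with the operations so that $u^k\circledast x\geq 1_S$-type inequalities transfer.

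Third, the consistency-under-reindexing part: if $x\in W_{j}$ for some $j>\ell$ as well, $x$ has the same unique syntactic form, so $x=a\oplus b$ (say) with the same $a,b$, and both lie in $W_{j-1}$; by the inductive hypothesis applied at indices $<j$, $f^W(a)$ and $f^W(b)$ have the same value whether computed via their minimal indices or via $j-1$, so the sum is unchanged. Thus $f^W(x)$ is independent of the chosen unfolding index, completing the induction. The main obstacle, and the only place where the hypotheses on $T$ and the power universal element are genuinely used, is the inversion case: we must guarantee $f^W(a)$ is invertible in $T$ whenever $a\in W\setminus\mathcal{O}$, which forces us to carry the auxiliary non-vanishing claim $f^W(W\setminus\mathcal{O})\subset T\setminus\{0_T\}$ through the induction in parallel with well-definedness. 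Everything else is the routine syntactic bookkeeping that the paper elsewhere signals it is omitting.
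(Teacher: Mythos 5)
Your overall structure matches the paper's: induction on the index $i$ in the stratification $W=\bigcup_i W_i$, carrying along the auxiliary claim that $x\in W_i\setminus\mathcal{O}_i$ implies $f^W(x)\ne 0_T$, and handling the four syntactic forms case by case, with the inversion case being the one that genuinely needs the non-vanishing claim. Your remark that every expression in $W$ has a unique syntactic form (so there is no ambiguity about which clause of (\ref{assgn}) applies, nor about the components $a,b,r$) is a correct and useful observation which the paper leaves implicit.

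The gap is in how you propose to prove the auxiliary non-vanishing claim in the inductive step. You suggest ``using the power universal $u$ and the fact \ldots that $f^W$ sends expressions to $f^W$-images compatibly with the operations so that $u^k\circledast x\geq 1_S$-type inequalities transfer.'' This would be circular: the inequality $u^k\circledast x\geq 1_S$ is not even meaningful for an arbitrary formal expression $x\in W$, because the preorder $\leq$ lives only on $S$; to make sense of it you would need the derived preorder $\preccurlyeq$ on $F$ together with the facts that $\overline{u}$ is power universal in $F$ (Proposition \ref{poweruniversal}) and that $f^F$ is monotone (Proposition \ref{f^F2}) --- but those results come after and depend on the very proposition you are proving. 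The paper's route avoids this entirely: Lemma \ref{trivialnull} (and hence the power universal element of $S$) is invoked only in the base case $x\in W_0=S$, where one genuinely has the inequality $u^k*x\geq 1_S$ in $S$. For the inductive step, the non-vanishing of $f^W(x)$ when $x\notin\mathcal{O}_k$ is propagated purely through the algebraic hypotheses on $T$: zero-sum-freeness handles the $\oplus$ case, zero-divisor-freeness handles the $\circledast$ and $\odot$ cases, and for $a^{-1}$ one simply notes that $(f^W(a))^{-1}=0_T$ would force $1_T=f^W(a)*(f^W(a))^{-1}=0_T$ and hence triviality of $\leq_T$, a contradiction. You should replace your appeal to the power universal element in the inductive step with this elementary bookkeeping over $T$.
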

\begin{proof}
 It is sufficient to prove (inductively on the index $i$) that for any $i\in\mathbb{N}$ and any $x\in W_{i}$, $f^W(x)$ is a well-defined element in $T$ and $f^W(x)\neq0_T$ whenever $x\notin\mathcal{O}_i$ (thus the assignment in the last row of (\ref{assgn}) is well-defined due to the assumption that every nonzero element in $T$ is invertible and Lemma \ref{trivialnull}):

If $i=0$, $x\in W_0=S$, and $f^W(x)=f(x)$ is well-defined. When $x\notin\mathcal{O}_0=\{0_S\}$, we have $x\in S\backslash\{0_S\}$. Then Lemma \ref{trivialnull} indicates that $f^W(x)=f(x)\neq0_T$.

Suppose that for any $0\leq i<k$ ($k\in\mathbb{N}_{\geq 1}$) and any $x\in W_i$, $f^W(x)$ is a well-defined element in $T$ and $f^W(x)\neq0_T$ whenever $x\notin\mathcal{O}_i$. We then show that this is also valid for $i=k$ and any $x\in W_i$:

Let  $x\in W_k$, we may assume that $x\notin W_{k-1}$ (Since, if $x\in W_{k-1}$, then the inductive assumption implies the conclusions that we need). Hence $k\geq1$ is the minimal nature number such 
that $x\in W_k$. 
According to equation (\ref{W}), there are several cases we need to check:

i) If $x=a\oplus b\in W_{k-1}\oplus W_{k-1}$, then $f^W(x)=f^W(a)+f^W(b)$ is well-defined since $f^W(a)$ and $f^W(b)$ are well-defined. When $x\notin\mathcal{O}_k$, we have either $a\notin\mathcal{O}_{k-1}$ or $b\notin\mathcal{O}_{k-1}$. Hence one of $f^W(a)$ and $f^W(b)$ is nonzero in $T$. We conclude that $f^W(x)=f^W(a)+f^W(b)$ is also nonzero due to the fact that $T$ is zero-sum-free.
\vspace{0.5mm}

ii) If $x=a\circledast b\in W_{k-1}\circledast W_{k-1}$, then $f^W(x)=f^W(a)*f^W(b)$ is well-defined. When $x\notin\mathcal{O}_k$, we have $a,b\notin\mathcal{O}_{k-1}$. By the inductive assumption, both $f^W(a)$ and $f^W(b)$ are nonzero. Hence $f^W(x)=f^W(a)*f^W(b)$ is also nonzero (since $T$ is zero-divisor-free).

\vspace{0.5mm}

iii) If $x=r\odot a\in \mathbb{R}_+\odot W_{k-1}$, then $f^W(x)=r\cdot f^W(a)$ is well-defined. When $x\notin\mathcal{O}_k$, we have $r\neq0$ and $a\notin\mathcal{O}_{k-1}$. By the inductive assumption, $f^W(a)$ is nonzero in $T$. Hence $f^W(x)=r\cdot f^W(a)$ is also nonzero (using again the fact that $T$ is zero-divisor-free).
\vspace{0.5mm}

iv) If $x=a^{-1}\in (W_{k-1}\backslash\mathcal{O}_{k-1})^{-1}$ with $a\in W_{k-1}\backslash\mathcal{O}_{k-1}$, then $f^W(a)$ is well-defined and nonzero in $T$ by the inductive assumption. Since every nonzero element in $T$ is invertible, $f^W(x)=(f^W(a))^{-1}$ is also well-defined. Moreover, it is nonzero since $(f^W(a))^{-1}=0_T$ would imply \[1_T=f^W(a)*(f^W(a))^{-1}=f^W(a)*0_T=0_T,\] which indicates that $1_T\leq_{_T}0_T\leq_{_T}1_T$. And this would result in the triviality of $\leq_{_T}$ as shown in (\ref{tri}), contradicting one of the assumptions in Lemma \ref{trivialnull}.
\end{proof}
 Note that by formulae (\ref{assgn}) we have
\begin{equation}\label{prehomo}
\left\{
\begin{array}{rcl}
     f^W(a\oplus b)&=&f^W(a)+f^W(b), \\
     f^W(a\circledast b)&=&f^W(a)*f^W(b),\\
     f^W(r\odot a) &=&r\cdot f^W(a),\\
    f^W(c^{-1}) &=&(f^W(c))^{-1},
\end{array}
\right.
\end{equation}
for all $a,b\in W,c\in W\backslash\mathcal{O}$ and $r\in\mathbb{R}_+$, no matter what concrete value the letter $\ell$ in formulae (\ref{assgn}) takes. Therefore, Proposition \ref{fW} can be rewritten concisely as follows:
\begin{proposition}\label{fWAgain}
Suppose $f: S\rightarrow T$ is a monotone semialgebra homomorphism with preordered semialgebras $S$ and $T$ described as before. Then, the recursive assignment below for $f^W$ makes it a well-defined map from $W$ to $T$:

1) For any $x\in S$, set $f^W(x)=f(x)$;

2) If $x\in W\backslash S$ contains some formal operations, then set
\begin{equation}\label{assgnAgain}
f^W(x)=\left\{
\begin{array}{rcl}
f^W(a)+f^W(b),&if&x=a\oplus b\text{ with } a, b\in W,\\f^W(a)*f^W(b),&if&x=a\circledast b\text{ with } a, b\in W,\\
r\cdot f^W(a),&if&x=r\odot a\text{ with } r\in \mathbb{R}_+, a\in W,\\
(f^W(a))^{-1},&if&x=a^{-1}\text{ with } a\in W\backslash\mathcal{O}.
\end{array}
\right.
\end{equation}
\end{proposition}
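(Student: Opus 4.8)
The plan is to deduce Proposition \ref{fWAgain} directly from Proposition \ref{fW} and the identities (\ref{prehomo}), since the two statements differ only in how a formal operation is peeled off: Proposition \ref{fW} peels according to the minimal level $\ell$ with $x\in W_\ell$, whereas Proposition \ref{fWAgain} peels off the outermost formal operation of $x$ with no reference to the filtration $\{W_i\}$. First I would record that, by Definition \ref{def:U}, every $x\in W$ is either an element of $S$ or is of exactly one of the four forms $a\oplus b$, $a\circledast b$, $r\odot a$, $a^{-1}$, where $a,b$ (and $r$) are uniquely determined by $x$ and contain strictly fewer formal operations than $x$. Consequently the assignment (\ref{assgnAgain}) is a bona fide recursion on the finite number of formal operations in $x$, \emph{provided} that in the last clause $f^W(a)$ is an invertible element of $T$; and since $x=a^{-1}\in W$ forces $a\in W\backslash\mathcal{O}$ by Rule v) of Definition \ref{def:WO}, it is enough to know that $f^W(a)\neq0_T$ whenever $a\in W\backslash\mathcal{O}$, invertibility then being guaranteed by the standing assumption that every nonzero element of $T$ is invertible.

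The statement ``$f^W(x)\neq0_T$ for all $x\in W\backslash\mathcal{O}$'' (together with well-definedness of $f^W$ on all of $W$) is precisely what was already established inside the proof of Proposition \ref{fW}: one argues by induction on $i$ with $W=\cup_{i=0}^\infty W_i$ and $\mathcal{O}=\cup_{i=0}^\infty\mathcal{O}_i$, using Lemma \ref{lemma:WgangO} to identify, for $x\in W_i$, the condition $x\notin\mathcal{O}_i$ with $x\notin\mathcal{O}$; Lemma \ref{trivialnull} disposes of the base case $x\in S\backslash\{0_S\}$, and the zero-sum-free, zero-divisor-free and ``nonzero-implies-invertible'' properties of $T$ handle the inductive steps for $\oplus$, $\circledast$/$\odot$, and $^{-1}$ respectively. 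I would simply cite this argument rather than repeat it.

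It remains to check that (\ref{assgnAgain}) really computes the same map as (\ref{assgn}); this is exactly what (\ref{prehomo}) says, namely that the $f^W$ produced by Proposition \ref{fW} satisfies $f^W(a\oplus b)=f^W(a)+f^W(b)$, $f^W(a\circledast b)=f^W(a)*f^W(b)$, $f^W(r\odot a)=r\cdot f^W(a)$ and $f^W(c^{-1})=(f^W(c))^{-1}$ for \emph{all} $a,b\in W$, $r\in\mathbb{R}_+$, $c\in W\backslash\mathcal{O}$ --- i.e.\ for the outermost operation of $x$, not merely for the decomposition dictated by the minimal level. By unique readability these equations determine $f^W(x)$ from its values on strictly simpler expressions, so (\ref{assgn}) and (\ref{assgnAgain}) define one and the same function. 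The one point that genuinely needs care, and the reason the nonvanishing clause must be carried through the induction rather than added afterwards, is the $x=a^{-1}$ case: there the well-definedness of $f^W$ is entangled with the nontriviality of $\leq_{_T}$ (through Lemma \ref{trivialnull}) and cannot be separated from it; everything else is a routine structural induction.
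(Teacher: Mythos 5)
Your proposal is correct and follows essentially the same route the paper takes: the paper justifies Proposition \ref{fWAgain} only by the one-sentence remark that formulae (\ref{prehomo}) hold ``no matter what concrete value the letter $\ell$ in formulae (\ref{assgn}) takes,'' which is exactly your observation that (\ref{prehomo}) applies to the outermost formal operation and not merely to the decomposition at minimal level. You usefully spell out the two points the paper leaves implicit --- that unique readability of formal expressions makes (\ref{assgnAgain}) a well-founded recursion on the number of formal operations, and that the inverse clause is legitimate only because the proof of Proposition \ref{fW} already supplies $f^W(a)\neq 0_T$ for $a\in W\backslash\mathcal{O}$ (ultimately via Lemma \ref{trivialnull}) --- but this is elaboration of the same argument rather than a different one.
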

Equations (\ref{prehomo}) are useful in the proof of the following proposition:

\begin{proposition}\label{f^F}
For any monotone semialgebra homomorphism $f: S\rightarrow T$, the map $f^F: F\rightarrow T,\, \overline{x}\mapsto f^W(x)\;(\forall x\in W)$ is well-defined.
\end{proposition}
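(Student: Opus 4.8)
The plan is to show that the map $f^F$ is well-defined, i.e., that the assignment $\overline{x}\mapsto f^W(x)$ does not depend on the choice of representative $x$ of the equivalence class $\overline{x}\in F=W/R$. Concretely, I must verify that for all $a,b\in W$, if $(a,b)\in R$ then $f^W(a)=f^W(b)$. Equivalently, setting $N=\{(a,b)\in W\times W\,|\,f^W(a)=f^W(b)\}$, it suffices to show $R\subseteq N$.

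The natural route is to use the minimality characterization of $R$ from Definition~\ref{def:R} (or, alternatively, the recursive description in Proposition~\ref{prop:anotherDEFofR}). First I would check that $N$ is an equivalence relation on $W$ (reflexive, symmetric, transitive — immediate from the corresponding properties of equality in $T$). Next I would check that $N$ is closed under the formal operations $\oplus$, $\circledast$, $\odot$, and ${}^{-1}$: if $f^W(a)=f^W(b)$ and $f^W(c)=f^W(d)$, then by equations~(\ref{prehomo}) we have $f^W(a\oplus c)=f^W(a)+f^W(c)=f^W(b)+f^W(d)=f^W(b\oplus d)$, and similarly for $\circledast$ and $\odot$; for the inversion, if moreover $a,b\notin\mathcal{O}$, then $f^W(a^{-1})=(f^W(a))^{-1}=(f^W(b))^{-1}=f^W(b^{-1})$, which is legitimate since $f^W(a)=f^W(b)\neq0_T$ by Proposition~\ref{fW} (as $a\notin\mathcal{O}$). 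Finally I would verify $\cup_{i=1}^6 A_i\subseteq N$, which, by the minimality of $R$, yields $R\subseteq N$ and hence the claim.

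Checking $\cup_{i=1}^6 A_i\subseteq N$ splits into the obvious pieces. For $A_1,A_2,A_3$ (the semialgebra axioms): each generating pair $(p,q)$ has the property that, expanding $f^W$ over the formal operations via~(\ref{prehomo}) down to elements of $S$, both $f^W(p)$ and $f^W(q)$ become the \emph{same} expression built from $+,*,\cdot$ applied to values of $f$ on elements of $S$, because the corresponding identity holds in the semialgebra $T$; so $f^W(p)=f^W(q)$. For $A_4$ (the inversion axioms, e.g.\ $a^{-1}\circledast a\mapsto 1_S$): one uses $f^W(a^{-1}\circledast a)=(f^W(a))^{-1}*f^W(a)=1_T=f^W(1_S)$, valid since $f^W(a)\neq0_T$ for $a\in W\backslash\mathcal{O}$, and likewise for the other three generators, all of which are elementary identities about inverses in $T$. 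For $A_5=\{(a,b)\in Q\times Q\,|\,a_S=b_S\}$: one shows by induction on the number of operations that for any $a\in Q$ (an expression without formal inversion), $f^W(a)=f(a_S)$ — this follows because $f$ is a semialgebra homomorphism, so $f$ intertwines $+,*,\cdot$ in $S$ with the same operations in $T$, exactly matching the recursive clauses of~(\ref{assgnAgain}); hence $a_S=b_S$ forces $f^W(a)=f(a_S)=f(b_S)=f^W(b)$. For $A_6=\mathcal{O}\times\{0_S\}$: one shows by induction on the index $i$ in the description $\mathcal{O}=\cup_i\mathcal{O}_i$ of Proposition~\ref{prop:anotherDEFofWO} that $a\in\mathcal{O}$ implies $f^W(a)=0_T=f^W(0_S)$, the induction step using that $T$ is zero-divisor-free (for $a=0_S\circledast c$ or $a=0_\mathbb{R}\odot c$ one gets $f^W(a)=0_T\cdot(\cdot)=0_T$ directly, and for $a=b\oplus c$ with $b,c\in\mathcal{O}$ one gets $f^W(a)=0_T+0_T=0_T$).

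The main obstacle is bookkeeping rather than any conceptual difficulty: one must be careful that every application of the inversion clause of $f^W$ — both in verifying closure of $N$ under ${}^{-1}$ and in handling $A_4$ — is only ever invoked on expressions in $W\backslash\mathcal{O}$, so that the relevant values in $T$ are genuinely nonzero and invertible; this is precisely where Proposition~\ref{fW} (specifically, $f^W(x)\neq0_T$ for $x\notin\mathcal{O}$, which rests on Lemma~\ref{trivialnull} and the zero-divisor-freeness of $T$) does the work. Once that is kept straight, the rest is a routine induction using~(\ref{prehomo}) and the semialgebra identities valid in $T$.
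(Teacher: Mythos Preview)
Your proposal is correct and follows essentially the same approach as the paper: both verify that $f^W$ respects the relation $R$ by checking the generators $A_1$--$A_6$ and closure under the formal operations, invoking equations~(\ref{prehomo}) throughout and Proposition~\ref{fW} for the nonvanishing needed in the inversion clauses. The only cosmetic difference is that you package the argument via the minimality characterization of $R$ in Definition~\ref{def:R} (showing $R\subseteq N$ for $N=\{(a,b)\,|\,f^W(a)=f^W(b)\}$), whereas the paper unwinds the same content as an explicit induction on the recursive hierarchy $R=\cup_i R_i$ of Proposition~\ref{prop:anotherDEFofR}.
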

\begin{proof}
For any $x,y\in W$ with $\overline{x}=\overline{y}$ (that is, $(x,y)\in R$ with $R$ in Definition \ref{def:R}), we will show that $f^W(x)=f^W(y)$. By Proposition \ref{prop:anotherDEFofR}, we have $R=\cup_{i=0}^\infty R_i$ with $R_i$ defined therein. Hence it is sufficient to prove that for any $i\in\mathbb{N}$ and any $x,y\in W$ satisfying $(x,y)\in R_i$, we have $f^W(x)=f^W(y)$. The proof is inductive on the index $i$:

\noindent{}\emph{The $``i=0"$ case:}

Note that $R_0=(\cup_{j=1}^6A_j)\bigcup\{(a, a)\,|\,a\in W\}$ as in Proposition \ref{prop:anotherDEFofR}, there are several sub-cases:
\begin{enumerate}
\item 
If $(x,y)\in \cup_{j=1}^4A_j\subset R_0$, $f^W(x)=f^W(y)$ follows from (\ref{prehomo}). 

\item 
If $(x,y)\in A_5$, then $x,y\in Q$ and $x_S=y_S$. We claim that, for any $z\in Q$, it holds that $f^W(z)=f(z_S)$. Instead of giving a formal proof, we only provide a simple example to illustrate the claim above: if $z=s_1\oplus((r\odot s_2)\circledast s_3)\in Q$ for some elements $s_1,s_2,s_3\in S$ and a real number  $r\in\mathbb{R}_+$, then clearly we have

\[
\left.
\begin{array}{rcl}
f^W(z)&=&f(s_1)+((r\cdot f(s_2))*f(s_3))\\
&=&f(s_1+((r\cdot s_2)*s_3))\\
&=&f(z_S),
\end{array}
\right.
\]
where the first equality is due to the definition (\ref{assgn}) of $f^W$ and the identities in (\ref{prehomo}), while the second one is valid since $f$ is a semialgebra homomorphism from $S$ to $T$. Thus $f^W(x)=f(x_S)=f(y_S)=f^W(y)$.

\item 
If $(x,y)\in \{(a,a)\,|\,a\in W\}$, then we have  $x=y$ and  $f^W(x)=f^W(y)$. 

\item 
If $(x,y)\in A_6$, then we have $x\in\mathcal{O}$ and $y=0_S$. Since $f^W(0_S)=0_T$, we need to show that $f^W(x)=0_T$ for any $x\in\mathcal{O}$: Clearly, $f^W(x)=0_T$ for any $x\in\mathcal{O}_0=\{0_S\}$. Assume that for any $0\leq \hat{i}<k$ ($k\in\mathbb{N}_{\geq1}$) and any $x\in\mathcal{O}_{\hat{i}}$, we have $f^W(x)=0_T$. We then prove that this is also valid for $\hat{i}=k$. Set $x\in\mathcal{O}_k$ and $x\notin\mathcal{O}_{k-1}$. Then, by the equation (\ref{O}), we have $x=a\oplus b\in\mathcal{O}_{k-1}\oplus\mathcal{O}_{k-1}$, $x=a\circledast b\in(\mathcal{O}_{k-1}\circledast W_{k-1})\cup(W_{k-1}\circledast\mathcal{O}_{k-1})$ or $x=r\odot a\in(\mathbb{R}_+\odot\mathcal{O}_{k-1})\cup(\{0_\mathbb{R}\}\odot W_{k-1})$. From the equations in (\ref{prehomo}), we see that $f^W(x)=0_T$ holds in all cases. 
\end{enumerate}
Hence $f^W(x)=f^W(y)$ whenever $(x,y)\in R_0$.

\vspace{1mm}

\noindent{}\emph{The inductive case:}

Suppose that for any $0\leq i<\ell$ ($\ell\in\mathbb{N}_{\geq1}$) and any $(x,y)\in R_i$, we have $f^W(x)=f^W(y)$. We then prove that this is also valid for $i=\ell$.

Assume without loss of generality that $(x,y)\in R_\ell\backslash R_{\ell-1}$. There are several cases left:
\begin{enumerate}

\item  If $(x,y)=(x_1,y_1)\oplus(x_2,y_2)\in R_{\ell-1}\oplus R_{\ell-1}$, then $x=x_1\oplus x_2$, $y=y_1\oplus y_2$ and $(x_1,y_1),(x_2,y_2)\in R_{\ell-1}$. Thus $f^W(x)=f^W(x_1)+f^W(x_2)=f^W(y_1)+f^W(y_2)=f^W(y)$.

\item  Similarly, we can deal with the cases where $x\in(R_{\ell-1}\circledast R_{\ell-1})\cup(\mathbb{R}_+\odot R_{\ell-1})$. 

\item  If $(x,y)=(x_1^{-1},y_1^{-1})\in(R_{\ell-1}\backslash((\mathcal{O}\times W)\cup(W\times \mathcal{O})))^{-1}$, then $x_1,y_1\notin\mathcal{O},x=x_1^{-1},y=y_1^{-1}$ and $(x_1,y_1)\in R_{\ell-1}$. Since $x_1,y_1\notin\mathcal{O}$, neither of $f^W(x_1)$ and $f^W(y_1)$ is zero according to the proof of Proposition \ref{fW}. Thus both $(f^W(x_1))^{-1}$ and $(f^W(y_1))^{-1}$ are well-defined and they are equal to each other since $f^W(x_1)=f^W(y_1)$ by the inductive assumption.

\item  If $(x,y)\in\text{rev}(R)_{\ell-1}$, then $(y,x)\in R_{\ell-1}$. The inductive assumption indicates that $f^W(y)=f^W(x)$.

\item  If there is $z\in W$ such that $(x,z),(z,y)\in R_{\ell-1}$, then $f^W(x)=f^W(z)$ and $f^W(z)=f^W(y)$ again by the inductive assumption.
\end{enumerate}
The above discussion shows that $f^F$ is a well-defined map from $F$ to $T$.
\end{proof}

\begin{proposition}\label{f^F2}
The map $f^F: F\rightarrow T$ is a monotone semialgebra homomorphism.
\end{proposition}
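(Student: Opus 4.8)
The plan is to verify each defining property of a monotone semialgebra homomorphism in turn, using as the main tool the identities in (\ref{prehomo}) together with the already-established fact (Proposition \ref{f^F}) that $f^F$ is well-defined, i.e.\ constant on $R$-equivalence classes. First I would check that $f^F$ preserves the distinguished constants: $f^F(\overline{0}_S)=f^W(0_S)=f(0_S)=0_T$ and $f^F(\overline{1}_S)=f^W(1_S)=f(1_S)=1_T$, since $0_S,1_S\in S=W_0$ and $f$ is itself a homomorphism. Next, for the algebraic operations, I would take arbitrary $\overline{a},\overline{b}\in F$ with representatives $a,b\in W$, and compute directly from the definitions in Subsection \ref{subsec:frac}:
\[
f^F(\overline{a}+\overline{b})=f^F(\overline{a\oplus b})=f^W(a\oplus b)=f^W(a)+f^W(b)=f^F(\overline{a})+f^F(\overline{b}),
\]
where the third equality is exactly the first line of (\ref{prehomo}); the cases of multiplication and scalar multiplication are identical, invoking the second and third lines of (\ref{prehomo}) respectively. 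Because $f^F$ is well-defined, it does not matter which representatives $a,b$ are chosen, so these computations are legitimate.

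For the inversion, which is only a partial operation, I would take $\overline{c}\neq\overline{0}_S$, so by Proposition \ref{0class} any representative $c$ lies in $W\backslash\mathcal{O}$; then $f^F(\overline{c}^{\,-1})=f^F(\overline{c^{-1}})=f^W(c^{-1})=(f^W(c))^{-1}=(f^F(\overline{c}))^{-1}$, using the fourth line of (\ref{prehomo}) and the fact (established in the proof of Proposition \ref{fW}) that $f^W(c)\neq0_T$, hence is invertible in $T$. This shows $f^F$ respects all semialgebra structure. It remains to check monotonicity: given $g\preccurlyeq h$ in $F$, I would use the recursive description of the derived preorder from Proposition \ref{prop:derivedPreorder}, namely $(g,h)\in L=\cup_i L_i$, and argue by induction on $i$ that $(g,h)\in L_i$ implies $f^F(g)\leq_{_T} f^F(h)$. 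The base case $L_0$ reduces to three sub-cases: pairs $(\overline{x},\overline{y})$ with $x\leq y$ in $S$, where $f^F(\overline{x})=f(x)\leq_{_T}f(y)=f^F(\overline{y})$ because $f$ is monotone; pairs $(\overline{0}_S,a)$, handled via $f^F(\overline{0}_S)=0_T\leq_{_T}1_T$ (using $1_T\geq_{_T}0_T$) and then multiplying by $f^F(a)$ to get $0_T=f^F(a)*0_T\leq_{_T}f^F(a)*1_T=f^F(a)$, mirroring the argument in the proof of Proposition \ref{prop:derivedPreorder}; and reflexive pairs $(a,a)$, which are trivial.

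The inductive step then mirrors the structure of (\ref{Li}): if $(g,h)$ arises as $(r\cdot b_1,r\cdot b_2)$, $(b_1+c_1,b_2+c_2)$, $(a*b_1,a*b_2)$, or $(b_1*a,b_2*a)$ from pairs already in $L_i$, I apply the induction hypothesis and the compatibility of $\leq_{_T}$ with the operations of $T$ (which holds because $T$ is a preordered semialgebra), while the case $(g,h)=(a,c)$ with $(a,b),(b,c)\in L_i$ follows from transitivity of $\leq_{_T}$; combining these with the already-proven homomorphism identities for $f^F$ closes the induction. I do not anticipate any genuine obstacle here --- every step is a routine unwinding of definitions --- but the portion requiring the most care is the monotonicity argument, specifically making sure the induction is set up on the index $i$ of the filtration $L_i$ rather than attempting a direct argument on $\preccurlyeq$, and handling the $(\overline{0}_S,a)$ base case correctly since $\overline{0}_S\preccurlyeq a$ is not literally of the form $\overline{x}\preccurlyeq\overline{y}$ with $x\leq y$ in $S$.
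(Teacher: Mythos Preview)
Your proposal is correct and follows essentially the same route as the paper: you use the identities (\ref{prehomo}) to verify the homomorphism properties and then prove monotonicity by induction on the index $i$ of the filtration $L=\cup_i L_i$ from Proposition \ref{prop:derivedPreorder}, handling the base case $L_0$ and the inductive step exactly as the paper does. Your treatment is slightly more detailed (explicitly checking the constants $0_S,1_S$ and the partial inversion, and spelling out the $(\overline{0}_S,a)$ base case), but there is no substantive difference in strategy.
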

\begin{proof}
The fact that it is a semialgebra homomorphism  directly follows from the identities in (\ref{prehomo}): \emph{e.g.}, \[f^F(\overline{x}+\overline{y})=f^F(\overline{x\oplus y})=f^W(x\oplus y)=f^W(x)+f^W(y)=f^F(\overline{x})+f^F(\overline{y}).\] The rest three identities corresponding to the last three rows in (\ref{prehomo}) can be derived similarly. 

We now prove that $f^F$ is monotone.

Remember that the preorder relation $\preccurlyeq$ on $F$ is given by the set $L\subset F\times F$ defined in Proposition \ref{prop:derivedPreorder}. We need to show that $f^F(g)\leq_{_{T}}\hspace{-1mm}f^F(h)$ for any $(g,h)\in L=\cup_iL_i$ with $L_i$ defined in equations (\ref{Li}).

If $(g,h)\in L_0$, then $g=\overline{0}_S$, $g=h$ or $g=\overline{x},h=\overline{y}$ for some $x,y\in S$ such that $x\leq y$. The first two sub-cases are trivial while the last one follows from \[f^F(\overline{x})=f^W(x)=f(x)\leq_{_{T}}\hspace{-1mm}f(y)=f^W(y)=f^F(\overline{y}).\]

Assume that for any $0\leq i<k$ ($k\in\mathbb{N}_{\geq1}$) and any $(g,h)\in L_{i}$, we have $f^F(g)\leq_{_{T}}\hspace{-1mm}f^F(h)$. We then prove the case where $i=k$.

We may assume that $(g,h)\notin L_{k-1}$, thus
\begin{equation}\label{cases}
\left.
\begin{array}{rcl}
(g,h)&\in &L_{k}\backslash L_{k-1}\\
&=&\Big((\mathbb{R}_+\cdot L_{k-1})\cup(L_{k-1}+L_{k-1})\cup(F*L_{k-1})\cup(L_{k-1}*F)\\
&&\cup\{(g,h)\,|\,\exists d\in F\text{ so that }(g,d),(d,h)\in L_{k-1}\}\Big)\Big\backslash L_{k-1}.
\end{array}
\right.
\end{equation}
Those cases corresponding to the components in the formula (\ref{cases}) can be dealt with separately, by using the transitivity and the compatible properties (\ref{preoc}) of a preorder relation on a semialgebra. To illustrate, consider the case where $(g,h)=(g_1,h_1)+(g_2,h_2)\in L_{k-1}+L_{k-1}$. We have $g=g_1+g_2$, $h=h_1+h_2$ and
\[\left.\begin{array}{rll}
f^F(g)=f^F(g_1+g_2)&=&\hspace{-2mm}f^F(g_1)+f^F(g_2)\\
&\leq_{_T}&\hspace{-2mm}f^F(h_1)+f^F(g_2)\\
&\leq_{_T}&\hspace{-2mm}f^F(h_1)+f^F(h_2)=f^F(h_1+h_2)=f^F(h).
\end{array}\right.\]
The two $\leq_{_T}\hspace{-0.4mm}$'s above are due to the inductive assumption and the first rule in (\ref{preoc}). The other cases can be handled similarly, hence we omit the details.
\end{proof}
\begin{remark}
For any monotone semialgebra homomorphism $\eta: F\rightarrow T$ such that $\eta(\overline{x})=f^F(\overline{x})$ for all $x\in S$ $($\emph{i.e.}, it extends the monotone semialgebra homomorphism $f)$, we can prove inductively that $\eta=f^F$. This indicates the universal property of the semialgebra $F$ of the fractions of $S$.
\end{remark}
\begin{remark}\label{realvaluedextension}
Set $T=\mathbb{R}_+$, then Propositions \ref{f^F} and \ref{f^F2} indicate that any monotone semialgebra homomorphism $f: S\rightarrow \mathbb{R}_+$ can be extended to a monotone semialgebra homomorphism mapping from $F$ to $\mathbb{R}_+$.
\end{remark}
\section{From A Semialgebra to A Linear Space}\label{sec:V}
In this section, we define a preordered $\mathbb{R}$-linear space based on the semialgebra $F$. This definition is already given by \cite{fritz2021generalization}, while we rewrite it in detail for the completeness of the present paper.

Define $V=F-F$ to be the Grothendieck group \cite[Chap.\,II]{karoubi1978k} of the commutative monoid $(F,+)$, with each element in $V$ of the form $a-b$ for some $a,b\in F$, and with $a-b=a'-b'$ in $V$ if and only if $\exists c\in F$ \emph{s.t.} $a+b'+c=a'+b+c$ in $F$. Then, $V$ becomes an $\mathbb{R}$-linear space once we define a scalar multiplication on it as below:
\[
r\cdot(a-b)=\left\{
\begin{array}{rclcl}
r\cdot a\hspace{-4.2mm}&\hspace{-1mm}-&\hspace{-1.4mm}r\cdot b,&\text{if\hspace{-1.7mm}}&r\in\mathbb{R}_{\geq0},\\
(-r)\cdot b\hspace{-4.2mm}&\hspace{-1mm}-&\hspace{-1.4mm}(-r)\cdot a,&\text{if\hspace{-1.7mm}}&r\in\mathbb{R}_{<0}.
\end{array}
\right.
\]
It's straightforward to verify that this operation is well-defined. One also observes the following identities:
\begin{equation}\label{idn}
\left.
\begin{array}{rcl}
1_\mathbb{R}\cdot(a-b)&=&a-b,\\
(rs)\cdot(a-b)&=&r\cdot(s\cdot(a-b)),\\
r\cdot((a-b)+(c-d))&=&r\cdot(a-b)+r\cdot(c-d),\\
(r+s)\cdot(a-b)&=&r\cdot(a-b)+s\cdot(a-b)
\end{array}
\right.
\end{equation}
with $r,s\in\mathbb{R}$ and $a,b,c,d\in F$. All of the above identities can be verified straightforwardly. While most of them can be checked concisely, it is not the case for the last one: One may need to consider six cases corresponding to the different sign(s) of the real numbers $r,s$ and $r+s$. To illustrate, we deal with the case where $r\geq0,s<0$ and $r+s<0$ but leave the rest of them to the readers.

Now that $r+s<0$, 
\[(r+s)\cdot(a-b)=(-r-s)\cdot b-(-r-s)\cdot a\]
by the definition. On the other hand, we have
\[
\left.
\begin{array}{rcl}\vspace{1.6mm}
&&\textcolor{white}{(}r\cdot(a-b)+s\cdot(a-b)\\\vspace{1.6mm}
&=&(r\cdot a-r\cdot b)+\big((-s)\cdot b-(-s)\cdot a\big)\\
&=&\big(r\cdot a+(-s)\cdot b\big)-\big(r\cdot b+(-s)\cdot a\big).
\end{array}
\right.
\]
Then it is sufficient to show that
\[
\left.
\begin{array}{llclll}
    &(-r-s)\cdot b-(-r-s)\cdot a&=&\big(r\cdot a+(-s)\cdot b\big)-\big(r\cdot b+(-s)\cdot a\big)&\text{(in $V$)}  \\\vspace{1.6mm}
    \Leftarrow&(-r-s)\cdot b+r\cdot b+(-s)\cdot a&=&(-r-s)\cdot a+\big(r\cdot a+(-s)\cdot b\big)&\text{(in $F$)}\\\vspace{1.6mm}
    \Leftarrow&(-s)\cdot b+(-s)\cdot a&=&(-s)\cdot b+(-s)\cdot a.&\text{(in $F$)}
\end{array}
\right.
\]The last equality is trivially true. Hence the last identity in (\ref{idn}) is verified for the case of $r\geq0,s<0$ and $r+s<0$. 

The identities in (\ref{idn}) suggest that $V$ is indeed an $\mathbb{R}$-linear space. In the sequel we define a preorder relation on $V$ based on the preorder relation $\preccurlyeq$ on $F$.

For any $a,b,c,d\in F$ (or equivalently, for any $a-b,c-d\in V$), we write $a-b\unlhd c-d$ if and only if  there is a $g\in F$ so that $a+d+g\preccurlyeq b+c+g$ in $F$. It is straightforward to prove that the relation $\unlhd$ is well-defined and it is also a preorder. Furthermore, it is not hard to show that 
\begin{equation}\label{Ccone}
a-b\unlhd c-d\text{\hspace{1.6mm} implies }
\left\{
\begin{array}{rcl}
     (a-b)+(g-h)&\unlhd&(c+d)+(g-h), \\
     r\cdot(a-b)&\unlhd& r\cdot(c-d),
\end{array}
\right.
\end{equation}
for any $a,b,c,d,g,h\in F$ and any $r\in\mathbb{R}_+$. These inequalities are similar to the ones given in (\ref{preoc}).

To conclude, the $\mathbb{R}$-linear space $V$ is preordered with the relation $\unlhd$ which is compatible with the addition and the scalar multiplication defined in itself.

\section{The Main Theorem and the Proof}\label{sec:mainTHM}
This section is devoted to the proof of the main theorem. We state it again for convenience: 

\noindent\textbf{Theorem \ref{main}}\emph{
Let $(S,\leq)$ be a preordered semialgebra satisfying Assumptions \ref{power} and \ref{assfree}, with a power universal element $u\in S$. Suppose that $(F,\preccurlyeq)$ is the preordered semialgebra of the fractions of $S$ defined in Subsection \ref{subsec:frac} with ``$\preccurlyeq$" in Def. \ref{def:derivedPreorder} and  $\overline{{}\textcolor{white}{a}}: S\rightarrow F, x\mapsto \overline{x}$ the canonical map in Def. \ref{canonical}. Then, for every nonzero $x,y\in S$, the following are equivalent:}

\noindent\emph{
$($a$)$ $f(x)\geq f(y)$ for every monotone semialgebra homomorphism $f:S\rightarrow\mathbb{R}_+$.}

\noindent\emph{
$($b$)$ For every real number $\epsilon>0$, there is $m\in\mathbb{N}$ such that 
\begin{equation}\label{binq}
\overline{x}+\sum_{j=0}^m\epsilon^{j+1}\cdot\overline{u}^j\succcurlyeq\overline{y}.
\end{equation}}

\noindent\emph{
$($c$)$ For every $r\in\mathbb{R}_+$ and every real number $\epsilon>0$, there is a polynomial $p\in\mathbb{Q}_+[X]$ such that $p(r)\leq\epsilon$ and 
\begin{equation}
\overline{x}+p(\overline{u})\succcurlyeq\overline{y}.
\end{equation}}

\noindent\emph{
$($d$)$ For every $r\in\mathbb{R}_+$ and every real number $\epsilon>0$, there is a polynomial $p\in\mathbb{Q}_+[X]$ such that $p(r)\leq1+\epsilon$ and 
\begin{equation}\label{dd}
p(\overline{u})*\overline{x}\succcurlyeq\overline{y}.
\end{equation}
}

\begin{proof}
The proof of the implication ``$($b$)\Rightarrow($c$)$" is almost the same as  its counterpart in the proof of Theorem 3.1 in \cite{fritz2021generalization}.

The proof of ``$($c$)\Rightarrow($d$)$" is also similar to the corresponding part of the proof of Theorem 3.1 in \cite{fritz2021generalization}. But since we are dealing with noncommutative semialgebras, there are some tiny differences. So we rewrite the proof of ``$($c$)\Rightarrow($d$)$" for clearness: 

Since $x\in S$ and $x\neq0_S$, $x\notin\mathcal{O}$, we have $\overline{x}\neq\overline{0}_S$. By Proposition  \ref{poweruniversal}, $\overline{u}$ is power universal in $F$. Hence there is an integer $k\in\mathbb{N}$ so that $\overline{u}^k*\overline{x}\succcurlyeq\overline{1}_S$. For any $r\in\mathbb{R}_+$ and any real number $\epsilon>0$, set
\[
\epsilon'=\left\{
\begin{array}{rcl}
\frac{\epsilon}{r^k},&\text{if}&r>0,\\
1,&\text{if}&r=0.
\end{array}
\right.
\]
Applying (c) to the numbers $r$ and $\epsilon'$, we obtain $p_0\in\mathbb{Q}_+[X]$ such that $p_0(r)\leq\epsilon'$ and $p_0(\overline{u})+\overline{x}\succcurlyeq\overline{y}$. Set $p=1+p_0X^k$, then \[p(r)=1+p_0(r)r^k\leq1+\epsilon'r^k\leq1+\epsilon\] and 
\[p(\overline{u})*\overline{x}=\overline{x}+p_0(\overline{u})*\overline{u}^k*\overline{x}\succcurlyeq \overline{x}+p_0(\overline{u})\succcurlyeq\overline{y}.\]
Hence (d) is deduced.

``(d)$\Rightarrow$(a)": Let $f: S\rightarrow\mathbb{R}_+$  be a monotone semialgebra homomorphism, we need to show that $f(x)\geq f(y)$. By Remark \ref{realvaluedextension}, the map \[f^F: F\rightarrow \mathbb{R}_+,\overline{a}\mapsto f^W(a) \,\;\;(\forall a\in W)\] is a well-defined monotone semialgebra homomorphism. Setting $r=f^F(\overline{u})$ and applying $f^F$ to inequality (\ref{dd}), we obtain 
\[p(r)f^F(\overline{x})\geq f^F(\overline{y}),\] where $p(r)\leq1+\epsilon$, $f^F(\overline{x})=f^W(x)=f(x)$ and $f^F(\overline{y})=f(y)$. Thus, we have $(1+\epsilon)f(x)\geq f(y)$ for any real number $\epsilon>0$. Taking $\epsilon\rightarrow0$, we obtain $f(x)\geq f(y)$.

``(a)$\Rightarrow$(b)": This part of the proof is almost the same as  the proof of Theorem 3.1 in \cite{fritz2021generalization}. This is mainly due to the well-defined semialgebra of the fractions of a noncommutative semialgebra given in previous  sections. We only sketch  the proof of this part below. 

Assume that $F$ is \emph{order-cancellative}, i.e.  $g+w\succcurlyeq h+w$ implies $g\succcurlyeq h$ ($\forall g,h,w\in F$). Let $V$ be the linear space defined in Section \ref{sec:V}. Define \[C=\{v\in V\,|\,v\unrhd0_{_V}\}=\{g-h\in V\,|\,g,h\in F,g\succcurlyeq h\},\]
then $C$ is a cone in  $V$ by (\ref{Ccone}). For any real number $\epsilon>0$, set 
\[\mathcal{N}_\epsilon=\{a\in F\,|\,\exists m\in\mathbb{N}\text{ \emph{s.t.} }a\preccurlyeq\sum^m_{j=0}\epsilon^{j+1}\cdot\overline{u}^j\},\]
and define $\mathcal{N}_\epsilon-\mathcal{N}_\epsilon=\{a-b\in V\,|\,a,b\in\mathcal{N}_\epsilon\}$. Then, as in \cite{fritz2021generalization}, using the set $\mathcal{N}_\epsilon-\mathcal{N}_\epsilon$ as $\epsilon$ varies a basis of neighborhoods of $0_{_V}$ \hspace{-1mm}makes $V$ a locally convex topological vector space. Denoting by $\overline{C}$ the closure of $C$ in that topology,  we have
\begin{equation}\label{inClosure}
\overline{x}-\overline{y}\in\overline{C}\;\,\Leftrightarrow \text{ the condition (b) holds,}
\end{equation}
where the proof of the direction ``$\Rightarrow$" relies on the assumption that $\preccurlyeq$ is order-cancellative. Denote by $V^*$ the dual space of $V$, \emph{i.e.}, the topological vector space consisting of all continuous linear (real-valued) functions on $V$ and equipped with the weak-* topology. Define
\[C^*=\big\{\ell\in V^*\,|\,\ell(C)\subset[0,\infty)\big\},\]then $C^*$ is a closed convex cone in $V^*$ such that 
\begin{equation}\label{KM}
C^*=\overline{\text{conv}}(\text{Er}(C^*)),
\end{equation}
with Er$(\cdot)$ the set of all the extreme rays of a cone and $\overline{\text{conv}}(\cdot)$ the closure of the convex hull of a set.
In the following, we assume that (b) does not hold and deduce the negation of condition (a).

According to (\ref{inClosure}), ``(b) does not hold" means $\,\overline{x}-\overline{y}\notin\overline{C}$. By the geometric Hahn-Banach theorem \cite{rudin1991functional}, there is an $\ell\in C^*$ such that $\ell(\overline{x}-\overline{y})<0$. Combining with the property in (\ref{KM}), one further claims that 
\begin{equation}\label{exitsf}
\exists f\in\text{Er}(C^*)\subset C^*, ~  f(\overline{x}-\overline{y})<0.
\end{equation}
Then we have  $f\big(\hspace{0.4mm}\overline{1}_S-\overline{0}_S\hspace{-0.4mm}\big)>0$ and the function 
\begin{equation}\label{fhat}
\hat{f}: S\rightarrow\mathbb{R}_+,c\mapsto \frac{1}{f\big(\hspace{0.4mm}\overline{1}_S-\overline{0}_S\hspace{-0.4mm}\big)}f(\overline{c}-\overline{0}_S)
\end{equation}
is a monotone semialgebra homomorphism such that $\hat{f}(x)<\hat{f}(y)$, contradicting  to the condition (a).

If $F$ is not order-cancellative, the method used in \cite{fritz2021generalization} works as well in the present case.
\end{proof}

In  Observations \ref{closure}--\ref{prooffhat},  we verify the condition given in (\ref{inClosure}), the equation (\ref{KM}), the claim given in (\ref{exitsf}) and the claim that  the function $\hat{f}$ in (\ref{fhat}) is a monotone homomorphism. These claims have already been given by Fritz in \cite{fritz2021generalization} and part of the proofs of them are also contained therein. However, we rewrite these proofs for completeness.
Note that the proof of the equation (\ref{KM}) below (Observation \ref{KMobs}) is provided by Fritz through private communications. 

\begin{observation}\label{closure}
$\overline{x}-\overline{y}\in\overline{C}\;\,\Leftrightarrow$ \text{ the condition \emph{(b)} holds}.
\end{observation}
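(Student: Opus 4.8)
The plan is to prove the equivalence $\overline{x}-\overline{y}\in\overline{C}$ iff (b) holds by unwinding the definition of the topology on $V$, whose neighborhood basis at $0_{_V}$ consists of the sets $\mathcal{N}_\epsilon-\mathcal{N}_\epsilon$ as $\epsilon$ ranges over $\mathbb{R}_{>0}$. Recall that $\overline{x}-\overline{y}\in\overline{C}$ means every neighborhood of $\overline{x}-\overline{y}$ meets $C$, equivalently for every $\epsilon>0$ the translate $(\overline{x}-\overline{y})+(\mathcal{N}_\epsilon-\mathcal{N}_\epsilon)$ contains an element of $C=\{g-h\,|\,g,h\in F,\,g\succcurlyeq h\}$.

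For the direction ``(b)$\Rightarrow\overline{x}-\overline{y}\in\overline{C}$'': fix $\epsilon>0$. By (b) there is $m\in\mathbb{N}$ with $\overline{x}+\sum_{j=0}^m\epsilon^{j+1}\cdot\overline{u}^j\succcurlyeq\overline{y}$, i.e. $\overline{y}\preccurlyeq\overline{x}+n_\epsilon$ where $n_\epsilon:=\sum_{j=0}^m\epsilon^{j+1}\cdot\overline{u}^j\in\mathcal{N}_\epsilon$. Then $(\overline{x}+n_\epsilon)-\overline{y}\in C$, and since $(\overline{x}+n_\epsilon)-\overline{y}=(\overline{x}-\overline{y})+(n_\epsilon-\overline{0}_S)$ with $n_\epsilon,\overline{0}_S\in\mathcal{N}_\epsilon$, this element lies in $(\overline{x}-\overline{y})+(\mathcal{N}_\epsilon-\mathcal{N}_\epsilon)$. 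As $\epsilon$ was arbitrary, $\overline{x}-\overline{y}$ is a limit point of $C$.

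For the converse ``$\overline{x}-\overline{y}\in\overline{C}\Rightarrow$ (b)'', which is where the assumption that $F$ is order-cancellative is needed: fix $\epsilon>0$. Since $\overline{x}-\overline{y}\in\overline{C}$, the neighborhood $(\overline{x}-\overline{y})+(\mathcal{N}_{\epsilon}-\mathcal{N}_{\epsilon})$ (possibly after shrinking $\epsilon$, using that these sets are absorbing/balanced in the appropriate sense and nested) contains some $g-h\in C$ with $g\succcurlyeq h$; thus $g-h=(\overline{x}-\overline{y})+(a-b)$ for some $a,b\in\mathcal{N}_\epsilon$, i.e. $g+\overline{y}+b=h+\overline{x}+a$ in $V$, so in $F$ there is $c$ with $g+\overline{y}+b+c=h+\overline{x}+a+c$. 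From $g\succcurlyeq h$ and compatibility we get $\overline{x}+a+c=g+\overline{y}+b+c\succcurlyeq h+\overline{y}+b+c$... more carefully, one rearranges to $\overline{y}+(b+h+c)\preccurlyeq \overline{x}+(a+g+c)$ and then uses $a\preccurlyeq\sum_{j=0}^{m}\epsilon^{j+1}\cdot\overline{u}^j$ together with the fact that $b,h,c,g$ can be absorbed: since $a\in\mathcal N_\epsilon$ one has $\overline x+a\preccurlyeq \overline x+\sum_{j=0}^{m'}\epsilon^{j+1}\cdot\overline u^j$ for suitable $m'$, and order-cancellativity lets us cancel the common summand $b+h+c$ (or rather $g+c$ after symmetrizing), yielding $\overline{y}\preccurlyeq\overline{x}+\sum_{j=0}^{m''}\epsilon^{j+1}\cdot\overline u^j$ for some finite $m''$, which is exactly (b) for this $\epsilon$.

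The main obstacle will be the bookkeeping in this last paragraph: translating membership in the quotient group $V$ back to inequalities in $F$ (handling the Grothendieck-group witness $c$), and especially invoking order-cancellativity at precisely the right place to discard the auxiliary terms $a,b,g,h,c$ without weakening the conclusion — the inequality one extracts must still have $\overline{x}$ (not $\overline{x}$ plus junk) on the larger side. I would follow Fritz's argument in \cite{fritz2021generalization} closely here, checking that each manipulation only uses transitivity, the compatibility properties \eqref{preoc}/\eqref{Ccone}, the explicit form of elements of $\mathcal{N}_\epsilon$ (a finite sum $\sum_{j=0}^m\epsilon^{j+1}\cdot\overline u^j$, so two such sums with possibly different $m$ are dominated by one with the larger $m$ since $\overline u\succcurlyeq\overline 1_S$ makes the tail terms nonnegative), and order-cancellativity. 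The non-order-cancellative case is deferred, as in the main proof, to Fritz's reduction.
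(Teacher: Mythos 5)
Your proof is correct and follows essentially the same route as the paper: the ``(b) $\Rightarrow$ closure'' direction is identical, and for the converse you also pull an element of $C$ out of a basic neighborhood of $\overline{x}-\overline{y}$, unwind the Grothendieck-group witness, and invoke order-cancellativity to strip it away. One small slip in your last paragraph: after combining $g+\overline{y}+b+c=h+\overline{x}+a+c$ with $g\succcurlyeq h$, the common summand you cancel is $h+c$ (giving $\overline{x}+a\succcurlyeq\overline{y}+b$, then drop $b$ by $b\succcurlyeq\overline{0}_S$ and transitivity, not by cancellation), or $g+c$ via your ``symmetrizing'' variant; ``$b+h+c$'' is not actually common to both sides, though the intent is clear and the fix is routine.
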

\begin{proof}

The condition $\overline{x}-\overline{y}\in\overline{C}$ implies that
\[\forall \epsilon>0, \,\exists v\in C\text{\emph{ s.t. }}v\in(\overline{x}-\overline{y})+(\mathcal{N}_\epsilon-\mathcal{N}_\epsilon).\]
This means that there are $c,d\in\mathcal{N}_\epsilon$ such that $(\overline{x}-\overline{y})+(c-d)=v\unrhd0_{_V}$. By the assumption that $F$ is order-cancellative, this implies $\overline{x}+c\succcurlyeq\overline{y}+d\succcurlyeq\overline{y}$. Since $c\in\mathcal{N}_\epsilon$, then  (\ref{binq}) holds.

Since the inequality (\ref{binq}) holds, we have 
\[\big(\overline{x}+\sum_{j=0}^m\epsilon^{j+1}\cdot\overline{u}^j\big)-\overline{0}_S\unrhd\overline{y}-\overline{0}_S.\]Adding $\overline{0}_S-\overline{y}$ to the both sides, one obtains
\[\big(\overline{x}+\sum_{j=0}^m\epsilon^{j+1}\cdot\overline{u}^j\big)-\overline{y}\unrhd\overline{y}-\overline{y}=0_{_V},\]
which is equivalent to 
\begin{equation}\label{indicate}
(\overline{x}-\overline{y})+\big(\sum_{j=0}^m\epsilon^{j+1}\cdot\overline{u}^j-\overline{0}_S\big)\in C
\end{equation}
with $\sum_{j=0}^m\epsilon^{j+1}\cdot\overline{u}^j-\overline{0}_S\in\mathcal{N}_\epsilon-\mathcal{N}_\epsilon$. By  inequality (\ref{indicate}), we know that there is a point in $C$ which is also in the neighborhood $(\overline{x}-\overline{y})+(\mathcal{N}_\epsilon-\mathcal{N}_\epsilon)$ for any $\epsilon>0$. Thus $\overline{x}-\overline{y}\in\overline{C}$.
\end{proof}
\begin{observation}\label{KMobs}
$C^*=\overline{\text{\emph{conv}}}(\text{\emph{Er}}(C^*))$.
\end{observation}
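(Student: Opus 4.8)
The plan is to exhibit $C^*$ as a \emph{well-capped} weak-$*$ closed convex cone and then reduce to the classical Krein--Milman theorem. First I would record that $C^*$ is lineless: since $\overline 0_S\preccurlyeq a$ for every $a\in F$ (Proposition~\ref{prop:derivedPreorder}), every vector $a-b\in V=F-F$ equals $(a-\overline 0_S)-(b-\overline 0_S)$ with both terms in $C$, so $V=C-C$; hence any $\ell\in V^*$ vanishing on $C$ vanishes on all of $V$, which gives $C^*\cap(-C^*)=\{0\}$. (This is exactly what guarantees that $C^*$ has enough extreme rays to be recovered from them.)

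Next, for each real $\epsilon>0$ I would set, for $\ell\in C^*$,
\[
\rho_\epsilon(\ell)=\sum_{j=0}^{\infty}\epsilon^{\,j+1}\,\ell\bigl(\overline u^{\,j}-\overline 0_S\bigr)\in[0,+\infty],\qquad K_\epsilon=\{\ell\in C^*\,:\,\rho_\epsilon(\ell)\le 1\}.
\]
The one genuinely computational point to isolate is the identity $\rho_\epsilon(\ell)=\sup_{c\in\mathcal N_\epsilon}\ell(c-\overline 0_S)$, which holds because the partial sums $\sum_{j=0}^{m}\epsilon^{\,j+1}\overline u^{\,j}$ are cofinal in $(\mathcal N_\epsilon,\preccurlyeq)$ while $\ell$ is monotone and nonnegative on $F$; consequently $K_\epsilon=C^*\cap(\mathcal N_\epsilon-\mathcal N_\epsilon)^{\circ}$, the intersection of $C^*$ with the polar of the $0$-neighbourhood $\mathcal N_\epsilon-\mathcal N_\epsilon$. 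Three facts then need verification. First, $K_\epsilon$ is convex and weak-$*$ closed (it is the intersection of $C^*$ with the half-spaces $\{\ell:\ell(c-\overline 0_S)\le 1\}$ over $c\in\mathcal N_\epsilon$) and it is contained in $(\mathcal N_\epsilon-\mathcal N_\epsilon)^\circ$, so it is weak-$*$ compact by the Banach--Alaoglu theorem \cite{rudin1991functional}. Second, $K_\epsilon$ is a \emph{cap} of $C^*$: since $\rho_\epsilon$ is additive and positively homogeneous on $C^*$ (a sum of a nonnegative series), the complement $C^*\setminus K_\epsilon=\{\ell\in C^*:\rho_\epsilon(\ell)>1\}$ is convex. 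Third, $C^*=\bigcup_{\epsilon>0}\mathbb R_{\ge 0}K_\epsilon$: any $\ell\in C^*\subset V^*$ is weak-$*$ continuous, hence bounded on some $0$-neighbourhood $\mathcal N_{\epsilon_0}-\mathcal N_{\epsilon_0}$, so $\rho_{\epsilon_0}(\ell)<\infty$ and $\ell$ is a nonnegative multiple of a point of $K_{\epsilon_0}$.

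Having shown that $C^*$ is lineless and is the union of the family $\{K_\epsilon\}_{\epsilon>0}$ of compact caps, I would finish with the standard argument for well-capped cones, done by hand to avoid citing specialised literature. Krein--Milman \cite{rudin1991functional} applied to the compact convex set $K_\epsilon$ gives $K_\epsilon=\overline{\text{conv}}(\text{ext}\,K_\epsilon)$. A short check using linearity of $\rho_\epsilon$ shows that every nonzero $\ell_0\in\text{ext}\,K_\epsilon$ satisfies $\rho_\epsilon(\ell_0)=1$ and that $\mathbb R_{\ge0}\ell_0$ is then an extreme ray of $C^*$; conversely, given any extreme ray $\mathbb R_{\ge0}\ell_0$ of $C^*$, continuity of $\ell_0$ yields an $\epsilon$ with $0<\rho_\epsilon(\ell_0)<\infty$, and after rescaling to $\rho_\epsilon(\ell_0)=1$ the point $\ell_0$ is an extreme point of $K_\epsilon$. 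Combining these with $C^*=\bigcup_\epsilon\mathbb R_{\ge0}K_\epsilon$ and passing cones and closures through the inclusions (using that scaling is a homeomorphism and that $0$ already lies in every closed convex set containing a ray) yields $C^*\subseteq\overline{\text{conv}}(\text{Er}(C^*))$; the reverse inclusion is immediate since $C^*$ is weak-$*$ closed and convex and contains each of its extreme rays.

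I expect the main obstacle to be the soft functional-analytic bookkeeping rather than anything algebraic: confirming, against the construction of $V$ in Section~\ref{sec:V}, that $\mathcal N_\epsilon-\mathcal N_\epsilon$ really is a $0$-neighbourhood whose polar is weak-$*$ compact, and pinning down the identity $\rho_\epsilon(\ell)=\sup_{c\in\mathcal N_\epsilon}\ell(c-\overline 0_S)$ so that the sets $K_\epsilon$ are genuine caps (this is where the downward-closedness of $\mathcal N_\epsilon$ under $\preccurlyeq$ and the cofinality of the partial sums are used). Once these are in place, the conclusion is just Krein--Milman applied cap by cap.
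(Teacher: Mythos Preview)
Your argument is correct and follows essentially the same strategy as the paper: your caps $K_\epsilon$ coincide with the paper's $D_\epsilon$ (your identity $\rho_\epsilon(\ell)=\sup_{c\in\mathcal N_\epsilon}\ell(c-\overline 0_S)$ makes this explicit), and your reason for the convexity of $C^*\setminus K_\epsilon$---the $[0,\infty]$-valued additivity of $\rho_\epsilon$---is a cleaner packaging of the computation the paper carries out with the partial sums $\sum_{j=0}^{m}\epsilon^{j+1}\overline u^{\,j}$. The only substantive difference is that the paper invokes Choquet's theorem on well-capped cones \cite{choquet1962cones} as a black box, whereas you unpack that step by applying Krein--Milman to each compact cap $K_\epsilon$ and checking that nonzero extreme points of $K_\epsilon$ generate extreme rays of $C^*$; your added linelessness observation ($C^*\cap(-C^*)=\{0\}$, from $V=C-C$) is exactly what makes this reduction go through. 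Both routes arrive at the same place; yours is more self-contained, the paper's is shorter by outsourcing the convex-analysis step.
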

\begin{proof}
Proposition 21 in \cite{choquet1962cones} indicates that any \emph{well-capped} \cite{asimow1969extremal} closed convex cone in a locally convex topological vector space is the closed convex hull of its extreme rays. To prove (\ref{KM}), it is sufficient to show that $C^*$ is well-capped (since $V^*$ is already locally convex). In \cite{fritz2021generalization}, Fritz defined the sets
\[C_\epsilon^*=\big\{\ell\in C^*\,|\,\ell(\mathcal{N}_\epsilon-\overline{0}_S\hspace{-0.6mm})\text{ is bounded}\big\}\]
and
\[
D_\epsilon=\big\{\ell\in C_\epsilon^*\,|\,\ell(\mathcal{N}_\epsilon-\overline{0}_S\hspace{-0.6mm})\subset[0,1]\big\},
\]
which satisfy the equation \begin{equation}\label{wellcapped}
C^*=\bigcup_{\epsilon>0}C_\epsilon^*=\bigcup_{\substack{\epsilon>0\\n\in\mathbb{N}}}n D_\epsilon
\end{equation}
and claimed that $D_\epsilon$ was a \emph{cap} \cite{asimow1969extremal} of $C^*$ (hence $C^*$ is well-capped by equation (\ref{wellcapped})). However, the proof of the condition ``the set $C^*\backslash D_\epsilon$ is convex", which is required in the definition of a cap, is omitted in \cite{fritz2021generalization}. The following proof of the condition is provided by Fritz through private communications:

Note that $\sum\limits_{j=0}^m\epsilon^{j+1}\cdot\overline{u}^j\in\mathcal{N}_\epsilon$ for any $m\in\mathbb{N}$, we have
\[
D_\epsilon=\Big\{\ell\in C^*_\epsilon\,\big|\,\forall m\in\mathbb{N}, \ell\big(\sum\limits_{j=0}^m\epsilon^{j+1}\cdot\overline{u}^j-\overline{0}_S\big)\leq1\Big\}.
\]
Then
\[C^*_\epsilon\backslash D_\epsilon=\Big\{\ell\in C^*_\epsilon\,\big|\,\exists m\in\mathbb{N}, \ell\big(\sum\limits_{j=0}^m\epsilon^{j+1}\cdot\overline{u}^j-\overline{0}_S\big)>1\Big\}\]
is convex. Set $f_i\in C^*_\epsilon\backslash D_\epsilon$ with $m_i$ such that $f_i\big(\sum\limits_{j=0}^{m_i}\epsilon^{j+1}\cdot\overline{u}^j-\overline{0}_S\big)>1$, $i=1,2$. For any $\mu_1,\mu_2=1-\mu_1\in(0,1)$ and $\hat{m}=\max\{m_1,m_2\}$, we have
\begin{equation}\label{g1convex}
\left.
\begin{array}{rcl}
&&(\mu_1f_1+\mu_2f_2)\big(\sum\limits_{j=0}^{\hat{m}}\epsilon^{j+1}\cdot\overline{u}^j-\overline{0}_S\big)\\
&=&\mu_1f_1\big(\sum\limits_{j=0}^{\hat{m}}\epsilon^{j+1}\cdot\overline{u}^j-\overline{0}_S\big)+\mu_2f_2\big(\sum\limits_{j=0}^{\hat{m}}\epsilon^{j+1}\cdot\overline{u}^j-\overline{0}_S\big)\\
&\geq&\mu_1f_1\big(\sum\limits_{j=0}^{m_1}\epsilon^{j+1}\cdot\overline{u}^j-\overline{0}_S\big)+\mu_2f_2\big(\sum\limits_{j=0}^{m_2}\epsilon^{j+1}\cdot\overline{u}^j-\overline{0}_S\big)\\
&>&1.
\end{array}
\right.
\end{equation}
Thus $\mu_1f_1+\mu_2f_2\in C_\epsilon^*\backslash D_\epsilon$.

The set $C^*\backslash D_\epsilon$ is also convex.  Set $f_1,f_2\in C^*\backslash D_\epsilon$ and $\mu_1,\mu_2=1-\mu_1\in(0,1)$. If one of $f_1$ and $f_2$ is not in $C^*_\epsilon$ (\emph{i.e.}, is not bounded over the set $\mathcal{N}_\epsilon-\overline{0}_S$), then neither is the functional $\mu_1f_1+\mu_2f_2$. Thus, $\mu_1f_1+\mu_2f_2\in C^*\backslash C^*_\epsilon\subset C^*\backslash D_\epsilon$. On the contrary, if both $f_1$ and $f_2$ are in $C^*_\epsilon$, then they are in $C^*_\epsilon\backslash D_\epsilon$. Now (\ref{g1convex}) implies $\mu_1f_1+\mu_2f_2\in C^*_\epsilon\backslash D_\epsilon\subset C^*\backslash D_\epsilon$.
\end{proof}
\begin{observation}\label{extray}
$\exists f\in\text{\emph{Er}}(C^*)\text{ \emph{s.t.} }f(\overline{x}-\overline{y})<0$.
\end{observation}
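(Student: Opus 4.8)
The plan is to derive this from the Krein--Milman-type identity (\ref{KM}) together with the separation already obtained in the paragraph preceding (\ref{exitsf}). Recall that, since condition (b) is assumed to fail, Observation \ref{closure} gives $\overline{x}-\overline{y}\notin\overline{C}$, and the geometric Hahn--Banach theorem then furnishes a continuous linear functional $\ell$ on $V$ strictly separating the point $\overline{x}-\overline{y}$ from the closed convex set $\overline{C}$. Because $\overline{C}$ is a cone containing $0_{_V}$, the separating inequality forces $\ell\geq0$ on $\overline{C}$ (otherwise the infimum of $\ell$ over the cone would be $-\infty$), so in fact $\ell\in C^*$, while $\ell(\overline{x}-\overline{y})<0$.

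Next I would argue by contradiction. Suppose $f(\overline{x}-\overline{y})\geq0$ for every $f$ lying on an extreme ray of $C^*$. Write $v=\overline{x}-\overline{y}\in V$ and consider the evaluation functional $\text{ev}_v\colon V^*\rightarrow\mathbb{R}$, $g\mapsto g(v)$, which is linear and continuous for the weak-$*$ topology on $V^*$. Hence the half-space $H=\{g\in V^*\,|\,g(v)\geq0\}$ is weak-$*$ closed and convex. By the contradiction hypothesis, $\text{Er}(C^*)\subset H$; passing to the convex hull and then to the closure gives $\overline{\text{conv}}(\text{Er}(C^*))\subset H$. Invoking (\ref{KM}), we conclude $C^*\subset H$, that is, $g(v)\geq0$ for every $g\in C^*$, which contradicts $\ell\in C^*$ with $\ell(v)<0$. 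Therefore some extreme ray of $C^*$ contains an $f$ with $f(\overline{x}-\overline{y})<0$, as claimed.

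I do not expect any serious obstacle here; the argument is a routine duality/separation argument once (\ref{KM}) is in hand. The only points deserving a little care are (i) that the Hahn--Banach separating functional may indeed be taken in $C^*$ --- this uses that $\overline{C}$ is a cone, so $\ell$ is bounded below on it only if $\ell\geq0$ there --- and (ii) being consistent about whether ``$f\in\text{Er}(C^*)$'' denotes a chosen generator of an extreme ray or an arbitrary point of the union of the extreme rays, which is harmless since multiplying a functional by a nonnegative scalar does not change the sign of its value at $v$.
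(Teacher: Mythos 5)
Your proof is correct and follows essentially the same contradiction argument as the paper: assume no extreme ray meets the open half-space $\{g\,|\,g(\overline{x}-\overline{y})<0\}$, observe that the complementary set $H=\{g\,|\,g(\overline{x}-\overline{y})\ge 0\}$ is weak-$*$ closed and convex, and conclude via (\ref{KM}) that $C^*\subset H$, contradicting the existence of $\ell$. The only difference is presentational: the paper unwinds the statement ``the closed convex hull of a subset of a closed convex set stays inside it'' by a pointwise neighborhood argument, whereas you invoke this standard fact directly, which is a bit cleaner but not a different route.
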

\begin{proof}
This part was also omitted in \cite{fritz2021generalization}, hence we give a proof here. Since there is an $\ell\in C^*$ so that $\ell(\overline{x}-\overline{y})<0$, the intersection $\mathcal{H}\cap C^*$ is not empty, where 
\[\mathcal{H}=\{\phi\in V^*\,|\,\phi(\overline{x}-\overline{y})<0\}\]
is an open half-space in $V^*$. It is sufficient to show that there is $f\in\text{Er}(C^*)$ so that $f\in\mathcal{H}$. Assume on the contrary that none of the extreme rays of $C^*$ is in $\mathcal{H}$. By the equation (\ref{KM}), for any $\phi\in C^*$ and any neighborhood $O\ni\phi$, there is $\phi'\in O\cap\text{conv}(\text{Er}(C^*))$. Thus $\phi'=\sum\limits_{i=1}^s\mu_i\phi_i$ for some $s\in\mathbb{N}$, some $\phi_i\in\text{Er}(C^*)$ and some real numbers $\mu_i\in[0,1]$ with $\sum\limits_{i=1}^s\mu_i=1$. Since $\phi_i\notin\mathcal{H}$ by the assumption, $\phi_i(\overline{x}-\overline{y})\geq0$ for any $1\leq i\leq s$. Hence $\phi'(\overline{x}-\overline{y})\geq0$, i.e., $\phi'\notin\mathcal{H}$. This means that $O\cap(V^*\backslash\mathcal{H})\neq\emptyset$ for any neighborhood $O$ of $\phi$. Since $V^*\backslash\mathcal{H}$ is clearly closed, $\phi\in V^*\backslash\mathcal{H}$. Noting that $\phi$ is an arbitrary element in $C^*$, we conclude that $C^*\cap\mathcal{H}=\emptyset$, which is a contradiction.
\end{proof}
\begin{observation}\label{prooffhat}
Let $f\in\text{\emph{Er}}(C^*)$ be any extreme ray of $C^*$, then $f(\overline{1}_S-\overline{0}_S)>0$ and $\hat{f}$ defined  in (\ref{fhat}) is a monotone semialgebra homomorphism.
\end{observation}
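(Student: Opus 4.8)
The plan is to verify the soft closure properties of $\hat f$ directly from the construction, and then to concentrate on the two substantive points: the strict positivity $f(\overline{1}_S-\overline{0}_S)>0$ (which is needed even to make sense of $\hat f$) and the multiplicativity $\hat f(ab)=\hat f(a)\hat f(b)$. First I would record the easy facts. Since $f\in V^*$ is $\mathbb{R}$-linear and the canonical map $\overline{\,\cdot\,}:S\to F$ is a semialgebra homomorphism, the assignment $c\mapsto f(\overline{c}-\overline{0}_S)$ is additive, $\mathbb{R}_+$-homogeneous, and sends $0_S$ to $0$; moreover every element of $F$ satisfies $\overline{c}\succcurlyeq\overline{0}_S$ (noted after Proposition~\ref{prop:derivedPreorder}), so $\overline{c}-\overline{0}_S\in C$ and hence $f(\overline{c}-\overline{0}_S)\ge 0$. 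Monotonicity is just as direct: $a\le b$ in $S$ gives $\overline{a}\preccurlyeq\overline{b}$, so $\overline{b}-\overline{a}\in C$ and $f(\overline{a}-\overline{0}_S)\le f(\overline{b}-\overline{0}_S)$. Once $f(\overline{1}_S-\overline{0}_S)>0$ is known, dividing through by this constant turns all of the above into the statements that $\hat f$ is an $\mathbb{R}_+$-valued, additive, $\mathbb{R}_+$-homogeneous, monotone map with $\hat f(0_S)=0$ and $\hat f(1_S)=1$; so everything reduces to the two substantive points.

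For the positivity $f(\overline{1}_S-\overline{0}_S)>0$, I would argue that since $f$ generates an extreme ray it is in particular nonzero, so $f(v)\neq0$ for some $v=\overline{b}-\overline{d}\in V$; as $f(\overline{b}-\overline{0}_S),f(\overline{d}-\overline{0}_S)\ge 0$ this forces $f(\overline{a}-\overline{0}_S)>0$ for some nonzero $\overline{a}\in F$. By Proposition~\ref{poweruniversal}, $\overline{u}$ is power universal in $F$, so $\overline{u}^{k}\succcurlyeq\overline{a}$ for some $k$, and therefore $f(\overline{u}^{k}-\overline{0}_S)\ge f(\overline{a}-\overline{0}_S)>0$. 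The sequence $j\mapsto f(\overline{u}^{j}-\overline{0}_S)$ is non-decreasing (because $\overline{u}\succcurlyeq\overline{1}_S$) and, since the powers $\overline{u}^{k}$ are cofinal in $F$ ($\overline{c}\preccurlyeq\overline{u}^{k}$ for some $k$, for every $\overline{c}\in F$, and $V=F-F$), a vanishing of $f$ on all of them would force $f=0$; the remaining task, ruling out that the first nonzero term occurs past $j=0$, is carried out as in the proof of \cite[Theorem~3.1]{fritz2021generalization}, using the extreme-ray property of $f$ together with the multiplication operators described next.

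The multiplicativity is the main obstacle, and the place where the noncommutative setting really enters. The natural approach is: for fixed $\overline{a}\in F$, left multiplication $v\mapsto\overline{a}\cdot v$ (extended to the ring $V=F-F$) is $\mathbb{R}$-linear, continuous for the $\mathcal{N}_\epsilon$-topology (again using that $\overline{a}\preccurlyeq\overline{u}^{k}$ for some $k$), and maps $C$ into $C$ by the compatibility~(\ref{preoc}) of $\preccurlyeq$ with multiplication; hence its adjoint maps $C^{*}$ into $C^{*}$, and if $\overline{a}\neq\overline{0}_S$ it is even a cone automorphism of $C^{*}$, with inverse the adjoint of left multiplication by $\overline{a}^{-1}$. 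For $A\in S\setminus\{0_S\}$ put $\phi_A(v):=f(\overline{A}\cdot v)\in C^{*}$. Using the extreme-ray property of $f$, together with the fact that $\overline{u}$ dominates both $\overline{A}$ and $\overline{A}^{-1}$, one shows $\phi_A\in\mathbb{R}_{\ge0}\,f$; evaluating at $\overline{1}_S-\overline{0}_S$ identifies the scalar as $\hat f(A)$, and evaluating $\phi_A=\hat f(A)\,f$ at $\overline{B}-\overline{0}_S$ yields $f(\overline{AB}-\overline{0}_S)=\hat f(A)\,f(\overline{B}-\overline{0}_S)$, i.e. $\hat f(AB)=\hat f(A)\hat f(B)$ (the case $A=0_S$ being trivial). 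I expect the hard part to be precisely the claim $\phi_A\in\mathbb{R}_{\ge0}\,f$: unlike in the commutative case one cannot invoke that a product of order-nonnegative elements is again order-nonnegative, so extremality must be exploited directly, as in \cite{fritz2021generalization}, by exhibiting for a suitable bound a decomposition of a multiple of $f$ as $\phi_A$ plus another element of $C^{*}$. Assembling the three ingredients shows that $\hat f$ is a monotone semialgebra homomorphism, which completes Observation~\ref{prooffhat}.
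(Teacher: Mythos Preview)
Your outline of the easy parts (additivity, $\mathbb{R}_+$-homogeneity, nonnegativity, monotonicity) is correct, and you correctly isolate multiplicativity and the strict positivity of $f(\overline{1}_S-\overline{0}_S)$ as the two substantive points. The gap is precisely at the step you flag as hard: the claim $\phi_A\in\mathbb{R}_{\ge0}f$. Your suggested mechanism---write some multiple $\lambda f$ as $\phi_A+\psi$ with $\psi\in C^*$ using a bound $\overline{u}^k\succcurlyeq\overline{A}$---does not go through. The obvious candidate $\psi=\phi_{u^k}-\phi_A$ would require, for $v=p-q\in C$ with $p\succcurlyeq q$, that $\overline{u}^kp+\overline{A}q\succcurlyeq\overline{u}^kq+\overline{A}p$; this is exactly the ``product of two order-nonnegatives is order-nonnegative'' statement you yourself note is unavailable in the noncommutative setting. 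Nor is $\phi_{u^k}$ a scalar multiple of $f$ a priori, so there is no evident $\lambda$ to use.

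The paper's device sidesteps this entirely. For nonzero $a\in F$ one has the identity $\overline{1}_S=(\overline{1}_S+a)^{-1}+a(\overline{1}_S+a)^{-1}$ \emph{inside $F$}, so $f$ itself splits as $g^V+h^V$, where $g^V$ and $h^V$ are $f$ precomposed with left multiplication by the genuine $F$-elements $(\overline{1}_S+a)^{-1}$ and $a(\overline{1}_S+a)^{-1}$. Because these multipliers lie in $F$ (not merely in $V$), left multiplication by them preserves $C$ via (\ref{preoc}), and continuity (Lemma~\ref{t*cont}) gives $g^V,h^V\in C^*$. Extremality of $f$ now forces $g^V=s_1f$, $h^V=s_2f$; showing both $s_i>0$ and substituting $w\mapsto(\overline{1}_S+a)w$ yields $f_F(aw)=\gamma f_F(w)$ with $\gamma>0$. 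Note also that the paper derives $f_F(\overline{1}_S)>0$ \emph{from} this relation (take $a=b_0$ with $f_F(b_0)\neq0$, so $f_F(b_0)=\gamma f_F(\overline{1}_S)$), rather than proving it beforehand as you attempt; your separate argument for positivity is also left incomplete at the crucial step.
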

\begin{proof}
We may assume, without loos of generality, that for any nonzero $a\in F$, $\overline{1}_S+a\neq\overline{0}_S$. Otherwise, there would be a nonzero $a\in F$ so that $\overline{0}_S=\overline{1}_S+a\succcurlyeq\overline{1}_S$. Note that $\overline{1}_S\succcurlyeq\overline{0}_S$, one concludes that the preordder $\preccurlyeq$ is trivial, as is shown in (\ref{tri}). Now the conditions (a)--(d) are all trivially true, so is the conclusion of the theorem.

Set $f_F(w)=f(w-\overline{0}_S)$ for any $w\in F$, then we have
\[f_F\big((\overline{1}_S+a)^{-1}*w\big)+f_F\big(a*(\overline{1}_S+a)^{-1}*w\big)=f_F(w)\]
by the linearity of $f$. Let
\[
\left\{
\begin{array}{rcl}
g(w)&=&f_F\big((\overline{1}_S+a)^{-1}*w\big),\\
h(w)&=&f_F\big(a*(\overline{1}_S+a)^{-1}*w\big).
\end{array}
\right.
\]
then $g,h$ are both linear and monotone on $F$ since $f_F$ is. Set $g^V(b-d)=g(b)-g(d)$ and $h^V(b-d)=h(b)-h(d)$ for any $b,d\in F$ (or equivalently, for any $b-d\in V$), then both $g^V$ and $h^V$ are well-defined on $V$. Moreover, a routine check indicates that they are monotone and linear maps from $Ve$ to $\mathbb{R}$ and \begin{equation}\label{decom}g^V+h^V=f.\end{equation}

For any nonzero $t\in F$, the operator
\begin{equation}\label{leftmul}
t_*: b-d\mapsto t*b-t*d,\,V\rightarrow V
\end{equation}
is well-defined and continuous (Lemma \ref{t*cont}). Noting that $g^V=f\circ\big((\overline{1}_S+a)^{-1}\big)_*$ and $h^V=f\circ\big(a*(\overline{1}_S+a)^{-1}\big)_*$, one observes that both $g^V$ and $h^V$ are continuous since $f$ is. To conclude, we have $g^V,h^V\in C^*$.

By the equation (\ref{decom}) and the fact that $f$ is an extreme ray, we conclude that there are nonnegative real numbers $s_1$ and $s_2$ such that $g^V=s_1f$ and $h^V=s_2f$. As a ray of $C^*$, $f\neq0$. There is some $b_0-d_0\in V$ such that $f(b_0-d_0)\neq0$. Thus \[g^V\big((\overline{1}_S+a)*b_0-(\overline{1}_S+a)*d_0\big)=f(b_0-d_0)\neq0\] and \[f^V\big((\overline{1}_S+a)*a^{-1}*b_0-(\overline{1}_S+a)*a^{-1}*d_0\big)=f(b_0-d_0)\neq0,\] which means $g^V\neq0$ and $h^V\neq0$. Hence $s_1,s_2\in\mathbb{R}_{>0}$. Therefore, we have $h^V=\gamma g^V$ for the positive number $\gamma=s_2/s_1$ and $h^V(w-\overline{0}_S)=\gamma g^V(w-\overline{0}_S)$ for any $w\in F$, \emph{i.e.},
\begin{equation}\label{fidentity}
f_F\big(a*(\overline{1}_S+a)^{-1}*w\big)=\gamma f_F\big((\overline{1}_S+a)^{-1}*w\big).
\end{equation}
Replacing $(\overline{1}_S+a)^{-1}*w$ by $w$ in (\ref{fidentity}) one obtains 
\begin{equation}\label{fidentitySIM}
f_F(a*w)=\gamma f_F(w)
\end{equation}
for any $w\in F$ (with $\gamma$ in (\ref{fidentitySIM}) depending on $a$). Taking $w=\overline{1}_S$ in (\ref{fidentitySIM}), we have
\begin{equation}\label{gamma}
f_F(a)=\gamma f_F(\overline{1}_S).
\end{equation}
Now $0\neq f(b_0-d_0)=f_F(b_0)-f_F(d_0)$ indicates that $f_F(b_0)\neq0$ or $f_F(d_0)\neq0$. We may assume $f_F(b_0)\neq0$, then clearly $b_0\neq\overline{0}_S$. Taking $a=b_0$ in (\ref{gamma}), we have $f_F(b_0)=\gamma f_F(\overline{1}_S)$. Since $f_F(b_0)\neq0$, we obtain $f_F(\overline{1}_S)\neq0$. In fact, $f_F(\overline{1}_S)>0$ since $f\in C^*$. From (\ref{gamma}) we see that $\gamma=f_F(a)/f_F(\overline{1}_S)$. Substituting that back to (\ref{fidentitySIM}), we have
\begin{equation}\label{nearhomo}
f_F(a*w)=\frac{f_F(a)}{f_F(\overline{1}_S)}f_F(w)
\end{equation}
for any $w\in F$ and any nonzero $a\in F$. Indeed, when $a=\overline{0}_S$, the identity (\ref{nearhomo}) also holds: both of its two sides equal $0$. Define $\hat{f}(c)=\frac{1}{f_F(\overline{1}_S)}f_F(\overline{c})$ for any $c\in S$, then the identity (\ref{nearhomo}) indicates that $\hat{f}: S\rightarrow\mathbb{R}_+$ preserves the multiplications. The facts that it also preserves the additions and the scalar multiplications, and that it is monotone are straightforward. Thus, $\hat{f}$ is a monotone semialgebra homomorphism.
\end{proof}
The following lemma (also claimed in \cite{fritz2021generalization}) is needed in the proof above. 
\begin{lemma}\label{t*cont}
For any nonzero $t\in F$, the map given in (\ref{leftmul}) is well-defined and continuous \emph{w.r.t.} the topology in $V$.
\end{lemma}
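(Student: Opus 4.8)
The plan is to prove that the left–multiplication operator $t_*\colon V\to V$ is well-defined and continuous. Well-definedness first: given $b-d = b'-d'$ in $V$, there is $c\in F$ with $b+d'+c = b'+d+c$ in $F$; multiplying this identity on the left by $t$ and using that $F$ is a semialgebra, we get $t*b + t*d' + t*c = t*b' + t*d + t*c$, which is exactly the condition for $t*b - t*d = t*b' - t*d'$ in $V$. Hence $t_*$ descends to a well-defined map on $V$; it is visibly $\mathbb{R}_+$-homogeneous and additive, and since it commutes with the sign-splitting in the definition of the scalar multiplication on $V$, it is $\mathbb{R}$-linear.

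For continuity it suffices, since $t_*$ is linear and the topology on $V$ has the sets $\mathcal{N}_\epsilon - \mathcal{N}_\epsilon$ (as $\epsilon$ varies) as a neighborhood basis of $0_V$, to show that for every $\epsilon>0$ there is $\delta>0$ with $t_*(\mathcal{N}_\delta-\mathcal{N}_\delta)\subseteq \mathcal{N}_\epsilon-\mathcal{N}_\epsilon$. Because $\overline{u}$ is power universal in $F$ by Proposition \ref{poweruniversal} and $t\neq \overline{0}_S$ (so $t$ is invertible, all nonzero elements of $F$ being invertible), there is $k\in\mathbb{N}$ with $\overline{u}^k \succcurlyeq t$, and hence $t*\overline{u}^j \preccurlyeq \overline{u}^{k}*\overline{u}^j = \overline{u}^{k+j}$ for every $j$. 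The key estimate is then: if $a\preccurlyeq \sum_{j=0}^m \delta^{j+1}\cdot \overline{u}^j$, then
\[
t*a \;\preccurlyeq\; \sum_{j=0}^m \delta^{j+1}\cdot (t*\overline{u}^j) \;\preccurlyeq\; \sum_{j=0}^m \delta^{j+1}\cdot \overline{u}^{k+j},
\]
and by re-indexing this last sum is dominated by $\sum_{i=0}^{m+k}\epsilon^{i+1}\cdot\overline{u}^i$ provided $\delta$ is chosen small enough that $\delta^{j+1}\le \epsilon^{k+j+1}$ for all relevant $j$ — e.g. any $\delta$ with $\delta\le \epsilon^{k+1}$ and $\delta\le 1$ works, using $\delta^{j+1} = \delta\cdot\delta^{j}\le \epsilon^{k+1}\cdot\epsilon^{j}=\epsilon^{k+j+1}$ when $\epsilon\le 1$ (and the case $\epsilon>1$ is easier, or one simply shrinks $\epsilon$ first). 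Thus $t_*(\mathcal{N}_\delta)\subseteq\mathcal{N}_\epsilon$, and taking differences, $t_*(\mathcal{N}_\delta-\mathcal{N}_\delta)\subseteq\mathcal{N}_\epsilon-\mathcal{N}_\epsilon$, which gives continuity.

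The main obstacle I anticipate is purely bookkeeping: the sum $\sum_{j}\delta^{j+1}\cdot\overline{u}^{k+j}$ that arises after multiplying by $t$ has its powers of $\overline{u}$ shifted up by $k$ but its coefficients still indexed from $j=0$, so one must carefully insert the missing low-order terms $\epsilon^{i+1}\cdot\overline{u}^i$ for $0\le i<k$ (this only enlarges the element, which is fine since we work "up to $\succcurlyeq$") and then verify the coefficient comparison $\delta^{j+1}\le\epsilon^{k+j+1}$ termwise. None of this is deep, but it is where an off-by-$k$ slip would hide, so I would write the re-indexing out explicitly. Everything else — well-definedness, linearity, the reduction of continuity of a linear map to behavior on a neighborhood basis of $0_V$ — is routine and can be stated without detailed computation.
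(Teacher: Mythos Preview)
Your proposal is correct and follows essentially the same approach as the paper: both use power-universality of $\overline{u}$ to bound $t\preccurlyeq\overline{u}^k$, then push the geometric-type bound $\sum_j\delta^{j+1}\overline{u}^j$ through left multiplication by $\overline{u}^k$, re-index, and choose $\delta$ so that $\delta^{j+1}\le\epsilon^{k+j+1}$. The paper makes the single explicit choice $\delta=(\min\{\epsilon,1\})^{k+1}$, which handles both regimes of $\epsilon$ at once and cleans up exactly the bookkeeping you flagged as the main obstacle, but otherwise the arguments are the same.
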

\begin{proof}
Suppose that $b-d=b'-d'$ in $V$ for some $b,d,b',d'$ in $F$. Then there is $e\in F$ such that $b+d'+e=b'+d+e$ in $F$. Hence
\[
t*b+t*d'+t*e=t*b'+t*d+t*e,
\]
which means $t*b-t*d=t*b'-t*d'$ in $V$. Therefore, the map $t_*$ is well-defined.

The map $t_*$ is clearly linear. To prove the continuity, it is sufficient to show that it is continuous at the point $0_V=\overline{0}_S-\overline{0}_S$: Let $k\in\mathbb{N}$ such that $\overline{u}^k\succcurlyeq t$. For any $\epsilon>0$, we set $\delta=(\min\{\epsilon,1\})^{k+1}$. Then, for any $b\in\mathcal{N}_\delta$, $b\preccurlyeq\sum\limits_{j=0}^m\delta^{j+1}\cdot\overline{u}^j$ for some $m\in\mathbb{N}$. On the other hand,
\[
\left.
\begin{array}{rcl}
t*b&\preccurlyeq&\overline{u}^k*(\sum\limits_{j=0}^m\delta^{j+1}\cdot\overline{u}^j)\\
&=&\sum\limits_{j=0}^m\delta^{j+1}\cdot\overline{u}^{j+k}\\
&\preccurlyeq&\sum\limits_{j=0}^m(\min\{\epsilon,1\})^{k+j+1}\cdot\overline{u}^{j+k}\\
&\preccurlyeq&\sum\limits_{j=0}^{m+k}(\min\{\epsilon,1\})^{j+1}\cdot\overline{u}^j,
\end{array}
\right.
\]
which means $t*b\in\mathcal{N}_{\min\{\epsilon,1\}}\subset\mathcal{N}_\epsilon$. Therefore, for any $b,d\in\mathcal{N}_\delta$, we have $t*b-t*d\in\mathcal{N}_\epsilon-\mathcal{N}_\epsilon$. That is, $t_*$ is continuous at $0_V$.
\end{proof}
\begin{remark}
Theorem \ref{main} is a noncommutative counterpart of Theorem 2.12 in \cite{fritz2021generalization}, although in the noncommutative case, an equivalent characterization for the preorder relation $\preccurlyeq$ via the original relation $\leq$ similar to the one in (\ref{compare}) is presently not available. 
\end{remark}
\section{Conclusion}
We provide a noncommutative version of the Vergleichsstellensatz proved by Fritz in \cite{fritz2021generalization}. This is the first Vergleichsstellensatz for noncommutative semialgebras in the literature. It characterizes the relaxed preorder induced by monotone homomorphisms to $\mathbb{R}_+$ by ``asymptotic" inequalities in the semialgebra of the fractions. Finding out how it can be applied to other areas such as probability theory and quantum information theory or how it is related to classical noncommutative real algebraic geometry would be interesting topics in the future.

As a byproduct, we provided a particular method to define the semialgebra of the fractions of a noncommutative semialgebra which generalizes the corresponding concept in the commutative case. One can also apply this method to noncommutative semirings by  getting rid of those expressions containing (formal or non-formal) scalar multiplications in the definition.
\section*{Acknowledgement}
We thank Tobias Fritz for the useful discussion and suggestions, which have helped the authors improve the paper a lot. 

This research is supported by the National Key Research Project of China 2018YFA0306702 and the National Natural Science Foundation of China 12071467. 
\bibliographystyle{elsarticle-num}
\bibliography{main}
\begin{appendices}
\section{Two lemmas needed in the proof of Proposition \ref{0class}}
The first lemma below characterizes when an expression in $Q$ lies in $\mathcal{O}$.

\begin{lemma}\label{lemmaQ}
If $w$ is an expression in $Q$, then $w\in\mathcal{O}$ if and only if  $w_S=0_S$.
\end{lemma}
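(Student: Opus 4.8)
The plan is to prove both implications simultaneously by induction on the number of formal operations occurring in $w$, matching each recursion rule for the set $\mathcal{O}$ in Definition \ref{def:WO} against the corresponding algebraic property of $S$ supplied by Assumption \ref{assfree}. The only facts about the evaluation map I will use are that, for $w\in Q$, it commutes with the three operations in the evident way, namely $(a\oplus b)_S=a_S+b_S$, $(a\circledast b)_S=a_S*b_S$ and $(r\odot a)_S=r\cdot a_S$, and that $\mathcal{O}\subset W$.

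For the base case, $w$ has no formal operations, so $w\in S=W_0$ and $w_S=w$; since $\mathcal{O}\cap S=\{0_S\}$ by Rule i) of Definition \ref{def:WO}, we get $w\in\mathcal{O}\Leftrightarrow w=0_S\Leftrightarrow w_S=0_S$. For the inductive step, suppose $w\in Q$ has $n\geq 1$ formal operations. Since $w$ contains no formal inversion, it is of exactly one of the forms $a\oplus b$, $a\circledast b$, $r\odot a$; in each case the immediate sub-expressions lie in $W$ by Definition \ref{def:WO}, contain no formal inversion, hence lie in $Q$ with strictly fewer formal operations, so the induction hypothesis applies to them. I would then run through the three cases. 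If $w=a\oplus b$, Rule ii) gives $w\in\mathcal{O}\Leftrightarrow a,b\in\mathcal{O}$, which by the hypothesis is $a_S=b_S=0_S$, and this is equivalent to $a_S+b_S=w_S=0_S$ precisely because $S$ is zero-sum-free. If $w=a\circledast b$, then $a,b\in W$ automatically, so Rule iii) (together with $\mathcal{O}\subset W$) reduces to $w\in\mathcal{O}\Leftrightarrow a\in\mathcal{O}\ \text{or}\ b\in\mathcal{O}$, which by the hypothesis is $a_S=0_S\ \text{or}\ b_S=0_S$, equivalent to $a_S*b_S=w_S=0_S$ because $S$ is zero-divisor-free. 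If $w=r\odot a$, then $a\in W$ automatically, so Rule iv) reduces to $w\in\mathcal{O}\Leftrightarrow a\in\mathcal{O}\ \text{or}\ r=0_{\mathbb{R}}$, which by the hypothesis is $a_S=0_S\ \text{or}\ r=0$, equivalent to $r\cdot a_S=w_S=0_S$ by the consequence recalled in the introduction that in a zero-divisor-free semialgebra $r\cdot a=0_S$ forces $r=0$ or $a=0_S$. This closes the induction.

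I do not expect any real obstacle: the lemma is essentially the statement that the recursion defining $\mathcal{O}$ was set up to reflect the zero-sum-free and zero-divisor-free structure of $S$, so this is the first place where Assumption \ref{assfree} is genuinely used. The two points needing a little care are (i) checking that passing to an immediate sub-expression keeps us inside $Q$, so that the induction hypothesis applies and the elements $a_S,b_S$ are even defined, and (ii) observing that in Rules iii) and iv) the side conditions ``the other is in $W$'' and ``$a\in W$'' are automatic once $w\in W$, which is what lets those rules collapse to the clean disjunctions used above.
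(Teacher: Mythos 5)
Your proof is correct and follows essentially the same strategy as the paper's: a structural induction that matches each closure rule for $\mathcal{O}$ against the zero-sum-free, zero-divisor-free structure of $S$. The only (cosmetic) difference is that you run both implications simultaneously as a chain of equivalences, applying the biconditional rules of Definition~\ref{def:WO} directly, whereas the paper proves ``if'' by induction on the number of operations and ``only if'' by a separate induction on the index $i$ in the $\mathcal{O}_i$-characterization of Proposition~\ref{prop:anotherDEFofWO}.
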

\begin{proof}
``If": We prove inductively on the number $\ell$ of operations that occur in the expression $w$. When $\ell=0$, $w\in W_0=S$. Thus $w=w_S$. But $w_S=0_S$, so $w=0_S\in\mathcal{O}$. Assume that for any $w\in Q$ with the number of operations less than $k$ ($k\geq1$), it holds that $w_S=0_S$ implies $w\in\mathcal{O}$. We then prove that this also holds for those expressions $w\in Q$ containing $k$ operations. Since  $w\in Q$ does not contain any formal inverse, by the definition of $W_{i+1}$ in equation (\ref{W}), $w$ is of the form $w^{(1)}\oplus w^{(2)},w^{(1)}\circledast w^{(2)}$ or $r\odot w^{(1)}$ with $w^{(1)},w^{(2)}\in W$ and $r\in\mathbb{R}_+$. These cases are discussed separately:

If $w=w^{(1)}\oplus w^{(2)}$, then the number of operations contained in $w^{(j)}\,(j=1,2)$ is less than $k$. Moreover, $w^{(1)},w^{(2)}\in Q$ and $w_S=w^{(1)}_S+w^{(2)}_S$. Since $w_S=0_S$, by the assumption that the semialgebra $S$ is zero-sum-free, we see that $w^{(1)}_S=w^{(2)}_S=0_S$. By the inductive assumption, we have $w^{(1)},w^{(2)}\in\mathcal{O}$. Suppose that $w^{(1)}\in\mathcal{O}_{m}$ and $w^{(2)}\in\mathcal{O}_n$, then $w^{(1)},w^{(2)}\in\mathcal{O}_{\max\{m,n\}}$. Thus we have $w=w^{(1)}\oplus w^{(2)}\in\mathcal{O}_{\max\{m,n\}+1}\subset\mathcal{O}$.

The case where $w=w^{(1)}\circledast w^{(2)}$ is similar to the above one. Except that in this case we have $w_S=w^{(1)}_S*w^{(2)}_S$ which implies $w^{(1)}_S=0_S$ or $w^{(2)}_S=0_S$ via the assumption that $S$ is zero-divisor-free. Again, by the inductive assumption, $w^{(1)}\in\mathcal{O}$ or $w^{(2)}\in\mathcal{O}$. Hence $w\in\mathcal{O}_{n+1}$ if $w^{(1)}\in\mathcal{O}_n$ or $w^{(2)}\in\mathcal{O}_n$.

Assume that $w=r\odot w^{(1)}$. If $r=0_\mathbb{R}$, then $w\in\mathcal{O}_{n+1}$ whenever $w^{(1)}\in W_n$ for some $n\in\mathbb{N}$. Suppose that $r$ is a positive real number, then $0_S=w_S=r\cdot w^{(1)}_S$ implies that $w^{(1)}_S=0_S$ by the assumption on $S$. Hence the inductive assumption indicatesl that $w^{(1)}\in\mathcal{O}$. Thus $w\in\mathcal{O}_{m+1}$ whenever $w^{(1)}\in\mathcal{O}_m$ for some  $m\in\mathbb{N}$.

``Only If": If $w\in\mathcal{O}_0$, then $w=0_S$. Clearly, $w_S=0_S$. Assume that for any $0\leq i<k$ ($k\geq1$), $w\in\mathcal{O}_i$ implies $w_S=0_S$. We then prove that $w\in\mathcal{O}_k$ also implies $w_S=0_S$: 

If $w\in\mathcal{O}_{k-1}$, then $w_S=0_S$ follows from the assumption.

If $w\in\mathcal{O}_{k-1}\oplus\mathcal{O}_{k-1}$, then $w=w^{(1)}\oplus w^{(1)}$ with $w^{(1)},w^{(2)}\in\mathcal{O}_{k-1}$. Thus $w_S=w^{(1)}_S+w^{(1)}_S=0_S+0_S=0_S$.

If $w\in\mathcal{O}_{k-1}\circledast W_{k-1}$, then $w=w^{(1)}\circledast a$ with $w^{(1)}\in\mathcal{O}_{k-1}$ and $a\in W_{k-1}$. We have $w_S=w^{(1)}_S*a_S=0_S*a_S=0_S$. When $w\in W_{k-1}\circledast\mathcal{O}_{k-1}$, the proof is similar.

It is straightforward to verify the rest of the cases with $w\in\{0_\mathbb{R}\}\odot W_{k-1}$ and with $w\in\mathbb{R}_+\odot\mathcal{O}_{k-1}$.
\end{proof}

The following lemma indicates that the set $\mathcal{O}$ is $R$-saturated.
\begin{lemma}\label{Osat}
Let the sequence $\{R_i\}$ be as in Proposition \ref{prop:anotherDEFofR}. For any $i\in\mathbb{N}$ and any $x\in W$, if there is an expression $y\in \mathcal{O}$ such that $(x,y)$ or $(y,x)$ is in $R_i$, then $x\in\mathcal{O}$.
\end{lemma}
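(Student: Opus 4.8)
The plan is to prove, by induction on $i$, a slightly stronger and symmetric statement: \emph{for every pair $(p,q)\in R_i$, one has $p\in\mathcal{O}$ if and only if $q\in\mathcal{O}$}. Granting this, the lemma is immediate, since if $(x,y)\in R_i$ or $(y,x)\in R_i$ with $y\in\mathcal{O}$, the equivalence forces $x\in\mathcal{O}$. Passing to this symmetric formulation is precisely what lets the induction absorb the $\text{rev}(R_i)$ term in the recurrence (\ref{Ri}) without extra work.

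For the base case $i=0$, recall that $R_0=\big(\cup_{j=1}^{6}A_j\big)\cup\{(a,a)\,|\,a\in W\}$. Diagonal pairs are trivial. For the pairs in $A_1$, $A_2$, $A_3$ one reads off directly from the rules of Definition \ref{def:WO} (equivalently, from (\ref{W})--(\ref{O})) that the two sides are simultaneously in $\mathcal{O}$ or simultaneously not; e.g. $(a\oplus b)\oplus c\in\mathcal{O}\Leftrightarrow a,b,c\in\mathcal{O}\Leftrightarrow a\oplus(b\oplus c)\in\mathcal{O}$, and analogously for distributivity, commutativity, the unit laws, and the scalar-multiplication laws, using in the scalar cases that $r,t\in\mathbb{R}_+$ so $r+t=0_\mathbb{R}$ iff $r=t=0_\mathbb{R}$. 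For $A_4$ one notes that \emph{no} component of any listed pair lies in $\mathcal{O}$: any expression of the form $z^{-1}$ is excluded from $\mathcal{O}$ by Rule v), $1_S\notin\mathcal{O}$, and by Rule iii) a product $z_1\circledast z_2$ with $z_1,z_2\notin\mathcal{O}$ is not in $\mathcal{O}$; hence the statement holds vacuously on $A_4$. For $A_5$, the pairs $(a,b)\in Q\times Q$ satisfy $a_S=b_S$, and Lemma \ref{lemmaQ} gives $a\in\mathcal{O}\Leftrightarrow a_S=0_S\Leftrightarrow b_S=0_S\Leftrightarrow b\in\mathcal{O}$. Finally $A_6=\mathcal{O}\times\{0_S\}$, both of whose components always lie in $\mathcal{O}$.

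For the inductive step, assume the claim for $R_i$ and take $(p,q)\in R_{i+1}$, using the decomposition in (\ref{Ri}). If $(p,q)\in R_i$, or $(p,q)\in\text{rev}(R_i)$, or $(p,q)$ is obtained by transitivity from $(p,b),(b,q)\in R_i$, the conclusion is immediate from the inductive hypothesis. If $(p,q)\in R_i\oplus R_i$, say $p=a\oplus c$ and $q=b\oplus d$ with $(a,b),(c,d)\in R_i$, then by Rule ii) and the inductive hypothesis $p\in\mathcal{O}\Leftrightarrow(a\in\mathcal{O}\wedge c\in\mathcal{O})\Leftrightarrow(b\in\mathcal{O}\wedge d\in\mathcal{O})\Leftrightarrow q\in\mathcal{O}$. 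If $(p,q)\in R_i\circledast R_i$, all four constituents lie in $W$, and since $\mathcal{O}\subset W$, Rule iii) says a product of two elements of $W$ is in $\mathcal{O}$ exactly when at least one factor is; together with the inductive hypothesis this again gives $p\in\mathcal{O}\Leftrightarrow q\in\mathcal{O}$. The case $(p,q)\in\mathbb{R}_+\odot R_i$ is handled identically, using Rule iv) in the form $r\odot a\in\mathcal{O}\Leftrightarrow a\in\mathcal{O}\vee r=0_\mathbb{R}$ (valid since $a\in W$ automatically). Lastly, if $(p,q)\in\big(R_i\backslash((\mathcal{O}\times W)\cup(W\times\mathcal{O}))\big)^{-1}$, then $p$ and $q$ are both of the form $z^{-1}$, so neither is in $\mathcal{O}$ by Rule v) and the claim is vacuous. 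This exhausts the components of $R_{i+1}$. I do not expect a genuine obstacle; the proof is essentially bookkeeping with Definition \ref{def:WO}. The only two points needing a moment of care are the reformulation into the symmetric ``if and only if'' statement (so the $\text{rev}(R_i)$ summand is handled for free) and the observation, via Rule iii) and the inclusion $\mathcal{O}\subset W$, that a formal product of two legal expressions is null precisely when one of the factors is null.
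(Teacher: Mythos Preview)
Your proof is correct and follows essentially the same approach as the paper's own argument: induction on $i$, a case analysis over the components of $R_0$ (using Lemma \ref{lemmaQ} for $A_5$ and the fact that $1_S\notin\mathcal{O}$ for $A_2$ and $A_4$), and then a routine breakdown of the recurrence (\ref{Ri}) in the inductive step. The one cosmetic difference is that you carry the symmetric ``$p\in\mathcal{O}\Leftrightarrow q\in\mathcal{O}$'' formulation through the entire induction, whereas the paper states it only for the base case and then argues the asymmetric version in the inductive step, disposing of the $(y,x)\in R_k$ direction with ``the proof is similar''; your packaging is slightly tidier but not a different idea.
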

\begin{proof}
We prove inductively on the index $i$. When $i=0$, it is sufficient to show that if $(a,b)\in R_0$, then $a\in\mathcal{O}$ iff $b\in \mathcal{O}$. Since $R_0=(\cup_{j=0}^6A_j)\bigcup\{(a,a)\,|\,a\in W\}$, we need to discuss several cases separately:

We need two obvious observations: i) $0_S\in\mathcal{O}$, ii) $1_S\notin\mathcal{O}$. The latter one is due to the assumption that $1_S\neq0_S$ in the semialgebra $S$ (hence $1_S\notin\{0_S\}=\mathcal{O}_0$) and the fact that $1_S\notin\mathcal{O}_\ell$ for any integer $\ell>0$.

Using the rules for the set $\mathcal{O}$ in Definition \ref{def:WO} several times, one proves ``$a\in\mathcal{O}$ iff $b\in\mathcal{O}$" in each case where the pair $(a,b)$ admits a particular form given by the elements of the set $A_1\cup A_2\cup A_3$. For example, if $(a,b)\in A_1$ and \[(a,b)=((\hat{b}\oplus\hat{c})\circledast \hat{a},(\hat{b}\circledast\hat{a})\oplus(\hat{c}\circledast\hat{a})),\] then $a\in\mathcal{O}$ implies $\hat{a}\in\mathcal{O}$ or $\hat{b}\oplus\hat{c}\in\mathcal{O}$. If $\hat{a}\in\mathcal{O}$, then both $\hat{b}\circledast\hat{a}$ and $\hat{c}\circledast\hat{a}$ are in $\mathcal{O}$. Hence $b=(\hat{b}\circledast\hat{a})\oplus(\hat{c}\circledast\hat{a})\in\mathcal{O}$. If $\hat{b}\oplus\hat{c}\in\mathcal{O}$, then $\hat{b},\hat{c}\in\mathcal{O}$. Thus $\hat{b}\circledast\hat{a},\hat{c}\circledast\hat{a}\in\mathcal{O}$ and also $b\in\mathcal{O}$. Conversely, $b\in\mathcal{O}$ implies $\hat{b}\oplus\hat{a},\hat{c}\oplus\hat{a}\in\mathcal{O}$. So $\hat{a}\in\mathcal{O}$ or $\hat{a}\notin\mathcal{O}$, $\hat{b},\hat{c}\in\mathcal{O}$. If $\hat{a}\in\mathcal{O}$, then $a=(\hat{b}\oplus\hat{c})\circledast\hat{a}\in\mathcal{O}$. If both $\hat{b}$ and $\hat{c}$ are in $\mathcal{O}$, then so are $\hat{b}\circledast\hat{a}$ and $\hat{c}\circledast\hat{a}$. Thus $b=(\hat{b}\circledast\hat{a})\oplus(\hat{c}\circledast\hat{a})\in\mathcal{O}$.

This is another example: set $(a,b)\in A_2$ and $(a,b)=(1_S\circledast \hat{a},\hat{a})$. If $\hat{a}\in\mathcal{O}$, then $1_S\circledast \hat{a}\in\mathcal{O}$. If $1_S\circledast \hat{a}\in\mathcal{O}$, then we have $\hat{a}\in\mathcal{O}$ since $1_S\notin\mathcal{O}$.

The remaining cases may occur when $(a,b)\in A_1\cup A_2\cup A_3$ can be dealt with similarly. Hence we omit the discussion about them.

When $(a,b)\in A_4$, one finds the first rule in v) of Definition \ref{def:WO} useful: For example, if $(a,b)=((\hat{a}\circledast\hat{b})^{-1},\hat{b}^{-1}\circledast\hat{a}^{-1})$ for some $\hat{a},\hat{b}\in W\backslash\mathcal{O}$, then since $a=(\hat{a}\circledast\hat{b})^{-1}\notin\mathcal{O}$, it is sufficient to prove that $\hat{b}^{-1}\circledast\hat{a}^{-1}\notin\mathcal{O}$. Assume on the contrary that $\hat{b}^{-1}\circledast\hat{a}^{-1}\in\mathcal{O}$, then $\hat{b}^{-1}\in\mathcal{O}$ or $\hat{a}^{-1}\in\mathcal{O}$. Both of these contradict that rule. As usual, we omit the discussion about the other cases that may occur when $(a,b)\in A_4$.

If $(a,b)\in A_5$, then $a,b\in Q$ and $a_S=b_S$. By Lemma \ref{lemmaQ}, it is sufficient to show that $a_S=0_S$ if and only if  $b_S=0_S$. But this is clear since $a_S=b_S$.

The cases with $(a,b)\in A_6\cup \{(a,a)\,|\,a\in W\}$ are trivial. Thus the lemma holds for $i=0$.\vspace{1.5mm}

\noindent{}\emph{The inductive step:}

Now assume that it holds for $0\leq i<k$, we show that it also holds for $i=k$: 

For $(x,y)\in R_k$ or $(y,x)\in R_k$, 
we   consider first  the case  $(x,y)\in R_k$. By the definition of $R_k$ (\ref{Ri}), there are several sub-cases: 

If $(x,y)\in R_{k-1}$, then $x\in\mathcal{O}$ by the inductive assumption. 

If $(x,y)\in R_{k-1}\oplus R_{k-1}$, then $x=x_1\oplus x_2$ and $y=y_1\oplus y_2$ with $(x_1,y_1),(x_2,y_2)\in R_{k-1}$. Since $y\in\mathcal{O}$, $y_1,y_2\in\mathcal{O}$. By the inductive assumption, $x_1,x_2\in\mathcal{O}$. Thus $x=x_1\oplus x_2\in\mathcal{O}$. 

If $(x,y)\in R_{k-1}\circledast R_{k-1}$, then $x=x_1\circledast x_2$ and $y=y_1\circledast y_2$ with $(x_1,y_1),(x_2,y_2)\in R_{k-1}$. Hence $y\in\mathcal{O}$ implies $y_1\in\mathcal{O}$ or $y_2\in\mathcal{O}$. By the inductive assumption, $x_1\in\mathcal{O}$ or $x_2\in\mathcal{O}$. Thus $x=x_1\circledast x_2\in\mathcal{O}$. 

If $(x,y)\in\mathbb{R}_+\odot R_{k-1}$, then $x=r\odot x_1$ and $y=r\odot y_1$ with $(x_1,y_1)\in R_{k-1}$. When $r=0_\mathbb{R}$, $x=r\odot x_1\in\mathcal{O}$. When $r$ is a positive real number, $y=r\odot y_1\in\mathcal{O}$ implies $y_1\in\mathcal{O}$, which, by the inductive assumption, indicates that $x_1\in\mathcal{O}$. Again, $x=r\odot x_1\in\mathcal{O}$.

If $(x,y)\in\text{rev}(R_{k-1})$, that is, $(y,x)\in R_{k-1}$, then $x\in\mathcal{O}$ by the inductive assumption.

Finally, if there is $z\in W$ such that both $(x,z)$ and $(z,y)$ are in $R_{k-1}$, then, by applying the inductive assumption to the pair $(z,y)$, we have $z\in\mathcal{O}$. Applying it again to the pair $(x,z)$, we conclude that $x\in\mathcal{O}$.

By now the proof of the ``$(x,y)\in R_k$" case is done. The proof of the ``$(y,x)\in R_k$" case is similar. Hence we omit it.
\end{proof}
\end{appendices}
\begin{appendices}
\section{Comparing definitions of semialgebras of the fractions in noncommutative and commutative cases}\label{coincide}
For a commutative zero-divisor-free semiring $K$, the semiring of its fractions is usually defined to be the quotient
\begin{equation}\label{def:fr}
K^\text{fr}=\big(K\times (K\backslash\{0_K\})\big)/\hspace{-1mm}\stackrel{\,\hspace{-0.2mm}_\centerdot}{\sim}
\end{equation}
with $\stackrel{\,\hspace{-0.2mm}_\centerdot}{\sim}$ an equivalence relation on the set $K\times (K\backslash\{0_K\})$ such that $(a_1,a_2)\stackrel{\,\hspace{-0.2mm}_\centerdot}{\sim}(b_1,b_2)$ if and only if there is $t\in K\backslash\{0_K\}$ such that $a_1*b_2*t=a_2*b_1*t$ in $K$. The elements in the set $K^\text{fr}$, for instance, the equivalent class containing the the pair $(a_1,a_2)\in K\times (K\backslash\{0_K\})$, is denoted by $\frac{a_1}{a_2}$. The addition and multiplication are defined as usual, and an element $\frac{a_1}{a_2}$ is invertible if and only if $a_1\neq0_K$. This definition is equivalent to the one given by \cite{golan2013semirings} Example 11.7. It can be generalized to the semialgebra case: If $K$ is a commutative zero-divisor-free semialgebra, then we can define the semialgebra of its fractions $K^\text{fr}$ as in equation (\ref{def:fr}), with the same equivalence relation $\stackrel{\,\hspace{-0.2mm}_\centerdot}{\sim}$. Using the same notation, we define the (nonnegative) scalar multiplication by $r\cdot\frac{a_1}{a_2}=\frac{r\cdot a_1}{a_2}$. Then one verifies without difficulty that this $K^{\text{fr}}$ is indeed a semialgebra. 

We  prove that when the semialgebra $S$ (satisfying Assumption \ref{assfree}) is commutative, the semialgebra $F$ of its fractions defined in the Subsection \ref{subsec:frac} is isomorphic to the semialgebra $S^\text{fr}=\big(S\times (S\backslash\{0_S\})\big)/\hspace{-1mm}\stackrel{\,\hspace{-0.2mm}_\centerdot}{\sim}$. 

We need the following result in the proof of Proposition  \ref{isom}.  

\begin{proposition}\label{c2c}
If $S$ is commutative, so is the semialgebra $F$ of its fractions.
\end{proposition}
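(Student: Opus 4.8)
The plan is to prove the stronger statement that $(a\circledast b,\,b\circledast a)\in R$ for every pair $a,b\in W$; granting this, $\overline a*\overline b=\overline{a\circledast b}=\overline{b\circledast a}=\overline b*\overline a$ in $F$, so $F$ is commutative. I would establish it by a nested induction: first for $b\in S$ and arbitrary $a\in W$, by induction on the number $|a|$ of formal operations in $a$; and then for general $b\in W$, by induction on $|b|$, using the case $b\in S$ as the base. The only facts used are that $R$ is an equivalence relation closed under the four formal operations (Definition \ref{def:R}) and that it contains the sets $A_1,\dots,A_6$ of (\ref{semialgebraLaws})--(\ref{O=0}); commutativity of $S$ itself enters exactly once, through $A_5$.

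For the base of the first induction, $a,b\in S$: then $a\circledast b$ and $b\circledast a$ lie in $Q$, and $(a\circledast b)_S=a*b=b*a=(b\circledast a)_S$ by commutativity of $S$, so $(a\circledast b,b\circledast a)\in A_5\subset R$. For the inductive step, still with $b=s\in S$, I would split on the form of $a$. If $a=a_1\oplus a_2$ or $a=r\odot a_1$, distributivity and the scalar laws in $A_1$ and $A_3$, together with the inductive hypothesis for $a_1,a_2$ and closedness of $R$ under $\oplus$, $\odot$ and under transitivity, let $s$ pass across the sum or scalar. If $a=a_1\circledast a_2$, one reassociates via $A_1$, commutes $s$ past $a_2$ and then past $a_1$ (inductive hypothesis, since $|a_1|,|a_2|<|a|$), and reassociates back. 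The delicate case is $a=a_1^{-1}$ with $a_1\in W\setminus\mathcal{O}$: here no structurally smaller subexpression of the right shape is available, so instead I would conjugate, inserting the factors $a_1^{-1}\circledast a_1\sim1_S$ and $a_1\circledast a_1^{-1}\sim1_S$ from $A_4$ to rewrite $s\circledast a_1^{-1}$ as $a_1^{-1}\circledast(a_1\circledast s)\circledast a_1^{-1}$, then applying the inductive hypothesis $a_1\circledast s\sim s\circledast a_1$ in the middle, and finally cancelling $a_1\circledast a_1^{-1}$ again to reach $a_1^{-1}\circledast s$; the bookkeeping of the $1_S$-factors uses $A_2$ and the regroupings use $A_1$.

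The second induction, on $|b|$ for arbitrary $a\in W$, runs the same way. Its base $|b|=0$ is the first induction. For $b=b_1\oplus b_2$ or $b=r\odot b_1$ one moves $a$ across $b$ exactly as above, using the inductive hypothesis for $b_1,b_2$; for $b=b_1\circledast b_2$ one reassociates with $A_1$ and applies the hypothesis twice, to $b_1$ and to $b_2$; and for $b=b_1^{-1}$ one uses the identical conjugation identity with $b_1$ in place of $a_1$ and $a$ in place of $s$. Transitivity then gives $(a\circledast b,b\circledast a)\in R$ for all $a,b\in W$, hence commutativity of $F$. (One could alternatively try to deduce the result from an isomorphism $F\cong F^{\mathrm{op}}$, but this needs a universal property of $F$ that is not spelled out above, so I would not pursue it.)

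I expect the main obstacle to be precisely the bookkeeping that forces the \emph{nested} induction rather than a single induction on $|a|+|b|$: in the subcase $a=a_1\circledast a_2$, reassociating replaces $a$ by $a_1$ and $a_2\circledast b$, and $|a_1|+|a_2\circledast b|=|a|+|b|$, so the total operation count does not drop --- only the complexity of the ``first factor'' does. The other point demanding care is the inverse case, where no reduction to a smaller expression is possible and one must instead invoke the inverse laws $A_4$ through the conjugation identity; keeping exact track of parentheses and of which $A_2$-law is used at each step is routine but must be carried out.
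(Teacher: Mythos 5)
Your argument is correct and proves the right intermediate claim, namely $(a\circledast b,\,b\circledast a)\in R$ for all $a,b\in W$, by the same kind of rewriting through $A_1$--$A_5$, with the inverse case handled by conjugation. However, the nested double induction is an unnecessary complication, and the justification you give for it --- that a single induction on $|a|+|b|$ would fail --- does not hold up.

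The paper performs a \emph{single} induction on the operation count $|a\circledast b|$, and it goes through because of exactly the chain you yourself write down in the multiplication case. When $a=a_1\circledast a_2$, one rewrites
\[
(a_1\circledast a_2)\circledast b
\sim a_1\circledast(a_2\circledast b)
\sim a_1\circledast(b\circledast a_2)
\sim (a_1\circledast b)\circledast a_2
\sim (b\circledast a_1)\circledast a_2
\sim b\circledast(a_1\circledast a_2),
\]
invoking the inductive hypothesis only for the pairs $(a_2,b)$ and $(a_1,b)$. Both $|a_2\circledast b|=|a_2|+|b|+1$ and $|a_1\circledast b|=|a_1|+|b|+1$ are strictly less than $|a\circledast b|=|a_1|+|a_2|+|b|+2$, so the measure does drop. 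Your worry --- that reassociation replaces the problem $(a_1\circledast a_2,\,b)$ by the problem $(a_1,\,a_2\circledast b)$, whose total count is unchanged --- describes a strategy that neither you nor the paper actually employs: you never try to commute $a_1$ past the compound $a_2\circledast b$ in one step. Thus the measure $|a\circledast b|$ works as a well-ordering, and the separate ``base'' induction for $b\in S$ followed by a second induction on $|b|$ buys you nothing. The inverse case is the other visible divergence, but it is cosmetic: the paper passes to $F$ and multiplies the equality $\overline z*\overline b=\overline b*\overline z$ on both sides by $\overline z^{-1}$, whereas you insert $A_4$-factors at the level of formal expressions; these are the same conjugation, written at two different levels of abstraction.
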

\begin{proof}
It is sufficient to show that for any $a,b\in W$, $(a\circledast b,b\circledast a)\in R$. We prove that inductively on the number of operations in the expression $a\circledast b$.

If $a\circledast b$ contains only one operation, \emph{i.e.}, $a,b\in W_0=S$, then both $a\circledast b$ and $b\circledast a$ are in $Q$ and $(a\circledast b)_S=a_S*b_S=b_S*a_S=(b\circledast a)_S$. Thus $(a\circledast b,b\circledast a)\in A_5\subset R_0\subset R$.

Now we assume that $(c\circledast d,d\circledast c)\in R$ for any expression $c\circledast d$ which  contains fewer operations than $a\circledast b$ does, and then prove that $(a\circledast b,b\circledast a)\in R$.

We may assume that $a\circledast b$ contains at least two operations. That means $a$ or $b$ contains at least one operation. Without generality loss, we further assume that $a$ contains at least one operation. Hence $a=x\oplus y,x\circledast y,r\odot x$ or $z^{-1}$ for some $x,y\in W,z\in W\backslash\mathcal{O}$ and $r\in\mathbb{R}_+$. 

If $a=x\oplus y$, then $a\circledast b=(x\oplus y)\circledast b\stackrel{\,\hspace{-0.3mm}_R}{\sim} (x\circledast b)\oplus(y\circledast b)\stackrel{\,\hspace{-0.3mm}_R}{\sim}(b\circledast x)\oplus(b\circledast y)\stackrel{\,\hspace{-0.3mm}_R}{\sim} b\circledast(x\oplus y)\stackrel{\,\hspace{-0.3mm}_R}{\sim} b\circledast a$. The second ``$\stackrel{\,\hspace{-0.3mm}_R}{\sim}$" is due to the inductive assumption (note that both $x\circledast b$ and $y\circledast b$ contain fewer operations than $a\circledast b$ does), while the other ones are according to the definition of $R$.

Similarly, if $a=x\circledast y$, then $a\circledast b=(x\circledast y)\circledast b\stackrel{\,\hspace{-0.3mm}_R}{\sim} x\circledast (y\circledast b)\stackrel{\,\hspace{-0.3mm}_R}{\sim} x\circledast (b\circledast y)\stackrel{\,\hspace{-0.3mm}_R}{\sim}(x\circledast b)\circledast y\stackrel{\,\hspace{-0.3mm}_R}{\sim}(b\circledast x)\circledast y\stackrel{\,\hspace{-0.3mm}_R}{\sim} b\circledast(x\circledast y)=b\circledast a$. If $a=r\odot x$, then $a\circledast b=(r\odot x)\circledast b\stackrel{\,\hspace{-0.3mm}_R}{\sim} r\odot(x\circledast b)\stackrel{\,\hspace{-0.3mm}_R}{\sim} r\odot(b\circledast x)\stackrel{\,\hspace{-0.3mm}_R}{\sim} b\circledast(r\odot x)=b\circledast a$. If $a=z^{-1}$ for some $z\in W\backslash\mathcal{O}$, then $z\circledast b$ has less operations than $a\circledast b$ does. Hence $z\circledast b\stackrel{\,\hspace{-0.3mm}_R}{\sim} b\circledast z$, namely $\overline{z}*\overline{b}=\overline{b}*\overline{z}$. Thus $\overline{z}^{-1}*\overline{z}*\overline{b}*\overline{z}^{-1}=\overline{z}^{-1}*\overline{b}*\overline{z}*\overline{z}^{-1}$, which is  $\overline{b}*\overline{z}^{-1}=\overline{z}^{-1}*\overline{b}$. This means $(z^{-1}\circledast b,b\circledast z^{-1})\in R$. And the inductive proof is completed.
\end{proof}

The following lemma defines a map (which is key to the proof of Proposition  \ref{isom}) 
from $W$ to $S^\text{fr}$ when $S$ is commutative:
\begin{lemma}
The recursive assignment below for $\mathscr{G}$ makes it a well-defined map from $W$ to $S^\text{\emph{fr}}$:

1) For any $x\in W_0=S$, set $\mathscr{G}(x)=\frac{x}{1_S}\in S^\text{\emph{fr}}$;

2) Suppose that $x\in W_\ell$ for a minimal $\ell\in\mathbb{N}$ and $\ell>0$, then set
\[
\mathscr{G}(x)=\left\{
\begin{array}{rcl}
\mathscr{G}(a)+\mathscr{G}(b),&if&x=a\oplus b\in W_{\ell-1}\oplus W_{\ell-1},\\\mathscr{G}(a)*\mathscr{G}(b),&if&x=a\circledast b\in W_{\ell-1}\circledast W_{\ell-1},\\
r\cdot \mathscr{G}(a),&if&x=r\odot a\in \mathbb{R}_+\odot W_{\ell-1},\\
(\mathscr{G}(a))^{-1},&if&x=a^{-1}\in (W_{\ell-1}\backslash\mathcal{O}_{\ell-1})^{-1},
\end{array}
\right.
\]
with ``$+$", ``$*$", ``$\cdot$" and ``$\,^{-1}$" the operations in $S^\text{\emph{fr}}$.
\end{lemma}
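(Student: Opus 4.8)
The plan is to mimic the proof of Proposition \ref{fW}, with the commutative semialgebra $S^{\text{fr}}$ now playing the role of the target semialgebra $T$. The structural facts that make this work are already available: by Definition \ref{def:U} every $x\in W\backslash S$ has a unique outermost formal operation and uniquely determined immediate subexpressions, each with strictly fewer formal operations than $x$; and by Definition \ref{def:WO} those subexpressions again lie in $W$, while in the clause $x=a^{-1}$ the subexpression $a$ lies in $W\backslash\mathcal{O}$. Hence the recursion is unambiguous and well-founded, and the substance of the lemma is to check that each clause delivers a legitimate element of $S^{\text{fr}}$. The only clause that is not immediate is the inversion clause: $(\mathscr{G}(a))^{-1}$ makes sense in $S^{\text{fr}}$ only if $\mathscr{G}(a)$ is invertible there, i.e. only if, writing $\mathscr{G}(a)=\frac{a_1}{a_2}$, the numerator $a_1$ is nonzero.

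First I would prove, by induction on the minimal index $i$ with $x\in W_i$ (equivalently on the number of formal operations in $x$), the conjunction of: (i) $\mathscr{G}(x)$ is a well-defined element of $S^{\text{fr}}$; and (ii) if $x\notin\mathcal{O}$ then $\mathscr{G}(x)$ is invertible in $S^{\text{fr}}$, equivalently its numerator is $\neq 0_S$. The base case $x\in W_0=S$ is clear, since $\mathscr{G}(x)=\frac{x}{1_S}$ and $x\notin\mathcal{O}$ means $x\neq 0_S$.

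For the inductive step I would run through the four cases of the recursion, using Definition \ref{def:WO} to place the subexpressions in $W$ (resp. $W\backslash\mathcal{O}$) and the inductive hypothesis for well-definedness and, where needed, invertibility. When $x=a\oplus b$: rule ii) forces one of $a,b$ out of $\mathcal{O}$ whenever $x\notin\mathcal{O}$, so one summand has a nonzero numerator; since the two denominators are nonzero, zero-divisor-freeness of $S$ makes the corresponding cross term nonzero, and then zero-sum-freeness of $S$ makes the numerator of $\mathscr{G}(a)+\mathscr{G}(b)$ nonzero. When $x=a\circledast b$: rule iii) forces both $a,b\notin\mathcal{O}$, so both numerators are nonzero and zero-divisor-freeness gives $\mathscr{G}(a)*\mathscr{G}(b)$ a nonzero numerator. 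When $x=r\odot a$: rule iv) forces $a\notin\mathcal{O}$ and $r\neq 0_{\mathbb{R}}$, and then $r\cdot a_1\neq 0_S$ by the computation recorded in the introduction for zero-divisor-free semialgebras. When $x=a^{-1}$: rule v) gives $a\in W\backslash\mathcal{O}$, so by (ii) the element $\mathscr{G}(a)$ is invertible in $S^{\text{fr}}$ and $(\mathscr{G}(a))^{-1}$ is legitimate; it is itself invertible, consistent with $a^{-1}\notin\mathcal{O}$.

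I expect the only real care to be needed in propagating invariant (ii) through all four clauses — this is where Assumption \ref{assfree} (zero-sum-freeness and zero-divisor-freeness of $S$), together with the fact that a fraction with nonzero numerator is invertible in $S^{\text{fr}}$, are genuinely used. Everything else is the routine well-founded recursion already carried out for $f^W$ in Proposition \ref{fW}, so I would compress those parts and only spell out the $\oplus$ and $^{-1}$ cases.
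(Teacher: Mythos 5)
Your proof is correct and follows essentially the same route as the paper: induction on the minimal index $i$ with $x\in W_i$ (equivalently the well-founded structure of $W$), carrying along the two invariants ``$\mathscr{G}(x)$ is well-defined'' and ``$\mathscr{G}(x)$ is invertible in $S^{\mathrm{fr}}$ whenever $x\notin\mathcal{O}$,'' with the case analysis on the outermost formal operation and the appeal to zero-sum-freeness and zero-divisor-freeness of $S$ in the $\oplus$, $\circledast$, $\odot$ clauses exactly as in the paper's argument. The only small imprecision is the parenthetical ``equivalently on the number of formal operations in $x$'' (the minimal index tracks nesting depth, not the total count of operations), but since you actually induct on the minimal index, this does not affect the proof.
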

\begin{proof}
It is sufficient to prove (inductively on the index $i$) the following statement: for any $i\in\mathbb{N}$ and $x\in W_i$, $\mathscr{G}(x)$ is well-defined, and $\mathscr{G}(x)$ is invertible in $S^\text{fr}$ whenever $x\not\in\mathcal{O}_i$.

The ``$i=0$" case is true. We then assume that this statement holds for $i=k-1$ and prove it for $i=k$. Now that $x\in W_k$, there are several sub-cases: 

If $x\in W_{k-1}$, then $\mathscr{G}(x)$ is well-defined. When $x\not\in\mathcal{O}_k$, we also have $x\not\in\mathcal{O}_{k-1}$. Hence $\mathscr{G}(x)$ is invertible.

If $x\not\in W_{k-1}$ but $x=a\oplus b\in W_{k-1}\oplus W_{k-1}$, then $k$ is the minimal nature number such that $x\in W_k$. Hence $\mathscr{G}(x)=\mathscr{G}(a)+\mathscr{G}(b)$ for some $a,b\in W_{k-1}$ and $\mathscr{G}(x)$ is well-defined. When $x=a\oplus b\notin\mathcal{O}_k$, at least one of $a$ and $b$ is not in $\mathcal{O}_{k-1}$. Hence $\mathscr{G}(a)=\frac{a_1}{a_2}$ and $\mathscr{G}(b)=\frac{b_1}{b_2}$ with $a_2,b_2$ nonzero and one of $a_1$ and $b_1$ nonzero. Thus, $a_1*b_2+a_2*b_1\neq0_S$ follows the assumption that $S$ is zero-divisor and zero-sum-free. Hence $\mathscr{G}(x)=\frac{a_1*b_2+a_2*b_1}{a_2*b_2}$ is invertible.

If $x\not\in W_{k-1}$ but $x=a\circledast b\in W_{k-1}\circledast W_{k-1}$, then $\mathscr{G}(x)=\mathscr{G}(a)*\mathscr{G}(b)$ is well-defined. When $x=a\circledast b\notin\mathcal{O}_k$, neither of $a$ and $b$ is in $\mathcal{O}_{k-1}$. Hence,  $\mathscr{G}(a)=\frac{a_1}{a_2}$ and $\mathscr{G}(b)=\frac{b_1}{b_2}$ with none of $a_1,a_2,b_1$ and $b_2$ equals to $0_S$. Thus $a_1*b_1\neq0_S$ and $\mathscr{G}(x)=\frac{a_1*b_1}{a_2*b_2}$ is invertible.

If $x\notin W_{k-1}$ but $x=r\odot a\in\mathbb{R}_+\odot W_{k-1}$, then $\mathscr{G}(x)=r\cdot\mathscr{G}(a)$ is well-defined. When $x=r\odot a\notin\mathcal{O}_k$, $r\neq0$ and $a\notin\mathcal{O}_{k-1}$. Hence $\mathscr{G}(a)=\frac{a_1}{a_2}$ with $a_1\neq0_S$, and $\mathscr{G}(x)=\frac{r\cdot a_1}{a_2}$ is invertible.

If $x\notin W_{k-1}$ but $x=a^{-1}\in (W_{k-1}\backslash\mathcal{O}_{k-1})^{-1}$, then $\mathscr{G}(a)$ is well-defined and invertible. Thus $\mathscr{G}(x)=(\mathscr{G}(a))^{-1}$ is also well-defined. Clearly $\mathscr{G}(x)$ is invertible.
\end{proof}
The lemma below defines a push-forward of the map $\mathscr{G}$ from $W$ to the semialgebra $F$ of the fractions of $S$:
\begin{lemma}Let the sequence $\{R_i\}$ be as in Proposition \ref{prop:anotherDEFofR}. Set $x,y\in W$. If $(x,y)\in R$, then $\mathscr{G}(x)=\mathscr{G}(y)$. That is, $\overline{x}=\overline{y}$ implies $\mathscr{G}(x)=\mathscr{G}(y)$.
\end{lemma}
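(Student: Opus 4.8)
The plan is to mimic the proof of Proposition \ref{f^F} verbatim in structure. First I would record the elementary fact that, just as the identities (\ref{prehomo}) hold for $f^W$, the recursive clauses defining $\mathscr{G}$ yield
\[
\mathscr{G}(a\oplus b)=\mathscr{G}(a)+\mathscr{G}(b),\quad
\mathscr{G}(a\circledast b)=\mathscr{G}(a)*\mathscr{G}(b),\quad
\mathscr{G}(r\odot a)=r\cdot\mathscr{G}(a),\quad
\mathscr{G}(c^{-1})=(\mathscr{G}(c))^{-1}
\]
for all $a,b\in W$, all $c\in W\backslash\mathcal{O}$ and all $r\in\mathbb{R}_+$, where the operations on the right are those of $S^{\text{fr}}$; this does not depend on the minimal index $\ell$ appearing in the definition of $\mathscr{G}$. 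Recall also from the previous lemma that $\mathscr{G}(a)$ is an \emph{invertible} element of $S^{\text{fr}}$ whenever $a\notin\mathcal{O}$, so that the last identity is meaningful. Since $R=\cup_{i=0}^\infty R_i$ by Proposition \ref{prop:anotherDEFofR}, it suffices to prove, by induction on $i$, that $(x,y)\in R_i$ implies $\mathscr{G}(x)=\mathscr{G}(y)$.

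For the base case $i=0$ we have $R_0=(\cup_{j=1}^6 A_j)\cup\{(a,a)\,|\,a\in W\}$, and we treat each $A_j$ separately. For $(x,y)$ in $A_1$, $A_2$ or $A_3$, the equality $\mathscr{G}(x)=\mathscr{G}(y)$ follows from the identities displayed above together with the fact that $S^{\text{fr}}$ is a (commutative) semialgebra, so that the semialgebra axioms encoded by $A_1,A_2,A_3$ hold after applying $\mathscr{G}$. For $(x,y)\in A_4$ one uses, additionally, the invertibility statement above (all expressions inverted in $A_4$ lie in $W\backslash\mathcal{O}$) and the corresponding identities for inverses in $S^{\text{fr}}$, namely $\mathscr{G}(a)^{-1}*\mathscr{G}(a)=\tfrac{1_S}{1_S}$, $(\mathscr{G}(a)*\mathscr{G}(b))^{-1}=\mathscr{G}(b)^{-1}*\mathscr{G}(a)^{-1}$, $\tfrac{1_S}{1_S}{}^{-1}=\tfrac{1_S}{1_S}$, and $s\cdot\mathscr{G}(a)^{-1}=((1/s)\cdot\mathscr{G}(a))^{-1}$. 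For $(x,y)\in A_5$ I would first prove the sub-claim that $\mathscr{G}(w)=\tfrac{w_S}{1_S}$ for every $w\in Q$, by induction on the number of formal operations in $w$, using the displayed identities and the fact that addition, multiplication and scalar multiplication of fractions of the form $\tfrac{s}{1_S}$ agree with those in $S$; then $\mathscr{G}(x)=\tfrac{x_S}{1_S}=\tfrac{y_S}{1_S}=\mathscr{G}(y)$. For $(x,y)\in A_6$, i.e. $y=0_S$ and $x\in\mathcal{O}$, I would prove by induction on $k$ with $x\in\mathcal{O}_k$ that $\mathscr{G}(x)=\tfrac{0_S}{1_S}=\mathscr{G}(0_S)$: the base $\mathcal{O}_0=\{0_S\}$ is trivial, and the inductive step uses the components of $\mathcal{O}_{k+1}$ in (\ref{O}) together with the facts that $\tfrac{0_S}{1_S}$ is absorbing for $+$ and $*$ in $S^{\text{fr}}$ and that $0_\mathbb{R}\cdot\xi=\tfrac{0_S}{1_S}$ and $r\cdot\tfrac{0_S}{1_S}=\tfrac{0_S}{1_S}$. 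The diagonal pairs are trivial.

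For the inductive step, assume the claim for $R_0,\ldots,R_i$ and take $(x,y)\in R_{i+1}$; we may assume $(x,y)\notin R_i$, so by (\ref{Ri}) one of the following holds: $(x,y)=(a\oplus c,b\oplus d)$, $(a\circledast c,b\circledast d)$ or $(r\odot a,r\odot b)$ with $(a,b),(c,d)\in R_i$; or $(x,y)=(a^{-1},b^{-1})$ with $a,b\in W\backslash\mathcal{O}$ and $(a,b)\in R_i$; or $(x,y)\in\text{rev}(R_i)$; or there is $z\in W$ with $(x,z),(z,y)\in R_i$. In the first three cases the displayed identities and the inductive hypothesis give $\mathscr{G}(x)=\mathscr{G}(y)$ directly; in the inversion case the inductive hypothesis gives $\mathscr{G}(a)=\mathscr{G}(b)$, both invertible, hence $\mathscr{G}(a)^{-1}=\mathscr{G}(b)^{-1}$; the reversal and transitivity cases are immediate. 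This closes the induction and proves the lemma.

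I expect the main obstacle to be purely bookkeeping rather than conceptual: the two nested inductions inside the base case (the sub-claim $\mathscr{G}(w)=\tfrac{w_S}{1_S}$ for $w\in Q$, and $\mathscr{G}|_{\mathcal{O}}\equiv\tfrac{0_S}{1_S}$), and the case analysis for $A_4$, where one must keep track of which sub-expressions are guaranteed to lie in $W\backslash\mathcal{O}$ so that their $\mathscr{G}$-images are invertible and the identities for $\;^{-1}$ in $S^{\text{fr}}$ may be invoked. Commutativity of $S$ (hence of $S^{\text{fr}}$), which is a standing hypothesis in this appendix, is used only through the validity of the semialgebra axioms in $S^{\text{fr}}$ and, via Proposition \ref{c2c}, of the relations needed when reducing the $A_4$ identities.
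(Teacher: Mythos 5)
Your proposal is correct and follows essentially the same route as the paper's proof: establish the homomorphism-type identities for $\mathscr{G}$ (analogous to (\ref{prehomo})), then induct on $i$ over $R_i$, handling $A_1$--$A_4$ via those identities and the semialgebra/inversion axioms in $S^{\mathrm{fr}}$, $A_5$ via the sub-claim $\mathscr{G}(w)=\tfrac{w_S}{1_S}$ for $w\in Q$, $A_6$ via $\mathscr{G}|_{\mathcal{O}}\equiv\tfrac{0_S}{1_S}$, and closing the induction by the usual case analysis on the components of $R_{i+1}$. The only differences are cosmetic: you spell out the $A_4$ inverse identities and the invertibility bookkeeping more explicitly than the paper does.
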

\begin{proof}
For $(x,y)\in R_0$, we prove that $\mathscr{G}(x)=\mathscr{G}(y)$:

From the definition of $\mathscr{G}$ it follows that for any $a,b\in W$ and $c\in W\backslash\mathcal{O}$,
\begin{equation}\label{huaG}
\left\{
\begin{array}{rcl}
     \mathscr{G}(a\oplus b)&=&\mathscr{G}(a)+\mathscr{G}(b), \\
     \mathscr{G}(a\circledast b)&=&\mathscr{G}(a)*\mathscr{G}(b),\\
     \mathscr{G}(r\odot a) &=&r\cdot \mathscr{G}(a),\\
    \mathscr{G}(c^{-1}) &=&(\mathscr{G}(c))^{-1}.
\end{array}
\right.
\end{equation}
If $(x,y)\in\cup^4_{i=1}A_i$, then $\mathscr{G}(x)=\mathscr{G}(y)$ follows from the above identities. If $(x,y)\in A_5$, \emph{i.e.}, $x,y\in Q$ and $x_S=y_S$, then $\mathscr{G}(x)=\frac{x_S}{1_S}=\frac{y_S}{1_S}=\mathscr{G}(y)$ (by using the identities above recursively for any $a\in Q$, we see that $\mathscr{G}(a)=\frac{a_S}{1_S}$). If $(x,y)\in \{(a,a)\,|\,a\in W\}$, then $x=y$ and $\mathscr{G}(x)=\mathscr{G}(y)$. If $(x,y)\in A_6$, then $x\in\mathcal{O}$ and $y=0_S$. It is straightforward to prove that for any $a\in\mathcal{O}$, $\mathscr{G}(a)=\frac{0_S}{1_S}$, hence $\mathscr{G}(x)=\mathscr{G}(y)$. Thus, $\mathscr{G}(x)=\mathscr{G}(y)$ for any $(x,y)\in R_0$.

We then assume that $\mathscr{G}(x)=\mathscr{G}(y)$ for any $(x,y)\in R_k$, and prove this identity for $R_{k+1}$:

If $(x,y)\in R_k\cup\,\text{rev}(R_k)$, then $\mathscr{G}(x)=\mathscr{G}(y)$ follows right from the assumption. If $(x,y)=(x_1\oplus x_2,y_1\oplus y_2)\in R_k\oplus R_k$ with $(x_1,y_1)$ and $(x_2,y_2)$ in $R_k$, then $\mathscr{G}(x_j)=\mathscr{G}(y_j)$ for $j=1,2$. Thus $\mathscr{G}(x)=\mathscr{G}(x_1)+\mathscr{G}(x_2)=\mathscr{G}(y_1)+\mathscr{G}(y_2)=\mathscr{G}(y)$. If $(x,y)$ is in the sets $R_k\circledast R_k$, $\mathbb{R}_+\odot R_k$ or $R_k\backslash((\mathcal{O}\times W)\cup(W\times\mathcal{O}))$, the identity follows similarly. If there is $z\in W$ such that $(x,z),(z,y)\in R_k$, then we have $\mathscr{G}(x)=\mathscr{G}(z)=\mathscr{G}(y)$.
\end{proof}
We can thus define the map $
\mathscr{\overline{G}}: F\rightarrow S^{\text{fr}}, \overline{x}\mapsto\mathscr{G}(x)$.

Let $E$ and $K$ be two semialgebras. A homomorphism $f$ from $E$ to $K$ is a \emph{semialgebra isomorphism} (or, simply \emph{isomorphism}) if there is a homomorphism $g$ from $K$ to $E$ such that $g\circ f=id_E$ and $f\circ g=id_K$. If there is an isomorphism from $E$ to $K$, we say they are \emph{isomorphic} to each other. We will see that $\mathscr{\overline{G}}$ is a semialgebra isomorphism:
\begin{proposition}\label{isom}
Define $\mathscr{H}: S^{\text{\emph{fr}}}\rightarrow F,\frac{a}{b}\mapsto\overline{a}*\overline{b}^{-1}$ and let $\mathscr{\overline{G}}$ be as above, then $\mathscr{H}$ and $\overline{\mathscr{G}}$ are semialgebra isomorphisms which are also compatible with the inversions such that $\mathscr{\overline{G}}\circ\mathscr{H}=id_{S^\text{\emph{fr}}}$ and $\mathscr{H}\circ\mathscr{\overline{G}}=id_{F}$.
\end{proposition}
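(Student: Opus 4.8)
The plan is to verify, in order, that $\mathscr{H}$ is well-defined; that $\mathscr{H}$ is a semialgebra homomorphism compatible with the inversions; that $\overline{\mathscr{G}}$ is a semialgebra homomorphism compatible with the inversions; and finally that $\overline{\mathscr{G}}\circ\mathscr{H}=\mathrm{id}_{S^{\text{fr}}}$ and $\mathscr{H}\circ\overline{\mathscr{G}}=\mathrm{id}_{F}$. Once the composites are known to be the identities and both maps are homomorphisms, the definition of isomorphism given just before the statement yields the conclusion.

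First I would check well-definedness of $\mathscr{H}$. Suppose $\frac{a}{b}=\frac{c}{d}$ in $S^{\text{fr}}$, i.e. there is $t\in S\backslash\{0_S\}$ with $a*d*t=b*c*t$ in $S$. This is an identity of elements of $S$, so the set $A_5\subset R$ forces $\overline{a}*\overline{d}*\overline{t}=\overline{b}*\overline{c}*\overline{t}$ in $F$. Since $b,d,t\in S\backslash\{0_S\}$ and $S\backslash\{0_S\}$ is disjoint from $\mathcal{O}$, the classes $\overline{b},\overline{d},\overline{t}$ are nonzero, hence invertible in $F$. Because $F$ is commutative by Proposition \ref{c2c}, we may multiply both sides on the right by $\overline{b}^{-1}\overline{d}^{-1}\overline{t}^{-1}$ and rearrange freely, obtaining $\overline{a}*\overline{b}^{-1}=\overline{c}*\overline{d}^{-1}$, that is, $\mathscr{H}(\frac{a}{b})=\mathscr{H}(\frac{c}{d})$. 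This is the step I expect to be the main obstacle: converting the existential ``$\exists t$'' clause in the definition of $S^{\text{fr}}$ into a manipulation inside $F$ is exactly where the commutativity of $F$ and the invertibility of the images of nonzero elements of $S$ are both indispensable.

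Next, that $\mathscr{H}$ is a homomorphism compatible with the inversions is a routine check that again leans on commutativity of $F$: using $\frac{a}{b}+\frac{c}{d}=\frac{a*d+b*c}{b*d}$, $\frac{a}{b}*\frac{c}{d}=\frac{a*c}{b*d}$, $r\cdot\frac{a}{b}=\frac{r\cdot a}{b}$ and $(\frac{a}{b})^{-1}=\frac{b}{a}$ (the last for $a\neq0_S$), together with the homomorphism property of the canonical map and the identity $(\overline{b}*\overline{d})^{-1}=\overline{b}^{-1}*\overline{d}^{-1}$, one gets e.g. $\mathscr{H}(\frac{a}{b}+\frac{c}{d})=(\overline{a}*\overline{d}+\overline{b}*\overline{c})*(\overline{b}*\overline{d})^{-1}=\overline{a}*\overline{b}^{-1}+\overline{c}*\overline{d}^{-1}$, and similarly for multiplication, scalar multiplication, and inversion; $\mathscr{H}(0_{S^{\text{fr}}})=\overline{0}_S$ and $\mathscr{H}(1_{S^{\text{fr}}})=\overline{1}_S$ are immediate. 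That $\overline{\mathscr{G}}$ is a homomorphism compatible with the inversions is precisely the content of the identities (\ref{huaG}).

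Finally I would establish the two composite identities. For $\overline{\mathscr{G}}\circ\mathscr{H}$: since $\mathscr{H}(\frac{a}{b})=\overline{a\circledast b^{-1}}$, we get $\overline{\mathscr{G}}(\mathscr{H}(\frac{a}{b}))=\mathscr{G}(a\circledast b^{-1})=\mathscr{G}(a)*(\mathscr{G}(b))^{-1}=\frac{a}{1_S}*\frac{1_S}{b}=\frac{a}{b}$, using $\mathscr{G}(s)=\frac{s}{1_S}$ for $s\in S=W_0$. For $\mathscr{H}\circ\overline{\mathscr{G}}$ it suffices to prove $\mathscr{H}(\mathscr{G}(x))=\overline{x}$ for every $x\in W$, which I would do by induction on the least index $i$ with $x\in W_i$. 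When $i=0$, $\mathscr{H}(\mathscr{G}(x))=\mathscr{H}(\frac{x}{1_S})=\overline{x}*\overline{1}_S^{-1}=\overline{x}$. For the inductive step, $x$ has one of the forms $a\oplus b$, $a\circledast b$, $r\odot a$, $a^{-1}$ with the subexpressions lying in strictly lower $W_j$; applying the recursion defining $\mathscr{G}$, then the homomorphism-and-inversion property of $\mathscr{H}$ just proved, and then the induction hypothesis gives the claim in each case (for instance $\mathscr{H}(\mathscr{G}(a\circledast b))=\mathscr{H}(\mathscr{G}(a))*\mathscr{H}(\mathscr{G}(b))=\overline{a}*\overline{b}=\overline{a\circledast b}$). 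Everything beyond the well-definedness of $\mathscr{H}$ is thus bookkeeping driven by the recursive definitions of $\mathscr{G}$ and of $W=\cup_{i}W_i$.
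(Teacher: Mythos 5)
Your proposal is correct and follows the paper's proof essentially step by step: the same well-definedness argument for $\mathscr{H}$ via $A_5$, invertibility of $\overline{b},\overline{d},\overline{t}$, and commutativity of $F$; the same verification of the homomorphism properties; the same direct computation for $\overline{\mathscr{G}}\circ\mathscr{H}$; and the same induction on the $W_i$-stratification for $\mathscr{H}\circ\overline{\mathscr{G}}$. The only cosmetic difference is in the inductive step of $\mathscr{H}\circ\overline{\mathscr{G}}=\mathrm{id}_F$, where you invoke the already-proved homomorphism property of $\mathscr{H}$ directly instead of unwinding the fraction formulas explicitly as the paper does — a minor streamlining, not a different route.
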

\begin{proof}
The map $\mathscr{H}$ is well-defined: for any $\frac{a}{b}=\frac{c}{d}$ in $S^\text{fr}$, there is nonzero $t\in S$ so that $a*d*t=b*c*t$. Hence $(a\circledast d)\circledast t\stackrel{\,\hspace{-0.3mm}_R}{\sim} (b\circledast c)\circledast t$ and therefore $(\overline{a}*\overline{d})*\overline{t}=(\overline{b}*\overline{c})*\overline{t}$ in $F$. Since none of $b,d,t$ is in $\mathcal{O}$, their corresponding equivalent classes are invertible in $F$. Because $F$ is commutative, we have $\overline{a}*\overline{b}^{-1}=\overline{c}*\overline{d}^{-1}$.

Using, if necessary, the fact that $F$ is commutative, one verifies directly that $\mathscr{H}$ is compatible with the additions, the multiplications, the scalar multiplications, and the inversions in $S^\text{fr}$ and $F$. Moreover, $\mathscr{H}(\frac{0_S}{1_S})=\overline{0}_S$ and $\mathscr{H}(\frac{1_S}{1_S})=\overline{1}_S$. Thus $\mathscr{H}$ is a semialgebra homomorphism compatible with the inversions.

Noting that, for any $a,b\in W$, $\mathscr{\overline{G}}(\overline{a}+\overline{b})=\mathscr{\overline{G}}(\overline{a\oplus b})=\mathscr{G}(a\oplus b)$ and that $\mathscr{G}(a\oplus b)=\mathscr{G}(a)+\mathscr{G}(b)=\mathscr{\overline{G}}(\overline{a})+\mathscr{\overline{G}}(\overline{b})$ by equations (\ref{huaG}), we see that $\mathscr{\overline{G}}$ is compatible with the additions in $F$ and $S^{\text{fr}}$. Similarly, we conclude that $\mathscr{\overline{G}}$ is also compatible with the other three pairs of operations in $F$ and $S^{\text{fr}}$. Moreover, $\mathscr{\overline{G}}(\overline{0}_S)=\mathscr{G}(0_S)=\frac{0_S}{1_S}$ and $\mathscr{\overline{G}}(\overline{1}_S)=\mathscr{G}(1_S)=\frac{1_S}{1_S}$. Hence $\mathscr{\overline{G}}$ is a semialgebra homomorphism compatible with the inversions.

Finally, since $\mathscr{\overline{G}}(\mathscr{H}(\frac{a}{b}))=\mathscr{\overline{G}}(\overline{a}*\overline{b}^{-1})=\mathscr{\overline{G}}(\overline{a\circledast b^{-1}})=\mathscr{G}(a\circledast b^{-1})=\mathscr{G}(a)*(\mathscr{G}(b))^{-1}=\frac{a}{1_S}*\frac{1_S}{b}=\frac{a}{b}$ for any $(a,b)\in S\times(S\backslash\{0_S\})$, we have $\overline{\mathscr{G}}\circ\mathscr{H}=\emph{id}_{S^\text{fr}}$. To prove $\mathscr{H}\circ\overline{\mathscr{G}}=\emph{id}_F$, it is sufficient to show that for any $i\in\mathbb{N}$ and any $w\in W_i$, it holds that $\mathscr{H}(\mathscr{G}(w))=\overline{w}$. This can be done inductively on the index $i$: When $i=0$, $w\in S$. Thus $\mathscr{H}(\mathscr{G}(w))=\mathscr{H}(\frac{w}{1_S})=\overline{w}*\overline{1}_S^{-1}=\overline{w}$. We suppose that this claim holds for $i=k$ and then prove it for $i=k+1$. 

If $w=a\oplus b\in W_k\oplus W_k$, with $a,b\in W_k$, and set $\mathscr{G}(a)=\frac{a_1}{a_2}$ and $\mathscr{G}(b)=\frac{b_1}{b_2}$ for some $(a_1,a_2),(b_1,b_2)\in S\times(S\backslash\{0_S\})$, then we have $\mathscr{H}(\mathscr{G}(w))=\mathscr{H}(\frac{a_1}{a_2}+\frac{b_1}{b_2})=\overline{a_1*b_2+a_2*b_1}*\overline{a_2*b_2}^{-1}=\overline{a_1}*\overline{a_2}^{-1}+\overline{b_1}*\overline{b_2}^{-1}=\mathscr{H}(\frac{a_1}{a_2})+\mathscr{H}(\frac{b_1}{b_2})=\mathscr{H}(\mathscr{G}(a))+\mathscr{H}(\mathscr{G}(b))=\overline{a}+\overline{b}=\overline{w}$, where the second last equality is due to the inductive assumption. The rest cases where $w\in W_k\circledast W_i, \mathbb{R}_+\odot W_k$ or $(W_k\backslash\mathcal{O}_k)^{-1}$ can be dealt with in a similar way. Hence $\mathscr{H}(\mathscr{G}(w))=\overline{w}$ for any $w\in W$ and therefore we have $\mathscr{H}\circ\mathscr{\overline{G}}=id_{F}$. Thus both $\mathscr{H}$ and $\overline{\mathscr{G}}$ are semialgebra isomorphisms.
\end{proof}
\end{appendices}
\begin{appendices}
\section{The derived preorder relations in the commutative and the noncommutative cases coincide}\label{fracorder}
The following proposition characterizes the derived preorder relation in the commutative case (see condition (\ref{compare})) recursively:
\begin{proposition}\label{orderrelation}
Let $K$ be a commutative zero-divisor-free semialgebra with preorder relation ``$\leq$" $($such that $0_K\leq1_K)$ and the derived preorder relation ``$\preccurlyeq$" on the semialgebra $K^\text{\emph{fr}}$ of its fractions as in (\ref{compare}), then for any $g,h\in K^\text{\emph{fr}}$, $g\preccurlyeq h$ iff the pair $(g,h)$ is in the following set $\Gamma=\cup_{i=0}^\infty\Gamma_i$ with
\[
\left\{
\begin{array}{lcl}
\Gamma_0&=&\{(\frac{x}{1_K},\frac{y}{1_K}),(\frac{0_K}{1_K},a),(a,a)\,|\,x\leq y,x,y\in K, a\in K^{\text{\emph{fr}}}\},\\
\Gamma_{i+1}&=&\Gamma_i\cup(\mathbb{R}_+\cdot \Gamma_i)\cup(\Gamma_i+\Gamma_i)\cup( K^{\text{\emph{fr}}}*\Gamma_i)\\
&&\cup\{(a,c)\,|\,\exists b\in K^{\text{\emph{fr}}}\text{ such that }(a,b),(b,c)\in \Gamma_i\}.
\end{array}
\right.
\]
\end{proposition}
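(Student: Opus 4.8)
The plan is to mimic exactly the strategy already used to prove Proposition~\ref{prop:derivedPreorder}, since the set $\Gamma$ here is the commutative analog of the set $L$ there; the only novelty is that in the commutative setting the relation ``$\preccurlyeq$" is specified by the explicit formula~(\ref{compare}) rather than by the minimality clause of Definition~\ref{def:derivedPreorder}, so one extra observation is needed: that~(\ref{compare}) does define the minimal preorder compatible with all semialgebra operations and containing $\mathscr{I}$. I would proceed in two directions, as usual for a ``$\preccurlyeq$ iff in $\Gamma$" statement.

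For the containment ``$\Gamma\subseteq\{(g,h)\,|\,g\preccurlyeq h\}$", I would argue inductively on $i$ that every pair in $\Gamma_i$ satisfies $g\preccurlyeq h$. The base case $i=0$ uses that the canonical images of inequalities in $K$ are $\preccurlyeq$-comparable (this is immediate from~(\ref{compare}) with $t=1_K$), that $\frac{0_K}{1_K}\preccurlyeq a$ for every $a\in K^{\text{fr}}$ (here one writes $a=\frac{y}{b}$ and notes $0_K\cdot b\cdot 1_K\le y\cdot 1_K\cdot 1_K$ fails to be automatic — instead one uses $0_K\preccurlyeq 1_K$ together with compatibility under multiplication by $a$, exactly as in the proof of Proposition~\ref{prop:derivedPreorder}), and reflexivity. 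The induction step just invokes that ``$\preccurlyeq$" is a preorder compatible with addition, multiplication, and scalar multiplication, matching the five pieces in the definition of $\Gamma_{i+1}$; for compatibility of~(\ref{compare}) with these operations one does a short direct check on representatives $\frac{x}{a},\frac{y}{b}$.

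For the reverse containment I would verify that $\Gamma$ is itself a preorder on $K^{\text{fr}}$ compatible with all semialgebra operations and containing $\mathscr{I}$, and then invoke the minimality characterization of ``$\preccurlyeq$" (which I will have recorded as the preliminary observation above): reflexivity is built into $\Gamma_0$, transitivity into the last clause of $\Gamma_{i+1}$, compatibility into the $\mathbb{R}_+\cdot\Gamma_i$, $\Gamma_i+\Gamma_i$, $K^{\text{fr}}*\Gamma_i$ clauses (no separate $\Gamma_i*K^{\text{fr}}$ is needed because $K^{\text{fr}}$ is commutative), and $\mathscr{I}\subseteq\Gamma_0$ is clear. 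Hence $\{(g,h)\,|\,g\preccurlyeq h\}\subseteq\Gamma$, completing the equivalence.

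The main obstacle I anticipate is the preliminary observation, i.e. showing that the concrete rule~(\ref{compare}) really is the minimal compatible preorder containing $\mathscr{I}$. Proving that~(\ref{compare}) is transitive and compatible is routine clearing of denominators (using zero-divisor-freeness so that the chosen auxiliary elements $t$ stay nonzero). The delicate part is minimality: given any compatible preorder $P\supseteq\mathscr{I}$ and a pair $\frac{x}{a}\preccurlyeq\frac{y}{b}$ witnessed by $x b t\le y a t$ in $K$, one must manufacture a derivation of $(\frac{x}{a},\frac{y}{b})\in P$ using only $\mathscr{I}$ and the operations — the trick is to multiply the known relation $\overline{xbt}\mathrel{P}\overline{yat}$ by the fraction $\frac{1}{a b t}\in K^{\text{fr}}$ and simplify, which is legitimate since $a,b,t$ are nonzero and hence invertible in $K^{\text{fr}}$. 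Once this is in hand, everything else is bookkeeping parallel to the already-proved Proposition~\ref{prop:derivedPreorder}, so I would keep the write-up brief and refer back to that proof for the repeated inductive arguments.
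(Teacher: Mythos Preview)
Your proposal is correct, and the ``if'' direction (that $(g,h)\in\Gamma$ implies $g\preccurlyeq h$) matches the paper's inductive argument exactly.

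For the ``only if'' direction, though, the paper is more direct than your plan. You propose to first prove that the relation defined by~(\ref{compare}) is the \emph{minimal} compatible preorder containing $\mathscr{I}$, then verify that $\Gamma$ is itself such a preorder, and conclude $\preccurlyeq\;\subseteq\Gamma$ by minimality. The paper skips the abstraction: given $\frac{x}{a}\preccurlyeq\frac{y}{b}$ witnessed by some nonzero $t$ with $xbt\le yat$ in $K$, it simply observes that $(\frac{xbt}{1_K},\frac{yat}{1_K})\in\Gamma_0$ and that multiplying this pair by $\frac{1_K}{abt}\in K^{\text{fr}}$ lands it in $K^{\text{fr}}*\Gamma_0\subset\Gamma_1$. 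This is precisely your ``trick'' of multiplying by $\frac{1}{abt}$, but applied directly with $P=\Gamma$ rather than with an arbitrary compatible preorder $P$; in particular the paper never needs to check that $\Gamma$ is reflexive, transitive, or closed under the operations. Your route is valid and has the side benefit of establishing that~(\ref{compare}) coincides with the abstract Definition~\ref{def:derivedPreorder} in the commutative case, but the paper's argument is a one-liner by comparison.
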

\begin{proof}
``If": It is clear that very pair $(g,h)$ in $\Gamma_0$ satisfies $g\preccurlyeq h$ by the condition (\ref{compare}). Noting that the preorder relation defined in condition (\ref{compare}) satisfies the implications in (\ref{preoc}) and (\ref{scalar}) (since the preorder relation ``$\leq$" does), that it is transitive and that every pair $(g,h)\in\Gamma_0$ satisfies $g\preccurlyeq h$, we see that every pair $(g,h)\in\Gamma_1$ also satisfies $g\preccurlyeq h$. By induction on the index $i$ of $\Gamma_i$, we see that every pair $(g,h)\in\Gamma$ also satisfies $g\preccurlyeq h$.

``Only if": Suppose that $g=\frac{x}{a}\preccurlyeq h=\frac{y}{b}$ for some $x,y\in K$ and some $a,b\in K\backslash\{0_K\}$, we need to show that $(\frac{x}{a},\frac{y}{b})\in\Gamma$. Since $\frac{x}{a}\preccurlyeq\frac{y}{b}$, by condition (\ref{compare}) there is $t\in K\backslash\{0_K\}$ such that $x*b*t\leq y*a*t$ in $K$. Hence $(\frac{x*b*t}{1_K},\frac{y*a*t}{1_K})\in\Gamma_0$, thus $(\frac{x}{a},\frac{y}{b})=\frac{1_K}{a*b*t}*(\frac{x*b*t}{1_K},\frac{y*a*t}{1_K})\in\Gamma_1\subset\Gamma$.
\end{proof}
Note that the recursive definition of $\Gamma$ above is of the same form as the equations in (\ref{Li}) once we identify the semialgebra $F$ of the fractions of $S$ defined in Subsection \ref{subsec:frac} and the semialgebra $S^\text{fr}$ defined in Appendix \ref{coincide} through the bijections $\overline{\mathscr{G}}$ and $\mathscr{H}$ in Proposition \ref{isom}. In other words, we have $\overline{\mathscr{G}}(L)=\Gamma$ and $\mathscr{H}(\Gamma)=L$, that means
\begin{proposition}\label{preorderisom}
Let $(S,\leq)$ be a commutative preordered semialgebra satisfying Assumption \ref{assfree} such that $1_S\geq0_S$ and $(F,\preccurlyeq)$ be the semialgebra of the fractions of $S$ defined in Subsection \ref{subsec:frac} with $\preccurlyeq$ the derived preorder relation defined in Section \ref{sec:derivedPreorder}. Let $S^\text{\emph{fr}}$ be the semialgebra of the fractions of $S$ defined in Appendix \ref{coincide} with a preorder relation $\preccurlyeq$ given by condition (\ref{compare}). Then the map $\overline{\mathscr{G}}$ and $\mathscr{H}$ are \emph{preordered semialgebra isomorphisms}. That means, we have $g\preccurlyeq h$ in $F$ if and only if $\mathscr{\overline{G}}(g)\preccurlyeq\mathscr{\overline{G}}(h)$ in $S^\text{\emph{fr}}$ for any $g,h\in F$, or equivalently, we have $a\preccurlyeq b$ in $S^\text{\emph{fr}}$ if and only if $\mathscr{H}(a)\preccurlyeq\mathscr{H}(b)$ for any $a,b\in S^\text{\emph{fr}}$.
\end{proposition}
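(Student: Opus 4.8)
The plan is to compare the two recursive descriptions of the derived preorders: the sequence $\{L_i\}$ from Proposition \ref{prop:derivedPreorder} that builds up $\preccurlyeq$ on $F$, and the sequence $\{\Gamma_i\}$ from Proposition \ref{orderrelation} that builds up $\preccurlyeq$ on $S^{\text{fr}}$. Since Proposition \ref{isom} already gives that $\overline{\mathscr{G}}$ and $\mathscr{H}$ are mutually inverse semialgebra isomorphisms, it suffices to prove the single equality of binary relations $(\overline{\mathscr{G}}\times\overline{\mathscr{G}})(L)=\Gamma$; the two stated ``if and only if'' assertions then follow by applying $\overline{\mathscr{G}}$ or its inverse $\mathscr{H}$ to both sides and using bijectivity.

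First I would show, by induction on $i$, the coupled containments $(\overline{\mathscr{G}}\times\overline{\mathscr{G}})(L_i)\subseteq\Gamma_i$ and $(\mathscr{H}\times\mathscr{H})(\Gamma_i)\subseteq L_i$ simultaneously. For the base case $i=0$, one checks directly that $\overline{\mathscr{G}}(\overline{x})=\mathscr{G}(x)=\tfrac{x}{1_S}$ for $x\in S$ and $\overline{\mathscr{G}}(\overline{0}_S)=\tfrac{0_S}{1_S}$, so each generator of $L_0$ maps to a generator of $\Gamma_0$ (with $K=S$) and, conversely, $\mathscr{H}(\tfrac{x}{1_S})=\overline{x}$ carries each generator of $\Gamma_0$ back into $L_0$. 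For the inductive step, because $\overline{\mathscr{G}}$ is a semialgebra homomorphism it commutes with $+$, with $r\cdot(-)$, and with $*$, so the building blocks $\mathbb{R}_+\cdot L_i$, $L_i+L_i$, $F*L_i$ in the definition of $L_{i+1}$ are carried into $\mathbb{R}_+\cdot\Gamma_i$, $\Gamma_i+\Gamma_i$, $K^{\text{fr}}*\Gamma_i$ respectively (surjectivity of $\overline{\mathscr{G}}$ ensures one obtains all scalars and all left factors), and the transitivity-closure clause is preserved because $\overline{\mathscr{G}}$ respects composition of pairs. The same argument with $\mathscr{H}$ in place of $\overline{\mathscr{G}}$ handles the reverse containment.

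The one point that needs care is the apparent mismatch between the two recursions: $L_{i+1}$ contains both $F*L_i$ and $L_i*F$, whereas $\Gamma_{i+1}$ contains only $K^{\text{fr}}*\Gamma_i$. This is resolved by invoking commutativity on both sides — $F$ is commutative by Proposition \ref{c2c} and $S^{\text{fr}}$ is commutative by construction — so that $L_i*F=F*L_i$ and $\Gamma_i*K^{\text{fr}}=K^{\text{fr}}*\Gamma_i$ as relations, and no genuine asymmetry remains. Having established $(\overline{\mathscr{G}}\times\overline{\mathscr{G}})(L)=\Gamma$ by taking unions, I would conclude: for $g,h\in F$ we have $g\preccurlyeq h\iff(g,h)\in L\iff(\overline{\mathscr{G}}(g),\overline{\mathscr{G}}(h))\in\Gamma\iff\overline{\mathscr{G}}(g)\preccurlyeq\overline{\mathscr{G}}(h)$, where the middle equivalence uses the displayed set equality together with injectivity of $\overline{\mathscr{G}}$; the statement for $\mathscr{H}$ is immediate since $\mathscr{H}=\overline{\mathscr{G}}^{-1}$. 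I expect the main (and essentially only) obstacle to be the bookkeeping of this commutativity observation together with keeping the two simultaneous inductive containments correctly coupled; everything else is a routine transport of structure along the isomorphism $\overline{\mathscr{G}}$.
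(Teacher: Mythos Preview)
Your proposal is correct and follows essentially the same approach as the paper: the paper's argument is the brief observation preceding the proposition that the recursive definition of $\Gamma$ has the same form as that of $L$ once $F$ and $S^{\text{fr}}$ are identified via $\overline{\mathscr{G}}$ and $\mathscr{H}$, yielding $\overline{\mathscr{G}}(L)=\Gamma$ and $\mathscr{H}(\Gamma)=L$. Your write-up simply spells out this transport-of-structure induction in detail, including the commutativity remark (via Proposition~\ref{c2c}) needed to reconcile the $L_i*F$ clause with the absence of a $\Gamma_i*K^{\text{fr}}$ clause, which the paper leaves implicit.
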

\end{appendices}
\end{document}